\documentclass{article}
\usepackage[utf8]{inputenc}
\usepackage{array}
\usepackage{authblk,amsthm,amsmath}
\usepackage[natbib,backend=bibtex]{biblatex}
\usepackage{dmbiblatex}
\bibliography{cache_analysis_complexity}

\newcommand{\mytitle}{On the complexity of cache analysis for different replacement policies}
\title{\mytitle}

\author{David Monniaux}
\author{Valentin Touzeau}
\affil{Univ. Grenoble Alpes, CNRS, Grenoble INP\footnote{Institute of Engineering Univ. Grenoble Alpes}, VERIMAG, 38000 Grenoble, France}

\theoremstyle{plain}
\newtheorem{theorem}{Theorem}
\newtheorem{lemma}[theorem]{Lemma}
\newtheorem{corollary}[theorem]{Corollary}
\newtheorem{proposition}[theorem]{Proposition}

\theoremstyle{definition}
\newtheorem{definition}[theorem]{Definition}

\theoremstyle{remark}
\newtheorem{remark}[theorem]{Remark}
\newtheorem{example}[theorem]{Example}

\usepackage{listofitems,ifthen,subfigure}

\usepackage{tikz}
\usetikzlibrary{shapes.geometric}

\newcommand{\nWays}{N}
\newcommand{\nRegs}{r}
\newcommand{\true}{\mathbf{t}}
\newcommand{\false}{\mathbf{f}}

\newcommand{\vneg}[1]{\bar{#1}}

\newcommand{\plrueight}[4] {
	\tikzset{way/.style={draw, rectangle, minimum size=1cm, text height=1.5ex, text depth=.25ex}}
	\tikzset{tree/.style={draw, circle, minimum size=0.8cm, text height=1.5ex, text depth=.25ex}}
	\tikzset{arrow/.style={->, >=stealth}}


	\setsepchar{,}
	\readlist\ways{#3}

	\node[way] (w0) at (#1+0,#2+0) {\ways[1]};
	\node[way] (w1) at (#1+1,#2+0) {\ways[2]};
	\node[way] (w2) at (#1+2,#2+0) {\ways[3]};
	\node[way] (w3) at (#1+3,#2+0) {\ways[4]};
	\node[way] (w4) at (#1+4,#2+0) {\ways[5]};
	\node[way] (w5) at (#1+5,#2+0) {\ways[6]};
	\node[way] (w6) at (#1+6,#2+0) {\ways[7]};
	\node[way] (w7) at (#1+7,#2+0) {\ways[8]};

	\node[tree] (p0) at (#1+3.5,#2+3.2) {\StrChar{#4}{1}};

	\node[tree] (p1) at (#1+1.5,#2+2.2) {\StrChar{#4}{2}};
	\node[tree] (p2) at (#1+5.5,#2+2.2) {\StrChar{#4}{3}};

	\node[tree] (p3) at (#1+0.5,#2+1.2) {\StrChar{#4}{4}};
	\node[tree] (p4) at (#1+2.5,#2+1.2) {\StrChar{#4}{5}};
	\node[tree] (p5) at (#1+4.5,#2+1.2) {\StrChar{#4}{6}};
	\node[tree] (p6) at (#1+6.5,#2+1.2) {\StrChar{#4}{7}};


	\StrChar{#4}{1}[\bita]
	\StrChar{#4}{2}[\bitb]
	\StrChar{#4}{3}[\bitc]
	\StrChar{#4}{4}[\bitd]
	\StrChar{#4}{5}[\bite]
	\StrChar{#4}{6}[\bitf]
	\StrChar{#4}{7}[\bitg]

	\ifthenelse{\bita=0}
	           {\draw[arrow] (p0)--node[above left]{0}(p1);\draw        (p0)--node[above right]{1}(p2);}
	           {\draw        (p0)--node[above left]{1}(p1);\draw[arrow] (p0)--node[above right]{0}(p2);}
	\ifthenelse{\bitb=0}
	           {\draw[arrow] (p1)--node[above left]{0}(p3);\draw        (p1)--node[above right]{1}(p4);}
	           {\draw        (p1)--node[above left]{1}(p3);\draw[arrow] (p1)--node[above right]{0}(p4);}
	\ifthenelse{\bitc=0}
	           {\draw[arrow] (p2)--node[above left]{0}(p5);\draw        (p2)--node[above right]{1}(p6);}
	           {\draw        (p2)--node[above left]{1}(p5);\draw[arrow] (p2)--node[above right]{0}(p6);}
	\ifthenelse{\bitd=0}
	           {\draw[arrow] (p3)--node[left]{0}(w0);\draw        (p3)--node[right]{1}(w1);}
	           {\draw        (p3)--node[left]{1}(w0);\draw[arrow] (p3)--node[right]{0}(w1);}
	\ifthenelse{\bite=0}
	           {\draw[arrow] (p4)--node[left]{0}(w2);\draw        (p4)--node[right]{1}(w3);}
	           {\draw        (p4)--node[left]{1}(w2);\draw[arrow] (p4)--node[right]{0}(w3);}
	\ifthenelse{\bitf=0}
	           {\draw[arrow] (p5)--node[left]{0}(w4);\draw        (p5)--node[right]{1}(w5);}
	           {\draw        (p5)--node[left]{1}(w4);\draw[arrow] (p5)--node[right]{0}(w5);}
	\ifthenelse{\bitg=0}
	           {\draw[arrow] (p6)--node[left]{0}(w6);\draw        (p6)--node[right]{1}(w7);}
	           {\draw        (p6)--node[left]{1}(w6);\draw[arrow] (p6)--node[right]{0}(w7);}
}

\usepackage[pdftitle={\mytitle},pdfauthor={David Monniaux and Valentin Touzeau}]{hyperref}

\sloppy

\begin{document}
\maketitle

\begin{abstract}
  Modern processors use cache memory: a memory access that ``hits'' the cache returns early, while a ``miss'' takes more time.
Given a memory access in a program, cache analysis consists in deciding whether this access is always a hit, always a miss, or is a hit or a miss depending on execution.
Such an analysis is of high importance for bounding the worst-case execution time of safety-critical real-time programs.

There exist multiple possible policies for evicting old data from the cache when new data are brought in, and different policies, though apparently similar in goals and performance, may be very different from the analysis point of view.
In this paper, we explore these differences from a complexity-theoretical point of view. Specifically, we show that, among the common replacement policies, LRU (Least Recently Used) is the only one whose analysis is NP-complete, whereas the analysis problems for the other policies are PSPACE-complete.

\end{abstract}

\section{Introduction}
Most high performance processors implement some form of \emph{caching}: frequently used instructions and data are retained in fast memory close to the processing unit, to avoid costly requests from the main memory.
While the intuition is that a cache retains the most recently accessed memory words, up to its size, reality is far more complex: what happens depends on the number of cache levels, the size of each level, the ``number of ways'' (also known as the \emph{associativity}) of the cache and the \emph{cache replacement policy}, that is, the algorithm used for choosing which memory block to evict from the cache to make room for a new block.

Reading a memory block not in cache can take $10$ times to $100$ times the time needed to access it if it is cache.
Thus, static analysis approaches for bounding the worst-case execution time (WCET) of programs have to take into account whether or not data are cached.
Such analyses are used, for instance, for proving that the execution time of software in a critical control loop (e.g. in avionics) can never exceed the period of the loop.
Not only does caching, or lack of caching, directly influence execution time, it also complicates analysis itself, as different microarchitectural execution paths may be taken inside the processor depending on whether or not data are cached, and analysis has to take all these paths into account.

For these reasons, all static analyses for bounding execution time include a cache analysis, which determines which of the memory accesses made by the program are hits, which are misses, and which cannot be classified.
Analyses depend on the cache replacement policy, and, in the literature, there is a clear preference for the LRU (\emph{Least Recently Used}) policy, from the well-known age-based abstract analysis of Ferdinand~\cite{Ferdinand99} to recent work proved to be optimally precise in a certain sense~\cite{Touzeau_et_al_CAV2017,Touzeau_et_al_POPL19}.

In contrast, other policies such as PLRU (pseudo-LRU), NMRU and FIFO (\emph{First-In, First-Out}) have a reputation for being very hard to analyze~\cite{DBLP:journals/pieee/HeckmannLTW03} and for having poor predictability~\cite{DBLP:journals/rts/ReinekeGBW07}.
A legitimate question is whether these problems are intrinsically difficult, or is it just that research has not so far yielded efficient analyses.

Issues of static analysis of programs under different cache policies are not necessarily correlated with the practical efficiency of cache policies.
Static analysis is concerned with worst-case behavior,
and policies with approximately equal ``average''%
\footnote{By ``average'' we do not imply any probabilistic distribution, but rather an informal meaning over industrially relevant workloads, as opposed to examples concocted for exhibiting very good or very bad behavior.}
practical performance may be very different from the analysis point of view.
Even though PLRU and NMRU were designed as ``cheap'' (easier to implement in hardware) alternatives to LRU and have comparable practical efficiency~\cite{Al-Zoubi:2004:PEC:986537.986601}, they are very different from the  worst-case analysis point of view.
\medskip

In this paper, we explore these questions as decision problems:
\begin{definition}[Exist-Hit]
\label{def:exist-hit}
The \emph{exist-hit} problem is, for a given replacement policy:

\begin{center}
\begin{tabular}{>{\itshape}lp{10cm}}
Inputs & a control flow graph $G=(V, E)$ with edges adorned with memory block names\\
& a starting node $S$ in the graph\\
& a final node $F$ in the graph\\
& the cache \emph{associativity} (\emph{number of ways} $\nWays$), in unary\\
& a memory block name $a$\\
Outputs & a Boolean: is there an execution trace from $S$ to $F$, starting with an empty initial cache and ending with a cache containing $a$?
\end{tabular}
\end{center}
\end{definition}

\begin{definition}[Exist-Miss]
\label{def:exist-miss}
The \emph{exist-miss} problem is defined as above but with an ending state \emph{not} containing~$a$.
\end{definition}

We shall also study the variant of this problem where the initial cache contents are arbitrary:
\begin{definition}
The exist-hit (respectively, exist-miss) problem with arbitrary initial state contents is defined as above, except that the output is ``are there a legal initial cache state $\sigma$ and an execution trace from $S$ to $F$, starting in $\sigma$ and ending with a cache containing (respectively, not containing)~$a$?''.
\end{definition}

We shall here prove that
\begin{itemize}
\item for policies LRU, FIFO, pseudo-RR, PLRU, and NMRU, the exist-hit and exist-miss problems are NP-complete for acyclic control flow graphs;
\item for LRU, these problems are still NP-complete for cyclic control flow graphs;
\item for PLRU, FIFO, pseudo-RR, PLRU, and NMRU, these problems are PSPACE-complete for cyclic control flow graphs
\item for LRU, FIFO, pseudo-RR, and PLRU, the above results extend to exist-miss and exist-hit problems from an arbitrary starting state
\end{itemize}
Under the usual conjecture that PSPACE-complete problems are not in NP, this may justify why analyzing properties of FIFO, PLRU and NMRU caches is harder than for LRU.
\medskip

Real-life CPU cache systems are generally complex (multiple levels of caches) and poorly documented (often, the only information about replacement policies is by reverse engineering).
For our complexity-theoretical analyses we need simple models with clear mathematical definitions;
thus we consider only one level of cache, and only one ``cache set'' per cache.%
\footnote{%
A real cache system is composed of a large number of ``cache sets'': a memory block may fit in only one cache set depending on its address, and the replacement policy applies only within a given cache set.
For all commonly found cache replacement policies except pseudo-round-robin, and disregarding complex CPU pipelines, this means that the cache sets operate completely independently, each seeing only memory blocks that map to it;
each can be analyzed independently.
It is therefore very natural to consider the complexity of analysis over one single cache set, as we do in this paper.}

In this paper, we consider that the control-flow graph carries only identifiers of memory blocks to be accessed, abstracting away the data that are read or written, as well as arithmetic operations and guards.
Therefore, we take into account executions that cannot take place on the real system.
This is the same setting used by many static analyses for cache properties.
Some more precise static analyses attempt to discard some infeasible executions --- e.g. an execution with guards $x < 0$ and $x > 0$ with no intervening write to $x$ is infeasible.
In general, however, this entails deciding the reachability of program locations, a problem that is undecidable if the program operates over unbounded integers, and already PSPACE-complete if the program operates on a finite vector of bits~\cite{DBLP:conf/stacs/FeigenbaumKVV98};.
Clearly we cannot use such a setting to isolate the contribution of the cache analysis itself.


\section{Fixed associativity}
\label{sec:fixed_associativity}
In a given hardware cache, the associativity is fixed, typically $\nWays=2$, $4$, $8$, $12$ or $16$.
It thus makes sense to study cache analysis complexity for fixed associativity.
However, such analysis can always be done by explicit-state model-checking (enumeration of reachable states) in polynomial time:

\begin{theorem}
  Let us assume here that the associativity $\nWays$ is fixed, as well as the replacement policy (among those cited in this article).
  Then exist-hit and exist-miss properties can be checked in polynomial time, more precisely in $O(|G|^{\nWays+1})$ where $|G|$ is the size of the control-flow graph.
\end{theorem}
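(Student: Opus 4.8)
The plan is to reduce the problem to reachability in a product automaton whose states record, for each reachable control-flow node, the current contents of the single cache set. First I would observe that, for each of the replacement policies considered (LRU, FIFO, pseudo-RR, PLRU, NMRU), the state of one cache set with associativity $\nWays$ is completely described by a bounded amount of data: which memory blocks occupy the $\nWays$ ways, together with whatever auxiliary bookkeeping the policy maintains (recency order for LRU, insertion order / a round-robin pointer for FIFO and pseudo-RR, the tree bits for PLRU, the MRU flags for NMRU). In every case the block identities are an ordered (or unordered, depending on the policy) selection of at most $\nWays$ of the block names appearing in $G$, and the auxiliary data is a function of the occupied ways; hence the number of distinct cache-set states that can arise along any trace starting from the empty cache is $O(|G|^{\nWays})$, where the hidden constant depends only on $\nWays$ and the fixed policy.

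Next I would build the product graph $G \times \Sigma$, where $\Sigma$ is the (polynomially bounded) set of reachable cache-set states: its vertices are pairs $(v,\sigma)$, and there is an edge $(v,\sigma) \to (v',\sigma')$ whenever $(v,v') \in E$ carries a block name $b$ and $\sigma' = \mathrm{update}(\sigma, b)$ under the policy's transition function. This product has $O(|G| \cdot |G|^{\nWays}) = O(|G|^{\nWays+1})$ vertices and can be constructed in time of the same order, since each $\mathrm{update}$ step is $O(1)$ for fixed $\nWays$. The exist-hit question then becomes: is some vertex $(F,\sigma)$ with $a \in \sigma$ reachable from $(S,\sigma_\emptyset)$, where $\sigma_\emptyset$ is the empty cache; exist-miss is the same with $a \notin \sigma$. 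This is a plain graph-reachability query, solvable by depth- or breadth-first search in time linear in the size of the product, i.e. $O(|G|^{\nWays+1})$, which gives the claimed bound. (For the arbitrary-initial-state variants, one simply adds edges from a fresh source to every $(S,\sigma)$ with $\sigma$ a legal state, which does not change the asymptotics.)

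The only genuinely non-routine step is the first one: verifying, policy by policy, that the reachable cache-set state is indeed captured by $O(|G|^{\nWays})$ configurations and that its transition function is well defined and cheap to compute. For LRU this is immediate (a state is a word of distinct blocks of length $\le \nWays$). For FIFO and pseudo-RR one must note that the eviction pointer or insertion order adds only a bounded multiplicative factor. For PLRU the tree has $\nWays-1$ bits and $\nWays$ leaves, so a state is a partial assignment of blocks to leaves plus a bit vector — again $O(|G|^{\nWays})$ — and the update rule (install at the leaf the tree bits point to, then flip the bits along the accessed path) is $O(\nWays)$. NMRU is analogous with one MRU bit per way. Once this bookkeeping is pinned down, the rest is the standard explicit-state model-checking argument, and the polynomial degree $\nWays+1$ falls out directly from the product construction.
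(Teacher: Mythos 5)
Your proposal is correct and follows essentially the same route as the paper: bound the number of reachable cache-set states by $O(|G|^{\nWays})$ (blocks occupying the ways times a constant amount of policy-specific bookkeeping per way), form the product with the control-flow graph to get an automaton of size $O(|G|^{\nWays+1})$, and decide exist-hit/exist-miss by explicit-state reachability in time linear in that automaton. The only nitpick is your phrase that the auxiliary data ``is a function of the occupied ways'' --- it is extra state, not determined by the occupied ways --- but you then count it correctly as a bounded multiplicative factor, so the argument is unaffected.
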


\begin{proof}
  Let $(V,E)$ be the control-flow graph; its size is $|G|=|V|+|E|$.
  Let $B$ the set of possible cache blocks.
  Without loss of generality, for all policies discussed in this article, the only blocks that matter in $B$ are those that are initially in the cache (at most $\nWays$) and those that are found on the control edges.
  Let us call the set of those blocks $B'$; $|B'| \leq |E| + \nWays$.

The state of the cache then consists in $\nWays$ blocks chosen among $|B'|$ possible ones, plus possibly some additional information that depends on the replacement policy (e.g. the indication that a line is empty); say $b$ bits per way.
The number of possible cache states is thus $(2^b|B'|)^{\nWays}$.

Let us now consider the finite automaton whose states are pairs $(p,\sigma)$ where $p$ is a node in the control-flow graph and $\sigma$ is the cache state, with the transition relation $(p,\sigma) \rightarrow (p',\sigma')$ meaning that the processor moves in one step from control node $p$ with cache state $\sigma$ to control node $p'$ with cache state $\sigma'$.
The number of states of this automaton is $|V|.(2^b|B'|)^{\nWays}$,
which is bounded by $|G|.(|G|+\nWays)^{\nWays}.2^{b\nWays}$, that is, $O(|G|^{\nWays+1})$.

Exist-miss and exist-hit properties amount to checking that certain states are reachable in this automaton.
This can be achieved by enumerating all reachable states of the automaton, which can be done in linear time in the size of the automaton. 
\end{proof}

It is an open question whether it is possible to find algorithms that are provably substantially better in the worst-case than this brute-force enumeration.
Also, would it be possible to separate replacement policies according to their growth with respect to associativity?
It is however unlikely that strong results of the kind ``PLRU analysis needs at least $K.|G|^{\nWays}$ operations in the worst case'' will appear soon, because they imply $\mathrm{P} \neq \mathrm{NP}$ or $\mathrm{P} \neq \mathrm{PSPACE}$.

\begin{theorem}
  Consider a policy among PLRU, FIFO, pseudo-RR (with known or unknown initial state) or NMRU with known initial state (respectively, LRU),
  and a problem among exist-miss and exist-hit.
  Assume $(H)$: for this policy, for any algorithm $A$ that decides this problem on this policy, and any associativity $\nWays$, there exist $K(\nWays)$ and $e(\nWays)$ such that for all $g_0$ there exists $g(\nWays,g_0) \geq g_0$ such that the worst-case complexity of $A$ on graphs of size $g$ is at least $K(\nWays).g(\nWays,g_0)^{e(\nWays)}$.
  Assume also $e(\nWays) \rightarrow \infty$ as $\nWays \rightarrow \infty$, then P is strictly included in PSPACE (respectively, NP).
\end{theorem}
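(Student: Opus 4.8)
The plan is to argue by contraposition: assuming $\mathrm{P}=\mathrm{PSPACE}$ (respectively $\mathrm{P}=\mathrm{NP}$), I will show that hypothesis $(H)$ and the condition $e(\nWays)\to\infty$ cannot both hold. Since $\mathrm{P}\subseteq\mathrm{PSPACE}$ (resp. $\mathrm{P}\subseteq\mathrm{NP}$) is unconditional, this yields the claimed strict inclusion.

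First I would use only the \emph{membership} half of the completeness results already available in the paper: for the policy and problem under consideration, exist-hit / exist-miss lies in $\mathrm{PSPACE}$ (resp., for LRU, in $\mathrm{NP}$), where the associativity $\nWays$ is part of the input and, per Definition~\ref{def:exist-hit}, is encoded \emph{in unary}. Under the assumption $\mathrm{P}=\mathrm{PSPACE}$ (resp. $\mathrm{P}=\mathrm{NP}$), there is then a deterministic algorithm $A$ deciding this problem whose running time on an input is bounded by a fixed polynomial $p$, of some degree $d$, in the total input size. The unary encoding of $\nWays$ is the crucial point: the input size is at most $|G|+\nWays+c_0$ for an absolute constant $c_0$, so $A$ runs in time at most $p(|G|+\nWays+c_0)$.

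Next I would fix an arbitrary associativity $\nWays$ and bound the worst-case running time of $A$ over inputs of that associativity. For a control-flow graph of size $g\geq\nWays$ we get $p(g+\nWays+c_0)\leq p(3g)\leq C\,g^{d}$ for all large enough $g$, where $C$ and $d$ depend only on $p$ and not on $\nWays$. Hence, for each fixed $\nWays$, the worst-case complexity of $A$ on graphs of size $g$ is $O(g^{d})$ with an exponent $d$ \emph{uniform in} $\nWays$. Now I apply $(H)$ to this single algorithm $A$: for each $\nWays$ it produces constants $K(\nWays)>0$ and $e(\nWays)$ together with, for every threshold $g_0$, a size $g\geq g_0$ on which $A$ costs at least $K(\nWays)\,g^{e(\nWays)}$. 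Taking $g_0$ larger and larger (but at least $\nWays$) gives an unbounded sequence of sizes along which $K(\nWays)\,g^{e(\nWays)}\leq C\,g^{d}$; dividing and letting $g\to\infty$ forces $e(\nWays)\leq d$. Since this holds for every $\nWays$, the function $e$ is bounded, contradicting $e(\nWays)\to\infty$.

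I do not anticipate a real obstacle; the argument is elementary once the pieces are lined up. The only points requiring care are (i) invoking the $\mathrm{PSPACE}$ / $\mathrm{NP}$ membership statements in the form where $\nWays$ is part of the input in unary — this is exactly what makes the polynomial degree $d$ independent of the associativity, and it is what the brute-force bound $O(|G|^{\nWays+1})$ of the previous theorem fails to give — and (ii) the precise reading of the quantifiers in $(H)$: the same algorithm $A$ must serve all $\nWays$, and ``for all $g_0$ there exists $g\geq g_0$'' yields only an infinite sequence of bad input sizes, which is nevertheless enough to compare the exponents $e(\nWays)$ and $d$ in the limit.
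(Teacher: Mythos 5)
Your proof is correct and follows essentially the same route as the paper: both derive from $\mathrm{P}=\mathrm{PSPACE}$ (resp.\ $\mathrm{NP}$) a single algorithm whose running time is bounded by $K'(\nWays+g)^{e'}$ with exponent independent of $\nWays$ (the paper's hypothesis $(H')$), then apply $(H)$ along an unbounded sequence of sizes to force $e(\nWays)\leq e'$ for all $\nWays$, contradicting $e(\nWays)\to\infty$. Your explicit remarks that the unary encoding of $\nWays$ is what makes the degree uniform, and that only the membership half of the completeness results is actually needed, are accurate refinements of what the paper leaves implicit.
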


\begin{proof}
  Suppose $(H')$: there exists a polynomial-time algorithm $A$ solving the analysis problem for arbitrary associativity, meaning that there exist a constant $K'$ and an exponent $e'$ such that $A'$ takes time at most $K'.(\nWays+g)^{e'}$ on a graph of size $g$ for associativity~$\nWays$.

  Let $\nWays$ be an associativity.
  From $(H)$ there is a strictly ascending sequence $g_m$ such that the worst-case complexity of $A$ on graphs of size $g_m$ is at least $K(\nWays).g_m^{e(\nWays)}$.
  From $(H')$, $K(\nWays).g_m^{e(\nWays)} \leq K'.(\nWays+g_m)^{e'}$.
  When $g_m \rightarrow \infty$ this is possible only if $e(\nWays) \leq e'$.

  Since $e(\nWays) \rightarrow \infty$ as $\nWays \rightarrow \infty$, the above is absurd.
  Thus there is no polynomial-time algorithm $A$ for solving the analysis problem for the given policy.
  We prove later in this paper that these analysis problems are PSPACE-complete for PLRU, FIFO, NMRU, pseudo-RR, and NP-complete for LRU; the result follows.
\end{proof}

\section{LRU}
\label{sec:LRU}
The ``Least Recently Used'' (LRU) replacement policy is simple and intuitive: the data block least recently used is evicted when a cache miss occurs.
The cache is thus a queue ordered by age: on a miss, the oldest block is discarded to make room for a new one, which has age 0; the ages of all other blocks are incremented.
If a block is already in the cache, it is ``rejuvenated'': its age is set to zero, and the ages of the blocks before it in the queue are incremented.

In other words,
the state of an LRU cache with associativity $\nWays$ is a word of length at most $\nWays$ over the alphabet of cache blocks, composed of pairwise distinct letters; an empty cache is defined by the empty word.
When an access is made to a block $a$, if it belongs to the word (\emph{hit}),
then this letter is removed from the word and appended to the word.
If it does not belong to the word (\emph{miss}), and the length of the word is less than $\nWays$, then $a$ is appended to the word;
otherwise, the length of the word is exactly $\nWays$ ---
the first letter of the word is discarded and $a$ is appended.

LRU has been used in Intel Pentium I, MIPS 24K/34K~\cite[p.21]{Reineke_PhD}, among others.
Notably, for Kalray processors K1a and K1b, LRU caches are advertised as advancing ``timing predictability''.

In this section, we shall extend our recent NP-hardness results~\cite{Touzeau_et_al_POPL19} to NP-completeness, and also prove NP-hardness for exist-hit on a restricted class of control-flow graphs.

\subsection{Motivation and fundamental properties}
LRU caches are appreciated by designers of static analysis tools that bound the worst-case execution time of the program, since an analysis based on abstract interpretation by~\citet{Ferdinand99} (basically, an interval for the age of each possible block) has long been known.
The analysis classifies each access in the program as ``always hit'' (all execution traces leading to that access produce a hit there), ``always miss'' (all execution traces leading to that access produce a miss there), or ``unknown''.
When it answers ``unknown'', it may be that it is in fact ``always hit'' or ``always miss'', but the analysis is too weak to come to a conclusion about it, or that there is at least one execution leading to a hit there and one leading to a miss there (``definitely unknown'').

In recent work~\cite{Touzeau_et_al_CAV2017,Touzeau_et_al_POPL19}, we closed that loophole and proposed an analysis that completely decides whether a given access is ``always hit'', ``always miss'', or ``definitely unknown'': the classical abstract interpretation is applied, along with another age-based abstract interpretation capable of concluding, in some cases, that an access is ``definitely unknown''; the remaining cases are decided by an exact but expensive (exponential worst-case) analysis.
These analyses solve both the exist-hit and the exist-miss problems;
was such an exponential cost unavoidable?
This motivated the studies in this paper.

Our analyses, as well as all our results on LRU in this paper, are based on the following easy, but fundamental, property of LRU caches:

\begin{proposition}\label{ref:LRU_access_count}
After an execution path starting from an empty cache, a block $a$ is in the cache if and only if there has been at least one access to $a$ along that path and the number of distinct blocks accessed since the last access to $a$ is at most $\nWays-1$.
\end{proposition}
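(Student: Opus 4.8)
The plan is to prove both directions of the biconditional directly from the operational definition of the LRU cache as a word over the alphabet of blocks, reasoning by induction on the length of the execution path. Recall that the cache state after a path is a word of pairwise distinct letters, of length at most $\nWays$, obtained from the empty word by the update rule: on a hit the accessed letter is moved to the end; on a miss the letter is appended, with the first letter dropped if the word already has length $\nWays$.

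First I would establish the easy direction: if $a$ is in the cache after the path, then there was an access to $a$ (since letters enter the word only by being accessed), and I must bound the number of distinct blocks accessed since the last access to $a$. Consider the suffix of the path after the last access to $a$. At the moment of that last access, $a$ is placed at the end of the word, i.e. it is the youngest block. For $a$ to survive until the end, it must never be evicted; eviction removes the \emph{first} (oldest) letter, so $a$ is evicted precisely when there have been $\nWays$ misses (to blocks other than $a$, since $a$ is not accessed again) each of which is to a block not already pushing $a$ forward --- more carefully, $a$'s position from the front of the word increases by one exactly on each miss to a block not in the word and each hit to a block currently behind $a$, and $a$ falls off once that count reaches $\nWays$. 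The clean way to see it: the letters strictly younger than $a$ in the final word are exactly the distinct blocks accessed after the last access to $a$, and since the word has length at most $\nWays$ and contains $a$, there are at most $\nWays-1$ of them. I would phrase this last observation as the core lemma: \emph{at any point, the letters appearing after $a$ in the cache word are exactly the distinct blocks that have been accessed since the last access to $a$}, and prove it by a one-step induction on the update rule.

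For the converse, suppose there has been an access to $a$ and since the last such access at most $\nWays-1$ distinct blocks were accessed. Using the same core lemma in the contrapositive: if $a$ had been evicted at some point after its last access, then at the moment just before eviction $a$ was at the front of a length-$\nWays$ word, so there were $\nWays-1$ distinct blocks accessed after $a$'s last access sitting behind it, and the block triggering the eviction is a new miss, a block distinct from all of those and from $a$ --- giving $\nWays$ distinct blocks accessed since the last access to $a$, contradicting the hypothesis. Hence $a$ is never evicted after its last access and remains in the cache.

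The main obstacle, and the only place requiring care, is the core lemma about which letters sit behind $a$ in the word: one must check all four cases of the update step (hit on $a$; hit on a block behind $a$; hit on a block ahead of $a$; miss) and verify the invariant ``letters after $a$ $=$ distinct blocks accessed since last access to $a$'' is preserved, paying attention to the miss case that overflows and drops the front letter --- that dropped letter is either not $a$ (in which case $a$'s suffix is unaffected or grows by the new block) or is $a$ itself, which is exactly the eviction situation handled above. Once this invariant is in hand, both directions are immediate. I would also note the base case (empty path, empty word) is vacuous since there is no access to $a$.
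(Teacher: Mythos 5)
Your proof is correct: the invariant ``the letters strictly after $a$ in the cache word are exactly the distinct blocks accessed since the last access to $a$'' is the right one, the four-case check of the update rule goes through (in particular, a miss on $b$ cannot be a block already accessed since the last access to $a$, because such a block would sit behind $a$ and could only be evicted after $a$), and both directions of the proposition follow immediately from it. The paper itself gives no proof, presenting this as an ``easy, but fundamental'' property, so your argument simply supplies the omitted details in the natural way.
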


\begin{example}
Assume a $4$-way cache, initially empty. After the sequence of accesses $bcabdcdb$, $a$ is in the cache because $bdcdb$ contains only $3$ distinct blocks $b,c,d$. In contrast, after the sequence $bcabdceb$, $a$ is no longer in the cache because $bdceb$ contains $4$ distinct blocks $b,c,d,e$.
\end{example}

\begin{remark}
  In definitions \ref{def:exist-hit} and \ref{def:exist-miss}, it does not matter if the associativity is specified in unary or binary. An associativity larger than the number of different blocks always produces hits, thus the problems become trivial.
  This also applies to FIFO, PLRU, NMRU caches and, more generally, to any cache analysis problem starting from an empty cache with a replacement policy that never evicts cache blocks as long as there is a free cache line. 
\end{remark} 

\subsection{Exist-Hit}
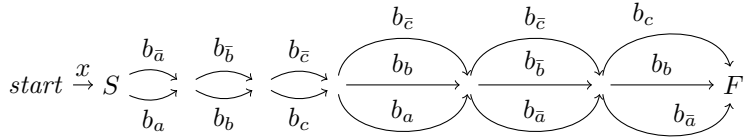
\begin{figure}
  \begin{center}
    \begin{tikzpicture}[->]
      \node (start) { \emph{start} };
      \node (q0) [right of=start] { $S$ };
      \node (q1) [right of=q0] { };
      \node (q2) [right of=q1] { };
      \node (q3) [right of=q2] { };

      \path (start) edge node[above] {$x$} (q0);
      \path (q0) edge[bend right] node[below] {$b_a$} (q1);
      \path (q0) edge[bend left] node[above] {$b_{\vneg{a}}$} (q1);

      \path (q1) edge[bend right] node[below] {$b_b$} (q2);
      \path (q1) edge[bend left] node[above] {$b_{\vneg{b}}$} (q2);

      \path (q2) edge[bend right] node[below] {$b_c$} (q3);
      \path (q2) edge[bend left] node[above] {$b_{\vneg{c}}$} (q3);

      \node (q4) [right of=q3, node distance=5em] { };
      \node (q5) [right of=q4, node distance=5em] { };
      \node (q6) [right of=q5, node distance=5em] { $F$ };

      \path (q3) edge[bend right=80] node[auto] {$b_a$} (q4);
      \path (q3) edge node[auto] {$b_b$} (q4);
      \path (q3) edge[bend left=80] node[auto] {$b_{\vneg{c}}$} (q4);

      \path (q4) edge[bend right=80] node[auto] {$b_{\vneg{a}}$} (q5);
      \path (q4) edge node[auto] {$b_{\vneg{b}}$} (q5);
      \path (q4) edge[bend left=80] node[auto] {$b_{\vneg{c}}$} (q5);

      \path (q5) edge[bend right=80] node[auto] {$b_{\vneg{a}}$} (q6);
      \path (q5) edge node[auto] {$b_b$} (q6);
      \path (q5) edge[bend left=80] node[auto] {$b_c$} (q6);
    \end{tikzpicture}
  \end{center}
  \vspace{-2em}

  \caption{There is a path from $q_0$ to $F$ with at most $3$ different labels if and only if the formula $(\vneg{c} \lor b \lor a) \land (\vneg{c} \lor \vneg{b} \lor \vneg{a}) \land (c \lor b \lor \vneg{a})$ has a model.
  Thus, for a LRU cache with associativity $\nWays=4$, there is an execution from $S$ to $F$ ending in a cache state containing $x$ if and only if this formula has a model.}
  \label{fig:exist-hit-reduction}
\end{figure}

\begin{theorem}\label{th:lru-exist-hit-acyclic-np-complete}
  The exist-hit problem is NP-complete for LRU and acyclic control-flow graphs.
\end{theorem}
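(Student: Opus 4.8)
The plan is to prove membership in NP from acyclicity together with Proposition~\ref{ref:LRU_access_count}, and NP-hardness by a polynomial-time reduction from CNF-SAT following the gadget in Figure~\ref{fig:exist-hit-reduction}.

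For membership, I would use that in an acyclic control-flow graph every $S$--$F$ path visits each node at most once, hence has fewer than $|V|$ edges and is a certificate of polynomial size. Given such a path (an execution trace), one checks in polynomial time that it is a legal $S$--$F$ path and, by Proposition~\ref{ref:LRU_access_count}, that it contains an access to the queried block $a$ after which at most $\nWays-1$ pairwise distinct blocks are accessed; this last condition holds exactly when the run started from the empty cache ends with $a$ cached. So exist-hit lies in NP on acyclic graphs. Acyclicity is essential here, since it is the only thing bounding the length of the guessed path.

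For hardness, from a CNF formula $\varphi$ over variables $v_1,\dots,v_n$ with clauses $C_1,\dots,C_m$ I would build the DAG of Figure~\ref{fig:exist-hit-reduction}: a start node $S$ with a single outgoing edge labelled by a fresh block $x$ (this $x$ is the block the instance queries), then $n$ ``assignment'' gadgets in series, the $i$-th a pair of parallel edges labelled $b_{v_i}$ and $b_{\vneg{v_i}}$, then $m$ ``clause'' gadgets in series, the $j$-th a bundle of parallel edges labelled $b_\ell$ for the literals $\ell$ of $C_j$, ending in $F$; I set $\nWays = n+1$, written in unary. The graph has $O(n+m)$ nodes and edges and is acyclic, and $\nWays$ has unary size $n+1$, so the reduction runs in polynomial time. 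For correctness, Proposition~\ref{ref:LRU_access_count} says that, since $x$ is accessed once (on the first edge) and never again, $x$ is cached at $F$ iff at most $\nWays-1 = n$ distinct blocks occur after that first edge. Any $S$--$F$ path uses exactly $n$ distinct blocks in its assignment part --- one of $b_{v_i}, b_{\vneg{v_i}}$ per $i$ --- which defines an assignment $\alpha$ (with $\alpha(v_i)=\true$ iff the edge $b_{v_i}$ is taken) and a set $B_\alpha$ of those $n$ blocks; the total stays $\le n$ iff every clause gadget is traversed along an edge $b_\ell$ with $b_\ell \in B_\alpha$, i.e. iff $\alpha$ makes at least one literal of each clause true. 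Hence a satisfying trace exists iff $\varphi$ is satisfiable, and together with the membership argument exist-hit for LRU on acyclic graphs is NP-complete.

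The main obstacle, I expect, is nailing the forward direction of correctness: the key observation is that the assignment gadgets already force $n$ distinct blocks into play, so the bound $\nWays-1 = n$ leaves exactly zero slack, and it is precisely this tightness that pins every clause choice to a literal satisfied by one globally consistent assignment $\alpha$. The remaining ingredients --- acyclicity and polynomial size of the construction, and NP membership --- follow immediately from Proposition~\ref{ref:LRU_access_count} and the structure of the graph.
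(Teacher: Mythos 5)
Your proof is correct and follows essentially the same route as the paper: NP membership by guessing and checking a path, and NP-hardness by the same CNF-SAT reduction (fresh block $x$, one two-way switch per variable, one switch per clause over its literals' blocks, associativity $n+1$), with the correctness argument resting on Proposition~\ref{ref:LRU_access_count} and the observation that the assignment gadgets already use up all $n$ allowed distinct blocks. One small aside: your remark that acyclicity is \emph{essential} for NP membership overstates things, since the paper later shows (Theorem~\ref{th:lru-exist-hit-cyclic-np-complete}) that short witnesses exist even for cyclic graphs; acyclicity merely makes the bound immediate here.
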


\begin{proof}
  Obviously, the problem is in NP: a path may be chosen nondeterministically then checked in polynomial time.

  Now consider the following reduction from CNF-SAT (see \autoref{fig:exist-hit-reduction} for an example).
  Let $n_V$ be the number of variables in the SAT problem.
  With each variable $v$ in the SAT problem we associate two cache block labels $b_v$ and $b_{\vneg{v}}$.
  The idea is to represent a variable assignment as a cache state.
  We store $b_v$ in the cache when $v$ is set to true, and $b_{\vneg{v}}$ when $v$ is set to false.
  Let $x$ be a fresh name.
  We first load $x$ into the cache.
  The control-flow graph is then a sequence of switches:
  \begin{itemize}
  \item For each variable $v$ in the SAT problem, a switch between two edges labeled with $b_v$ and $b_{\vneg{v}}$ respectively.
  Executing this sequence of switches then loads into the cache a set of blocks representing a variable assignment.
  In addition, the next block that will be evicted (if any) is $x$.
  \item For each clause in the SAT problem, a switch between edges labeled with the blocks associated to the literals present in the clause.
  If at least one of these blocks has been accessed before, then the corresponding access can be performed without evicting $x$.
  Otherwise, $x$ is guaranteed to be evicted.
  \end{itemize}
  Each path through the sequence of switches with at most $n_V$ different labels corresponds to a SAT valid assignment, and conversely.
  Then there exists an execution such that at $F$ the cache contains $x$ if and only if there exists a SAT valid assignment.
\end{proof}

The objection can be made that the reduction in this proof produces control-flow graphs in which the same label occurs an arbitrary number of times --- the number of times the corresponding literal occurs in the CNF-SAT problem, plus one.
This is realistic for a data cache, since in a given program the same cache block may be accessed an arbitrary number of times.
It is however unrealistic for an instruction cache:%
\footnote{Unless procedure calls are ``inlined'' in the graph, because then the cache blocks corresponding to the inline procedures appear as many times as the number of locations it is called from.}
an instruction cache block has a fixed size, contains a maximum number of instructions, and thus cannot be accessed from an arbitrary number of control edges.%
\footnote{Consider a cache with 64-byte cache lines, as typical in x86 processors. In order for several basic blocks of instructions to overlap with that cache line, each, except perhaps the last one, must end with a branch instruction, which, in the shortest case, takes 2 bytes. No more than 32 basic blocks can overlap this cache line, and this upper bound is achieved by highly unrealistic programs.}
However, we can refine the preceding result to account for this criticism.

\begin{theorem}\label{th:lru-exist-hit-acyclic-np-complete-thrice}
  The exist-hit problem is NP-complete for LRU for acyclic control-flow graphs, even when the same cache block is accessed no more than thrice.
\end{theorem}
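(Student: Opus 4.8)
The plan is to reuse verbatim the reduction from CNF-SAT of Theorem~\ref{th:lru-exist-hit-acyclic-np-complete}, but to feed it a preprocessed SAT instance in which no \emph{literal} occurs in more than two clauses. Membership in NP is unchanged: in an acyclic graph a path has length at most $|V|$, so one guesses the path and simulates the LRU cache in polynomial time.

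The preprocessing is the classical ``equality-cycle'' trick. Given a CNF formula $\varphi$, for each variable $v$ occurring in $m \geq 3$ clauses, introduce fresh variables $v_1,\dots,v_m$, replace the $i$-th occurrence of $v$ by $v_i$ respecting polarity ($v \mapsto v_i$, $\vneg{v} \mapsto \vneg{v_i}$), and add the $2$-clauses $(\vneg{v_1}\lor v_2),(\vneg{v_2}\lor v_3),\dots,(\vneg{v_m}\lor v_1)$, which encode the implication cycle $v_1 \Rightarrow \dots \Rightarrow v_m \Rightarrow v_1$ and hence force $v_1,\dots,v_m$ equal in every model. The resulting formula $\varphi'$ is polynomial-size and equisatisfiable with $\varphi$, and in $\varphi'$ each literal $v_i$ occurs at most twice (at most once in the cycle, at most once in a substituted clause), likewise $\vneg{v_i}$; a variable left untouched already occurred in at most two clauses, so the bound is global over all literals.

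Now run the construction of Theorem~\ref{th:lru-exist-hit-acyclic-np-complete} on $\varphi'$ with associativity $\nWays = n'+1$, where $n'$ is the number of variables of $\varphi'$: a fresh block $x$ accessed once, a two-way switch $b_w/b_{\vneg{w}}$ per variable $w$, and a switch over the literal blocks per clause. Then $x$ is accessed once, and a block $b_\ell$ is accessed exactly once in the variable switches plus once per clause containing the literal $\ell$, hence at most $1+2=3$ times; the graph remains acyclic and of polynomial size. By Theorem~\ref{th:lru-exist-hit-acyclic-np-complete} there is an execution from $S$ to $F$ ending with $x$ in the cache iff $\varphi'$, equivalently $\varphi$, is satisfiable, which gives NP-hardness under the ``at most thrice'' restriction. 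The one point requiring care is that off-the-shelf ``bounded-occurrence SAT'' results bound occurrences per \emph{variable}, which still permits one polarity --- and thus one block --- to appear arbitrarily often; the equality cycle is exactly what rebalances the two polarities so that every block stays within the budget of three. Everything else --- acyclicity, correctness of the simulation, polynomial size --- is inherited from the proof of Theorem~\ref{th:lru-exist-hit-acyclic-np-complete}.
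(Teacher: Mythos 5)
Your proposal is correct and matches the paper's own argument: the paper likewise reuses the reduction of Theorem~\ref{th:lru-exist-hit-acyclic-np-complete} and invokes a lemma stating that CNF-SAT remains NP-hard when each literal occurs at most twice, proved by exactly the renaming-plus-implication-cycle construction you describe. Your occurrence count (one access in the variable switch plus at most two in clause switches, hence at most three per block) is the same bookkeeping the paper relies on.
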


\begin{proof}
  We use the same reduction as in Th.~\ref{th:lru-exist-hit-acyclic-np-complete-thrice}, but from a CNF-SAT problem where each literal occurs at most twice, as per the following lemma.
\end{proof}

\begin{lemma}
  CNF-SAT is NP-hard even when restricted to sets of clauses where the same literal occurs at most twice, the same variable exactly thrice.%
  \footnote{We thank Pálvölgyi Dömötör for pointing out to us that this restriction is still NP-hard.}
\end{lemma}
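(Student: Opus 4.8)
The plan is to reduce from unrestricted CNF-SAT: given a formula $\varphi$, I replace every literal occurrence of every variable by a fresh copy, and tie the copies of each variable together with a \emph{directed cycle} of implication clauses, oriented so that each copy appears exactly once positively and once negatively among those implication clauses. This orientation is the whole point of the construction, and the only delicate part of the argument.

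Concretely, I would first normalize $\varphi$ so that every variable actually occurs (variables with no occurrence are deleted). For each variable $v$, fix an enumeration of its $k_v \ge 1$ literal occurrences in the clauses of $\varphi$, and introduce fresh variables $x^v_1,\dots,x^v_{k_v}$. Build $\psi$ as follows: (i) in the $j$-th occurrence of $v$, substitute $x^v_j$ if that occurrence is the positive literal $v$, and $\vneg{x^v_j}$ if it is the negative literal $\vneg v$; (ii) add, for every $v$ and every $j \in \{1,\dots,k_v\}$, the clause $\vneg{x^v_j} \lor x^v_{j+1}$, indices taken modulo $k_v$. When $k_v=1$ this added clause is the tautology $\vneg{x^v_1} \lor x^v_1$, which is harmless but still contributes one occurrence of each polarity. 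The reduction is plainly computable in linear time.

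Next I would establish equisatisfiability. The added clauses for $v$ encode $x^v_1 \Rightarrow x^v_2 \Rightarrow \cdots \Rightarrow x^v_{k_v} \Rightarrow x^v_1$, so in any model of $\psi$ all of $x^v_1,\dots,x^v_{k_v}$ receive the same truth value; assigning $v$ that common value yields a model of $\varphi$, because each substituted literal has the same polarity as the literal it replaced. Conversely, extending any model of $\varphi$ by setting every $x^v_j$ to the value of $v$ satisfies all rewritten original clauses and all implication clauses. Hence $\psi$ is satisfiable iff $\varphi$ is, and NP-hardness transfers.

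Finally, the counting — the step I expect to be the main obstacle, and the reason for the particular cycle orientation. Each copy $x^v_j$ has exactly one occurrence from the rewritten original clause, exactly one positive occurrence (in $\vneg{x^v_{j-1}} \lor x^v_j$), and exactly one negative occurrence (in $\vneg{x^v_j} \lor x^v_{j+1}$): three occurrences in total, at most two of either polarity regardless of the polarity of the original occurrence. I would check the degenerate case $k_v = 1$ separately: the self-loop clause again supplies one positive and one negative occurrence, for a total of three, at most two per polarity. A naive ``equality'' gadget, or implication clauses of the form $x^v_j \lor x^v_{j+1}$, would instead force some literal to occur three times, which is exactly what the chosen orientation avoids. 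This yields a CNF-SAT instance in which every variable occurs exactly thrice and no literal occurs more than twice, completing the proof.
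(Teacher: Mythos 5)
Your construction is exactly the one the paper uses: rename each occurrence of a variable to a fresh copy and add the cyclic chain of implication clauses $x^v_j \Rightarrow x^v_{j+1}$ (indices mod $k_v$) to force all copies equal, then observe that each copy occurs once in a rewritten original clause, once positively and once negatively in the chain, giving exactly three occurrences per variable and at most two per literal. Your additional care with the orientation of the cycle and the degenerate case $k_v=1$ is sound and merely makes explicit what the paper leaves implicit.
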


\begin{proof}
  In the set of clauses, rename each occurrence of the same variable $v_i$ as a different variable name $v_{i,j}$, then add clauses $v_{i,1} \Rightarrow v_{i,2}$, $v_{i,2} \Rightarrow  v_{i,3}$, \dots, $v_{i,n-1} \Rightarrow v_{i,n}$, $v_{i,n} \Rightarrow v_{i,1}$ to establish logical equivalence between all renamings.
  Each literal now occurs once or twice, each variable thrice.
  Each model of the original formula corresponds to a model of the renamed formula, and conversely.
\end{proof}

\begin{remark}
The exist-hit problem is easy when the same cache block is accessed only once in the graph.
Assume that the aim is to test whether there exists an execution leading to a cache containing $x$ at the final node $F$.
Either there exists one reachable access $R$ to $x$ in the control-flow graph, or there is none (in the latter case, $x$ cannot be in the cache at node $F$)
.
Then there exists an execution leading to a cache state containing $x$ at node $F$ if and only if there exists a path of length at most $\nWays-1$ between $R$ and $F$ (see Proposition~\ref{ref:LRU_access_count}), which may be tested for instance by breadth-first traversal.
The complexity question remains opens when cache blocks are accessed at most twice.
\end{remark}

\begin{theorem}\label{th:lru-exist-hit-cyclic-np-complete}
  The exist-hit problem is still in NP for LRU when the graph may be cyclic.
\end{theorem}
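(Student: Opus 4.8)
The plan is to avoid guessing an execution path directly --- in a cyclic graph the shortest witnessing path can be exponentially long --- and instead to use Proposition~\ref{ref:LRU_access_count} to produce a short certificate. That proposition says that an execution starting from an empty cache ends with $a$ in the cache exactly when the path has at least one edge labelled $a$ and, writing $e$ for the \emph{last} such edge and $P_2$ for the portion of the path strictly after $e$, the set of block names occurring on $P_2$ has cardinality at most $\nWays - 1$. So whether $a$ is in the cache at $F$ depends only on which node the last $a$-access leads to, together with a set of at most $\nWays - 1$ further block names --- both objects of polynomial size (recall $\nWays$ is given in unary, and in any case such a set of labels occurring in $G$ has size at most $|E|$).

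Concretely, the nondeterministic procedure would guess an edge $e = (u,v)$ of $G$ whose label is $a$, and a set $D$ of block names with $|D| \le \nWays - 1$ and $a \notin D$; it then checks in polynomial time that $u$ is reachable from $S$ in $G$ --- a plain directed-graph reachability query, legitimate because from an empty initial cache every walk of $G$ is a valid execution trace --- and that $F$ is reachable from $v$ in the subgraph of $G$ consisting exactly of the edges whose label lies in $D$. If both checks succeed, it accepts.

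For soundness and completeness one respectively concatenates, or reads off, the three pieces. If the guesses pass the checks, a walk $S \leadsto u$ followed by $e$ followed by a $D$-labelled walk $v \leadsto F$ is an execution trace whose last access to $a$ is $e$ and which afterwards accesses only the at most $\nWays - 1$ distinct blocks of $D$, so by Proposition~\ref{ref:LRU_access_count} it ends with $a$ cached. Conversely, from any execution trace from $S$ to $F$ ending with $a$ cached, let $e = (u,v)$ be its last $a$-edge and $D$ the set of labels on the part after $e$: the prefix witnesses reachability of $u$ from $S$, the suffix witnesses reachability of $F$ from $v$ through $D$-edges, and $|D| \le \nWays - 1$ with $a \notin D$ by Proposition~\ref{ref:LRU_access_count}. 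Hence the procedure accepts iff the instance is positive and runs in polynomial time, so the problem is in NP. I expect no genuine obstacle; the only point needing care is precisely the reason a naive ``guess and simulate the whole path'' argument fails, namely that the certificate must be polynomially bounded, and this is exactly what the ``distinct blocks since the last access'' characterisation of Proposition~\ref{ref:LRU_access_count} buys us.
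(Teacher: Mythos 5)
Your proof is correct, but it takes a different route from the paper's. The paper shows that any witness path can be shortened: split it at the last access to $a$ into $\pi^1\pi^2$, remove cycles from each half to get simple paths of length at most $|V|$, and observe (via Proposition~\ref{ref:LRU_access_count}) that shrinking the set of blocks accessed after the last $a$-access cannot evict $a$; hence a witness of length at most $2|V|$ always exists and the usual ``guess a short path'' argument applies. You instead guess a compressed certificate --- the last $a$-labelled edge $e=(u,v)$ together with a label set $D$ with $|D|\le\nWays-1$, $a\notin D$ --- and verify it with two deterministic reachability queries ($S\leadsto u$ in $G$, and $v\leadsto F$ in the $D$-labelled subgraph). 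Both arguments hinge on the same decomposition at the last $a$-access and on Proposition~\ref{ref:LRU_access_count}, and both are sound: your soundness and completeness directions check out, since in this model every walk is a feasible execution and the suffix condition depends only on the \emph{set} of labels after $e$. Your version avoids the cycle-removal lemma and makes the certificate structure explicit, which is arguably cleaner for exist-hit. The paper's path-shortening formulation has the advantage that it sets up the harder exist-miss case (Theorem~\ref{th:lru-exist-miss-cyclic-np-complete}), where one must preserve, not merely not enlarge, the set of blocks accessed after the last $a$-access; your reachability-in-a-subgraph trick does not transfer there, because ``there is a walk from $v$ to $F$ touching at least $\nWays$ distinct labels'' is no longer a plain reachability query.
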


\begin{proof}
  To prove that the problem is still in NP in case of cyclic graph, we show that the non-deterministic search of a exist-hit witness can be restricted to ``short'' paths (i.e.\ path of polynomial size).
  Consider a path $\pi$ from the starting node $S$ to the final node $F$ such that the final cache content contains $x$ (i.e.\ $\pi$ is a witness for the exist-hit problem).
  The idea of the proof is to remove accesses from $\pi$ to build a new witness $\pi'$ which can be found in polynomial time.

  The initial cache state being empty, there must be at least one access to $x$ in $\pi$.
  Let $i$ be the index of the last access to $x$:
  the edge $(\pi_i, \pi_{i+1})$ is labeled with block $x$, and for any $j > i, (\pi_j, \pi_{j+1})$ does not access $x$.
  We now split $\pi$ at index $i$ into two paths $\pi^1$ and $\pi^2$:
  $\pi^1 = \pi_1 \dots \pi_i$ and $\pi^2 = \pi_{i+1} \dots \pi_{|\pi|}$.
  $\pi^{1\prime}$ and $\pi^{2\prime}$ denote the paths obtained from $\pi^1$ and $\pi^2$ by removing cycles (subpaths beginning and ending in the same node).
  By construction, $\pi^{2\prime}$ does not evict $x$: the set of distinct memory blocks accessed along $\pi^{2\prime}$ is included in the set of distinct memory blocks accessed along $\pi^2$, which is insufficient to evict~$x$. 
  Thus, $x$ is guaranteed to be cached at the end of $\pi^{2\prime}$ (due to Property~\ref{ref:LRU_access_count}).
  Because $\pi^{1\prime}$ and $\pi^{2\prime}$ are cycle free, we have $|\pi^{1\prime}| \leq |V|$ and $|\pi^{2\prime}| \leq |V|$.
  The path $\pi' = \pi^{1\prime} \pi^{2\prime}$ obtained as the concatenation of $\pi^{1\prime}$ and $\pi^{2\prime}$ has thus length at most $2|V|$.
  In addition $\pi'$ goes from $S$ to $F$ and leads to a cache state containing $x$.
  Thus, the nondeterministic search for a witness hit path may be restricted to paths of length at most $2|V|$, which ensures membership in NP.
\end{proof}

\subsection{Exist-Miss}

\begin{figure}
\subfigure[Graph with (thick) Hamiltonian cycle]{
\hspace*{4em}
    \begin{tikzpicture}[node distance=3em]
      \node (q0) { $v'_0$ };
      \node (q1) [above right of=q0] { $v'_1$ };
      \node (q3) [below right of=q1] { $v'_3$ };
      \node (q2) [below right of=q0]  { $v'_2$ };
      \path (q0) edge[thick] (q1);
      \path (q0) edge[thick] (q2);
      \path (q1) edge (q2);
      \path (q1) edge[thick] (q3);
      \path (q2) edge[thick] (q3);
    \end{tikzpicture}
\hspace*{4em}
}
\hfill
\subfigure[Acyclic control-flow graph obtained by the reduction.
Edge labels are not shown;
the path corresponding to the Hamiltonian cycle is shown in a thick line.]{
\begin{tikzpicture}[->,node distance=3.5em,auto]
      \node (start) { $S$ };

      \node (q0s) [right of=start] { $v_0^0$ };

      \node (q2_1) [right of=q0s]  { $v_2^1$ };
      \node (q1_1) [above of=q2_1] { $v_1^1$ };
      \node (q3_1) [below of=q2_1] { $v_3^1$ };

      \node (q1_2) [right of=q1_1] { $v_1^2$ };
      \node (q2_2) [below of=q1_2]  { $v_2^2$ };
      \node (q3_2) [below of=q2_2] { $v_3^2$ };

      \node (q1_3) [right of=q1_2] { $v_1^3$ };
      \node (q2_3) [below of=q1_3]  { $v_2^3$ };
      \node (q3_3) [below of=q2_3] { $v_3^3$ };

      \node (q0e) [right of=q2_3] { $v_0^4$ };


      \path (q0s) edge[thick] (q1_1);
      \path (q0s) edge (q2_1);

      \path (q1_1) edge (q2_2);
      \path (q1_1) edge[thick] (q3_2);
      \path (q2_1) edge (q3_2);
      \path (q2_1) edge (q1_2);
      \path (q3_1) edge (q1_2);
      \path (q3_1) edge (q2_2);

      \path (q1_2) edge (q2_3);
      \path (q1_2) edge (q3_3);
      \path (q2_2) edge (q3_3);
      \path (q2_2) edge (q1_3);
      \path (q3_2) edge (q1_3);
      \path (q3_2) edge[thick] (q2_3);

      \path (q1_3) edge (q0e);
      \path (q2_3) edge[thick] (q0e);

      \path (start) edge[thick] node {$x$} (q0s);
\end{tikzpicture}}

\caption{Reduction from \autoref{th:lru-exist-miss-acyclic-np-complete} from the Hamiltonian cycle problem to the exist-miss problem for LRU caches.}
\label{fig:exist-miss-reduction}
\end{figure}
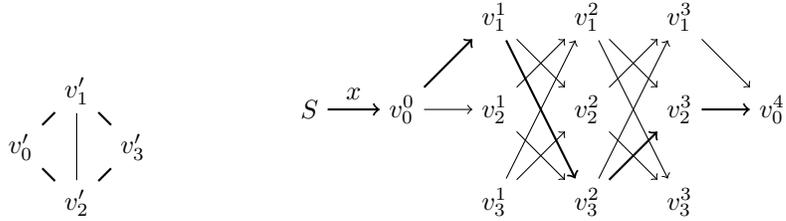

\begin{theorem}\label{th:lru-exist-miss-acyclic-np-complete}
  The exist-miss problem is NP-complete for LRU for acyclic control-flow graphs.
\end{theorem}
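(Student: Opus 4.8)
The plan is to prove membership in NP and NP-hardness separately; hardness is by reduction from the Hamiltonian cycle problem, along the lines of \autoref{fig:exist-miss-reduction}.

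Membership is immediate: for an acyclic graph every path from $S$ to $F$ has at most $|V|$ edges, so one nondeterministically guesses such a path, simulates the LRU cache along it in polynomial time, and then checks whether $a$ has been evicted.

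For hardness, start from a graph $G'=(V',E')$ with $|V'| = n$ and a distinguished vertex $v_0$; the goal is to decide whether $G'$ has a Hamiltonian cycle. Associate a distinct cache block $b_i$ with each vertex $v_i$ and introduce a fresh block $x$. Build the following acyclic control-flow graph: a start node $S$, a node $v_0^0$ (``layer $0$''), nodes $v_i^k$ for $v_i \neq v_0$ and $1 \le k \le n-1$ (``layers $1$ to $n-1$''), and a final node $v_0^n$ (``layer $n$''); the only edge out of $S$ goes to $v_0^0$ and is labeled $x$; for each edge $\{v_i,v_j\}$ of $G'$ (with the appropriate endpoint playing $v_0$'s role at layers $0$ and $n$) put an edge from the copy of $v_i$ in layer $k$ to the copy of $v_j$ in layer $k+1$, labeled with $b_j$. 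Crucially, $v_0$ has a copy only in layers $0$ and $n$, so a path from $S$ to $F$ visits $v_0^0, v_{i_1}^1, \dots, v_{i_{n-1}}^{n-1}, v_0^n$ with all $i_k \neq 0$ and with $v_0 v_{i_1} \cdots v_{i_{n-1}} v_0$ a closed walk of length $n$ in $G'$ touching $v_0$ only at its ends; conversely every such walk yields a path. Set the associativity to $\nWays = n$ (given in unary, hence of polynomial size), take $a = x$ and $F = v_0^n$.

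Along such a path the blocks accessed after the unique access to $x$ are $b_{i_1},\dots,b_{i_{n-1}},b_0$, so the number of \emph{distinct} such blocks is $|\{i_1,\dots,i_{n-1}\}| + 1$, which is at most $n$ and equals $n$ exactly when $i_1,\dots,i_{n-1}$ is a permutation of the $n-1$ vertices other than $v_0$, i.e. exactly when the closed walk is a Hamiltonian cycle. By Proposition~\ref{ref:LRU_access_count}, $x$ is absent from the cache at $F$ iff at least $\nWays = n$ distinct blocks are accessed after $x$; combining the two statements, an exist-miss witness from $S$ to $F$ exists iff $G'$ admits a Hamiltonian cycle. I expect the main care to be needed in this last equivalence: one must argue that the layering genuinely forces length-$n$ closed walks using $v_0$ only at the ends, and that a non-Hamiltonian such walk necessarily repeats some vertex $v_{i_k}$ and therefore accesses strictly fewer than $n$ distinct blocks after $x$, so that the brittle threshold $\nWays = n$ separates the two cases exactly.
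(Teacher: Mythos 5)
Your proposal is correct and follows essentially the same route as the paper: NP membership by guessing and simulating a path, and NP-hardness by the same layered reduction from the Hamiltonian cycle problem with a fresh block $x$ prepended, associativity $\nWays=n$, and Proposition~\ref{ref:LRU_access_count} turning ``at least $n$ distinct labels after $x$'' into eviction of $x$. Your version is if anything slightly more explicit than the paper's, since you state the choice of associativity and carry out the distinct-block count $|\{i_1,\dots,i_{n-1}\}|+1$ in detail.
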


\begin{proof}
  Obviously, the problem is in NP: a path may be chosen nondeterministically, then checked in polynomial time.

  We reduce the Hamiltonian circuit problem to the exist-miss problem (see \autoref{fig:exist-miss-reduction} for an example).
  Let $G'=(V',E')$ be a graph, let $n = |V'|$, $V'=\{v'_0,\dots,v'_{n-1}\}$
  (the ordering is arbitrary).
  We check the existence of Hamiltonian circuit in $G'$.
  Let us construct an acyclic control-flow graph $G$ suitable for cache analysis as follows:
  \begin{itemize}
    \item two copies $v_0^0$ and $v_0^n$ of $v'_0$
    \item for each $v'_i$, $i \geq 1$, $|V'|-1$ copies $v_i^j$, $1 \leq j < n$
      (this arranges these nodes in layers indexed by~$j$)
    \item for each pair $v_i^j$, $v_{i'}^{j+1}$ of nodes in consecutive layers,
      an edge, labeled by the address~$i'$, if and only if there is
      an edge $(i,i')$ in~$E'$.
  \end{itemize}
  There is a Hamiltonian circuit in $G'$ if and only if there is a path in $G$ from $v_0^0$ to $v_0^n$ such that no edge label is repeated,
  thus if and only if there exists a path from $v_0^0$ to $v_0^n$ with at least $n$ distinct edge labels.
  Prepend an edge accessing a fresh block $x$ from the start node to $v_0^0$, then there exists a trace such that $x$ is not in the cache at $v_0^n$ if and only if this Hamiltonian circuit exists.
\end{proof}

The proof of \autoref{th:lru-exist-hit-cyclic-np-complete} (exist-hit problem is still in NP for cyclic CFG) does not carry over to the exist-miss case.
Indeed, this proof shows that if there is a path leading to a hit, then there is a ``short'' path that also lead to hit and can be discovered in polynomial time.
This ``short'' path might contain \emph{fewer} blocks between the last access to $x$ and the final node $F$ than the original witness.
Due to the fundamental property of LRU (see Property~\ref{ref:LRU_access_count}) this is not a problem and the ``short'' path is still guaranteed to lead to a hit.
However, in the case of the exist-miss problem, the short witness must be built carefully, as cutting cycles might remove accesses needed to evict $x$.
To show that the exist-miss problem for LRU is still in NP for cyclic control-flow graphs, one must exhibits short paths with the same blocks accessed (after $x$) than the original witness.
The following lemma ensures that this is always possible.

\begin{lemma}
  Let $B$ the set of memory blocks and $G=(V,E)$ a control-flow graph with edges decorated with elements of $B$ on edges.
  From any node $v_1$ and $v_2$ in $V$, and any path from $v_1 \in V$ to $v_2 \in V$ we can extract a path from $v_1$ to $v_2$ with the same contents (same memory blocks accessed) and length at most $|V|\cdot|B|$ (and thus at most $|V|\cdot|E|$).
\end{lemma}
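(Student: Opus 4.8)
The plan is to keep, for every block that the path touches, just one edge accessing it --- a ``checkpoint'' edge --- and to collapse everything between consecutive checkpoints into a loop-free detour.

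Concretely, I would write the given path as $\pi$, visiting nodes $v_1 = u_0, u_1, \dots, u_m = v_2$ with the $j$-th edge $e_j$ going from $u_{j-1}$ to $u_j$, and let $B' \subseteq B$ be the set of block labels occurring on $e_1,\dots,e_m$. Enumerating $B' = \{b_1,\dots,b_k\}$ in order of first appearance along $\pi$, let $e_{i_1},\dots,e_{i_k}$ be the first edge accessing $b_1,\dots,b_k$ respectively, so $i_1 = 1 < i_2 < \cdots < i_k \le m$. These $k$ edges cut $\pi$ into $k$ intermediate subpaths $\sigma_1,\dots,\sigma_k$, where $\sigma_j$ runs from $u_{i_j}$ to $u_{i_{j+1}-1}$ for $j<k$ and $\sigma_k$ runs from $u_{i_k}$ to $u_m$; crucially there is no subpath before $e_{i_1}$, since $i_1 = 1$.

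Next I would replace each $\sigma_j$ by a simple path $\sigma'_j$ with the same two endpoints, obtained by deleting all cycles from $\sigma_j$. Each $\sigma'_j$ uses only edges --- hence only labels --- of $\sigma_j$, thus of $\pi$, and visits at most $|V|$ vertices, i.e.\ has length at most $|V|-1$. The concatenation $\pi' = e_{i_1}\,\sigma'_1\,e_{i_2}\,\sigma'_2 \cdots e_{i_k}\,\sigma'_k$ is a path from $v_1$ to $v_2$ of length at most $k + k(|V|-1) = k\,|V|$. Since $\pi'$ still contains all of $e_{i_1},\dots,e_{i_k}$ it accesses every block of $B'$, and since every edge of $\pi'$ comes from $\pi$ it accesses nothing outside $B'$; hence $\pi'$ has exactly the same contents as $\pi$. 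Finally $k = |B'| \le |B|$, and also $k \le |E|$ because the $k$ distinct labels lie on $k$ distinct edges of $G$, which yields the two stated bounds $|\pi'| \le |V|\cdot|B|$ and $|\pi'| \le |V|\cdot|E|$.

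I do not expect a real obstacle. The only point requiring a moment's attention is making the segment before the first checkpoint empty --- which is exactly why the checkpoints are taken in order of first occurrence; with an arbitrary choice the bound would come out as $(k+1)|V|$ rather than $k\,|V|$. Everything else is the routine ``delete cycles from a walk'' construction, and preservation of contents is immediate because no checkpoint edge is ever removed.
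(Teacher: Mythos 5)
Your proposal is correct and follows essentially the same route as the paper: segment the path at the first occurrence of each new block label, reduce each intermediate segment to a simple path of length at most $|V|-1$, and concatenate, giving the bound $|B'|\cdot|V| \le |V|\cdot|B|$. Your version is slightly more explicit about the indexing and about why the set of accessed blocks is preserved, but there is no substantive difference.
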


\begin{proof}
  Consider a path $\pi$ from $v_1$ to $v_2$. $\pi$ can be segmented into sub-paths $\pi_1,\dots,\pi_m$, each beginning with the first occurrence of a new label not present in previous sub-paths.

  Each sub-path $\pi_i$ consists of an initial edge $e_i$ followed by $\pi'_i$. From $\pi'_i$ one can extract a simple path $\pi''_i$ --- that is, $\pi''_i$ has no repeated node --- of length at most $|V|-1$.
  By definition, there are no new edge label between $e_i$ and $e_{i+1}$.
  Thus removing cycles from $\pi'_i$ does not change the set of blocks accessed.
  Then, the concatenated path $e_1 \pi''_1 \cdots e_m \pi''_m$ has the same contents as $\pi$, starts and ends with the same nodes, and has at most $|V|\cdot|B|$ edges.
\end{proof}

\begin{theorem}\label{th:lru-exist-miss-cyclic-np-complete}
  The exist-miss problem is still in NP for cyclic control-flow graphs.
\end{theorem}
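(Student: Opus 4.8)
The plan is to follow the same pattern as the proof of Theorem~\ref{th:lru-exist-hit-cyclic-np-complete}: show that any witness path can be replaced by one of polynomially bounded length, so that a nondeterministic machine may guess such a short path and then verify it in polynomial time by simulating the LRU cache along it. The verification step is routine; the entire work is in the length reduction, and the crucial new ingredient is the content-preserving shortening lemma proved just above.

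Concretely, let $\pi$ be a witness, that is, a path from $S$ to $F$ that, executed from the empty cache, leaves $x$ out of the cache at $F$. First I would handle the degenerate case where $\pi$ never accesses $x$: then $x$ is never cached, and deleting cycles from $\pi$ gives a simple path of length at most $|V|$ that still avoids $x$ and is therefore still a witness. Otherwise, let $e = (u,w)$ be the last edge of $\pi$ that accesses $x$, and write $\pi = \alpha \cdot e \cdot \beta$ with $\alpha$ a path from $S$ to $u$ and $\beta$ a path from $w$ to $F$; by the choice of $e$, the suffix $\beta$ does not access $x$. Since $\pi$ starts empty, does access $x$, yet leaves $x$ uncached at $F$, Proposition~\ref{ref:LRU_access_count} forces the number of distinct blocks accessed after the last access to $x$ --- i.e., the number of distinct blocks accessed along $\beta$ --- to be at least $\nWays$, and all of them are different from $x$.

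Now I would shorten the two halves by different means, which is exactly where the exist-miss argument has to diverge from the exist-hit one. The prefix $\alpha$ may be compressed naively: I keep only a simple path $\alpha'$ from $S$ to $u$, of length at most $|V|-1$; its contents are irrelevant, all that matters is that it ends at $u$ so that $e$ can be appended. The suffix $\beta$, however, cannot be compressed by plain cycle removal, since cutting a cycle inside $\beta$ may delete precisely the accesses that evict $x$; instead I apply the preceding content-preserving lemma to obtain a path $\beta'$ from $w$ to $F$ with the same set of accessed blocks as $\beta$ and length at most $|V|\cdot|E|$. Then $\pi' = \alpha' \cdot e \cdot \beta'$ is a path from $S$ to $F$ of length at most $|V| + |V|\cdot|E|$, hence polynomial in $|G|$. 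To see that $\pi'$ is still a witness: $\pi'$ accesses $x$ via $e$, and $e$ is the last access to $x$ in $\pi'$ because $\beta'$ has the same contents as $\beta$ and so avoids $x$; after $\alpha' \cdot e$ the block $x$ is cached, and along $\beta'$ at least $\nWays$ distinct blocks, none equal to $x$, are accessed, so by Proposition~\ref{ref:LRU_access_count} $x$ is not cached at $F$ along $\pi'$. Consequently the nondeterministic search may be restricted to paths of length at most $|V| + |V|\cdot|E|$, which gives membership in NP.

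The main --- and essentially only --- obstacle is the observation that the exist-hit compression is unsound here (removing cycles can destroy the very accesses that make the miss happen), together with the realization that compressing $\beta$ with the content-preserving lemma while still compressing $\alpha$ naively repairs this. The rest is bookkeeping: the length estimates and a second application of Proposition~\ref{ref:LRU_access_count} to $\pi'$.
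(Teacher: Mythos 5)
Your proof is correct and follows essentially the same route as the paper's: split the witness at the last access to $x$, handle the no-access case separately, and shorten the suffix with the content-preserving lemma so that the set of evicting blocks is retained, then invoke Proposition~\ref{ref:LRU_access_count}. The only (harmless) difference is that you compress the prefix by plain cycle removal while the paper applies the content-preserving lemma to both halves; since only the suffix after the last access to $x$ determines whether $x$ survives, both choices work and yield comparable polynomial bounds.
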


\begin{proof}
  Given a witness of miss existence $\pi$, one can split $\pi$ into $\pi^1$ and $\pi^2$ at the last access to $x$ (if any).
  If $\pi$ does not contain any access to $x$, then simply consider $\pi^1 = \varepsilon$ and $\pi^2 = \pi$.

  From the preceding lemma, one can extract from $\pi^1$ and $\pi^2$ two paths $\pi^{1\prime}$ and $\pi^{2\prime}$ of length at most $|V|\cdot|B|$ accessing the same sets of blocks.
  There are thus enough blocks accessed along $\pi^{2\prime}$ to evict $x$, and the path $\pi'$ obtained by chaining $\pi^{1\prime}$ and $\pi^{2\prime}$ is guaranteed to lead to a miss.
  One can thus search for a witness path of length at most $2\cdot|V|\cdot|E|$ to check the existence of a miss..
\end{proof}

\subsection{Extensions}
\begin{theorem}
The above theorems hold even if the starting cache state is unspecified: a problem with arbitrary starting cache state can be reduced to a problem of the same kind with empty starting cache state, and vice-versa.
\end{theorem}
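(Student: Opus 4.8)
The plan is to exhibit, for LRU and for each of exist-hit and exist-miss, two polynomial-time many-one reductions: one from the empty-initial-state version to the arbitrary-initial-state version (which gives hardness of the latter), and one in the opposite direction (which gives its membership in the relevant class); both reductions will preserve the policy, the kind of query (hit vs.\ miss), and the shape of the control-flow graph (acyclic stays acyclic). Combined with \autoref{th:lru-exist-hit-acyclic-np-complete}, \autoref{th:lru-exist-miss-acyclic-np-complete}, \autoref{th:lru-exist-hit-cyclic-np-complete} and \autoref{th:lru-exist-miss-cyclic-np-complete}, this transfers all stated results to the arbitrary-starting-state setting. Throughout I would use the elementary observation that an LRU state $w_1 \cdots w_k$ (oldest first) is exactly the state reached from the empty cache after accessing $w_1, \dots, w_k$ in that order; so ``running a trace from an arbitrary state'' is the same as ``running a trace from the empty cache, prefixed by a suitable access sequence'', after which Proposition~\ref{ref:LRU_access_count} applies.

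For the reduction \emph{from} the empty-state problem, given an instance $(G,S,F,\nWays,a)$ I would pick $\nWays$ fresh block names $h_1, \dots, h_{\nWays}$ (distinct from one another, from $a$, and from every label of $G$), add a new start node $S'$ together with a chain of single edges $S' \xrightarrow{h_1} \cdots \xrightarrow{h_{\nWays}} S$ in front of $G$, and declare the starting cache arbitrary. The key point is that accessing $h_1, \dots, h_{\nWays}$ from \emph{any} LRU state yields the state $h_1 \cdots h_{\nWays}$: each such access is either a miss evicting the current oldest block or a hit, and in both cases the accessed block becomes the youngest, so after $\nWays$ of them nothing but $h_1, \dots, h_{\nWays}$ can survive. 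Hence in the new instance the choice of initial state is immaterial, and any witness reduces to a path $\pi$ of $G$ run from $h_1 \cdots h_{\nWays}$. Since $\pi$ never accesses any $h_i$ and $a \neq h_i$, Proposition~\ref{ref:LRU_access_count} applied to $h_1 \cdots h_{\nWays}\,\pi$ shows that $a$ is cached at the end iff $a$ is accessed in $\pi$ and the number of distinct blocks accessed after the last such access — a quantity internal to $\pi$ — is at most $\nWays-1$, i.e.\ iff $a$ is cached after $\pi$ run from the empty cache. So the two instances have the same answer, for exist-hit and for exist-miss alike.

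For the converse reduction I would instead let the graph \emph{guess} the initial state. Pick $\nWays$ fresh ``junk'' blocks $g_1, \dots, g_{\nWays}$ and insert before $G$ a chain of $\nWays$ switches $S' = q_0, q_1, \dots, q_{\nWays} = S$, where the edges from $q_{i-1}$ to $q_i$ are labeled by the elements of $\{a\} \cup \{\text{labels of }G\} \cup \{g_i\}$, one edge per label. From the empty cache, every path through this gadget produces a legal LRU state; conversely, any legal initial state $\sigma$ can be brought into a form produced by the gadget: rename the ``junk'' letters of $\sigma$ (those equal neither to $a$ nor to a label of $G$) to the fresh $g_i$ sitting at the matching positions, and, if $\sigma$ is shorter than $\nWays$, prepend further fresh $g_i$ on the oldest end to pad it to length $\nWays$. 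Because any path $\pi$ of $G$ accesses neither the original junk letters nor the $g_i$, the renaming leaves the whole run of $\pi$ unchanged up to isomorphism, and the oldest-end padding merely prepends an access sequence to the empty-cache trace, so by Proposition~\ref{ref:LRU_access_count} it does not change whether $a$ is cached at $F$. Concatenating a gadget path with a path of $G$ therefore realises, from the empty cache, exactly the runs available from an arbitrary starting state, and vice versa. Both constructions add only $O(\nWays\cdot|G|)$ vertices and edges, which is polynomial because $\nWays$ is supplied in unary, and they prepend an acyclic fragment, so acyclicity is preserved.

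The delicate point — the one I would spell out in full rather than wave through — is the equivalence ``a cache filled with $\nWays$ blocks that are never re-accessed behaves, for the rest of the computation and for the eventual membership of $a$, exactly like the empty cache'', together with its mirror statement for the guessing gadget (renaming and oldest-end padding of the guessed state are harmless). Both are consequences of Proposition~\ref{ref:LRU_access_count} once ``arbitrary starting state'' is rewritten as ``empty starting state followed by the corresponding accesses'', but the bookkeeping of distinct-block counts across the concatenation point is precisely where a careless argument could slip.
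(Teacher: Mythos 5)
Your proposal is correct and follows essentially the same two-step strategy as the paper: reduce the empty-start problem to the arbitrary-start one by prepending $\nWays$ accesses to fresh blocks (which force the cache into a known state regardless of the initial contents), and reduce the arbitrary-start problem to the empty-start one by prepending a nondeterministic gadget that loads a guessed initial state. The only difference is in the guessing gadget's bookkeeping: the paper lets paths exit early via $\epsilon$-transitions to realize shorter initial states, whereas you always load exactly $\nWays$ blocks but offer a per-position fresh block $g_i$ and justify, via Proposition~\ref{ref:LRU_access_count}, that renaming junk letters and padding the oldest end does not affect whether $a$ is cached at $F$ --- a detail that in fact handles initial states containing blocks outside the problem's alphabet somewhat more explicitly than the paper does.
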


\begin{proof}
  Consider a problem $P$ with empty initial cache state. Prepend to the control-flow graph of $P$ a sequence $f_1 \dots f_{\nWays}$ accesses to $\nWays$ pairwise distinct accesses to fresh blocks (blocks not appearing in $P$), where $\nWays$ is the associativity, and call the resulting problem $P'$.
  After executing this sequence, the cache only contains blocks from $f_1 \dots f_{\nWays}$ (not necessarily in that order) and none of the blocks of $P$. It is thus equivalent to check exist-hit or exist-miss properties on $P'$ from an arbitrary initial state and on $P$ from an empty initial state.

  Consider a problem $P$ with arbitrary initial cache state. Prepend to the control-flow graph of $P$ the following gadget, where $\epsilon$ denotes $\epsilon$-transitions (no memory access) and $b_1,\dots,b_m$ denote the alphabet of memory blocks:
    \begin{equation}
      \begin{tikzpicture}[node distance=6em]
        \node (q0) {$\text{start}_{P'}$};
        \node (q1) [right of=q0] {$q_1$};
        \node (qNm1) [right of=q1] {$q_{\nWays-1}$};
        \node (qN) [right of=qNm1] {$q_{\nWays}$};
        \path(q0) edge[->,bend left=45] node[above] {$b_1$} (q1);
        \path(q0) edge[->,dotted] (q1);
        \path(q0) edge[->,bend right=45] node[above] {$b_m$} (q1);
        \path(q1) edge[dotted,->,bend left=45] (qNm1);
        \path(q1) edge[->,dotted] (qNm1);
        \path(q1) edge[dotted,->,bend right=45] node(mid){} (qNm1);
        \path(qNm1) edge[->,bend left=45] node[above] {$b_1$} (qN);
        \path(qNm1) edge[->,dotted] (qN);
        \path(qNm1) edge[->,bend right=45] node[above] {$b_m$} (qN);
        \node (final) [below of=mid,node distance=3em] {$\text{start}_{P}$};
        \path(q0) edge[->,bend right=20] node[below left] {$\epsilon$} (final);
        \path(q1) edge[->,bend right=10] node[below left] {$\epsilon$} (final);
        \path(qNm1) edge[->,bend left=10] node[below right] {$\epsilon$} (final);
        \path(qN) edge[->,bend left=20] node[below right] {$\epsilon$} (final);
      \end{tikzpicture}
    \end{equation}
    This gadget loads into the cache any combination from zero to $\nWays$ blocks from $b_1,\dots,b_m$, in all possible orders.
    Thus analyzing $P'$ with a empty initial cache state is equivalent to analyzing $P$ with an arbitrary initial cache state.
\end{proof}

\begin{remark}
The proofs of NP-hardness for exist-hit and exist-miss on acyclic graphs for LRU carry over to FIFO (\autoref{sec:FIFO}).
\end{remark}


\section{Boolean register machine}
In the next sections, we shall prove that the exist-hit and the exist-miss problems for a variety of replacement policies are NP-hard for acyclic control-flow graphs and PSPACE-hard for general control-flow graphs.
All proofs will be by reduction from the reachability problem on a class of very simple machines, which we describe in this section: this problem is NP-complete if the control-flow graph of the machine is assumed to be acyclic, and PSPACE-complete in general.

\begin{definition}
A \emph{Boolean register machine} is defined by a number $\nRegs$ of registers and a directed (multi)graph with an initial node and a final node, with edges adorned by instructions of the form:
\begin{description}
\item[Guard] $v_i=b$ where $1 \leq i \leq \nRegs$ and $b \in \{\false, \true\}$,
\item[Assignment] $v_i:=b$ where $1 \leq i \leq \nRegs$ and $b \in \{\false, \true\}$.
\end{description}

The \emph{register state} is a vector of $\nRegs$ Booleans.
An edge with a guard $v_i=b$ may be taken only if the $i$-th register contains~$b$; the register state is unchanged.
The register state after the execution of an edge with an assignment $v_i:=b$ is the same as the preceding register state except that the $i$-th register now contains~$b$.

The \emph{reachability problem} for such a system is the existence of a valid execution starting in the initial node with all registers equal to $\false$, and leading to the final node.
\end{definition}

\begin{lemma}
The reachability problem for Boolean register machines is PSPACE-complete.
\end{lemma}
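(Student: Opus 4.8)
The plan is to prove the two inclusions separately: membership in PSPACE by a nondeterministic bounded walk over configurations, and PSPACE-hardness by simulating a polynomial-space Turing machine.

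\textbf{Membership in PSPACE.} A configuration of a Boolean register machine is a pair $(\text{node},\ \vec b)$ with $\vec b \in \{\false,\true\}^{\nRegs}$; it is described by $\nRegs + \lceil \log_2 |V|\rceil$ bits, i.e.\ polynomially many, and the configuration graph has at most $|V|\cdot 2^{\nRegs}$ vertices with a transition relation checkable in polynomial time (follow one graph edge, test a guard, or perform one assignment). I would decide reachability by the standard nondeterministic walk starting from $(\text{initial node},(\false,\dots,\false))$: repeatedly guess an outgoing edge and update the current configuration, maintaining a binary step counter bounded by $|V|\cdot 2^{\nRegs}$ and rejecting on overflow, accepting as soon as the final node is reached. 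Only the current configuration and the counter are stored, so this uses polynomial space, and $\mathrm{NPSPACE}=\mathrm{PSPACE}$ by Savitch's theorem. (Equivalently, this is a special case of reachability in a succinctly represented graph; cf.\ \cite{DBLP:conf/stacs/FeigenbaumKVV98}.)

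\textbf{PSPACE-hardness.} I would reduce from acceptance of a fixed language $L\in\mathrm{PSPACE}$. Fix a Turing machine $M$ with state set $Q$ and tape alphabet $\Gamma$ (blank $\sqcup$) deciding $L$ in space $p(n)$; given an input $x$ of length $n$, build a Boolean register machine whose final node is reachable iff $M$ accepts $x$. Use registers to encode: for each tape cell $j\in\{1,\dots,p(n)\}$, a block of $\lceil\log_2|\Gamma|\rceil$ registers holding its symbol together with one register $h_j$ that is $\true$ iff the head scans cell $j$; plus $\lceil\log_2|Q|\rceil$ registers for the control state. Choose the encodings so that $\sqcup$, ``head absent'', and ``state $=q_0$'' map to all-$\false$ wherever possible; then the mandatory all-$\false$ initial register state already represents an almost-blank tape, and a preamble that is just a straight-line chain of assignment edges writes the symbols of $x$, sets $h_1:=\true$, and fixes the initial control state.

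\textbf{The main loop and correctness.} From a distinguished node $\ell$ I would attach, for every cell $j$, state $q$ and symbol $a$, a small gadget (a bounded-length chain of guard/assignment edges): it tests $h_j=\true$, ``cell $j = a$'', ``state $= q$'', then applies the transition $\delta(q,a)=(q',b,D)$ of $M$ via the assignments ``cell $j:=b$'', $h_j:=\false$, $h_{j+D}:=\true$, ``state $:=q'$'', and returns to $\ell$; boundary cases ($j+D$ out of range) just omit the gadget, and if $q'$ is accepting the gadget leads to the final node instead. In each reachable configuration exactly the gadgets matching $M$'s applicable transitions are enabled at $\ell$, so valid executions from the initial configuration correspond precisely to prefixes of the run of $M$ on $x$ (with nondeterministic path choice mirroring $M$'s nondeterminism), and the final node is reachable iff that run accepts. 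The number of gadgets is $O(p(n)\cdot|Q|\cdot|\Gamma|)$, each of constant size, so the construction is polynomial and plainly log-space computable.

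\textbf{Main obstacle.} The one genuine subtlety is that the model only allows single-register guards and single-register assignments on edges, so there is no indexed access to ``the cell under the head''; the one-head-flag-per-cell encoding resolves this by making every transition local to a statically known cell, at the price of a polynomial blow-up, and it also forces some care with the all-$\false$ initial state (hence the chosen encodings and the initialisation preamble) and with tape boundaries. I would finally remark that restricting the graph to be acyclic caps run length at $|V|$, so the same correspondence applied to a polynomial-time NTM (equivalently, a ``guess an assignment, then check the clauses'' straight-line reduction from SAT) gives NP-completeness in the acyclic case, matching the claim made earlier in the paper.
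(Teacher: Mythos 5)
Your proof is correct and follows essentially the same route as the paper: membership via a polynomial-space nondeterministic simulation plus Savitch's theorem, and hardness by simulating a polynomial-space Turing machine with $O(p(n))$ registers encoding the tape and $O(p(n)\cdot|Q|\cdot|\Gamma|)$ transition gadgets. The paper states this only as a two-sentence sketch; your write-up supplies the details (per-cell head flags to localize the transitions, the initialisation preamble compensating for the mandatory all-$\false$ start) that the sketch leaves implicit.
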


\begin{proof}
  Such a machine is easily simulated by a polynomial-space nondeterministic Turing machine; based on Savitch's theorem, the reachability problem is thus in PSPACE.

  Any Turing machine using space $P(|x|)$ on input $x$ can be simulated by a Boolean register machine with $O(P(|x|))$ registers, encoding the state of the tape of the Turing machine, and a number of transitions in $O(P(|x|).|D|)$ where $D$ is the description of the Turing machine.
\end{proof}

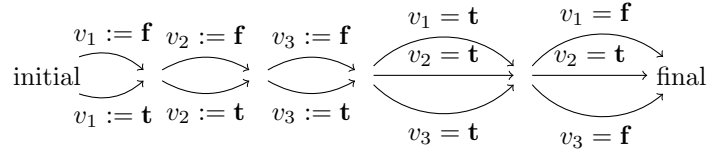
\begin{figure}
  \begin{center}
    \begin{tikzpicture}[node distance=4em,->]
      \node (q0) {initial};
      \node (q1) [right of=q0] {};
      \node (q2) [right of=q1] {};
      \node (q3) [right of=q2] {};
      \node (c1) [right of=q3, node distance=6em] {};
      \node (final) [right of=c1, node distance=6em] {final};
      \path (q0) edge [bend left] node[above] {$v_1:=\false$} (q1);
      \path (q0) edge [bend right] node[below] {$v_1:=\true$} (q1);
      \path (q1) edge [bend left] node[above] {$v_2:=\false$} (q2);
      \path (q1) edge [bend right] node[below] {$v_2:=\true$} (q2);
      \path (q2) edge [bend left] node[above] {$v_3:=\false$} (q3);
      \path (q2) edge [bend right] node[below] {$v_3:=\true$} (q3);
      \path (q3) edge [bend left=45] node[above] {$v_1=\true$} (c1);
      \path (q3) edge node[above] {$v_2=\true$} (c1);
      \path (q3) edge [bend right=45] node[below] {$v_3=\true$} (c1);
      \path (c1) edge [bend left=45] node[above] {$v_1=\false$} (final);
      \path (c1) edge node[above] {$v_2=\true$} (final);
      \path (c1) edge [bend right=45] node[below] {$v_3=\false$} (final);
    \end{tikzpicture}
    \caption{Reduction of CNF-SAT over $3$ unknowns with clauses $\{ v_1 \lor v_2 \lor v_3, \vneg{v_1} \lor v_2 \lor \vneg{v_3} \}$ to a Boolean $3$-register machine}
    \label{fig:Boolean_register_machine_NP}
  \end{center}
\end{figure}

\begin{lemma}
The reachability problem for Boolean register machines with acyclic control-flow is NP-complete.
\end{lemma}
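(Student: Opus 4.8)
The plan is to establish membership in NP and NP-hardness separately, then combine them.

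\emph{Membership in NP.} The key observation is that in an acyclic directed (multi)graph any path from the initial node to the final node visits each node at most once — otherwise it would close a cycle — and hence has length at most $|V|-1$, which is polynomial in the size of the machine. A nondeterministic algorithm can therefore guess such a path edge by edge while maintaining the current register state, a vector of $\nRegs$ Booleans initialised to all $\false$. For each edge it checks that a guard $v_i=b$, if present, is satisfied by the current state, and it applies any assignment $v_i:=b$ to update the state; note that once the path is fixed, the register state along it is determined, so this check is deterministic. All of this runs in polynomial time, so the reachability problem is in NP.

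\emph{NP-hardness.} I would reduce from CNF-SAT, following the construction illustrated in \autoref{fig:Boolean_register_machine_NP}. Given a formula $\varphi$ over variables $v_1,\dots,v_n$ with clauses $C_1,\dots,C_m$, build a Boolean register machine with $\nRegs=n$ registers whose control-flow graph is a chain of ``diamonds''. The first $n$ diamonds nondeterministically choose an assignment: the $i$-th one consists of two parallel edges, labelled $v_i:=\false$ and $v_i:=\true$ respectively. The next $m$ diamonds test the clauses: for clause $C_j$, the corresponding diamond has one edge per literal of $C_j$, where the edge for a positive literal $v_i$ carries the guard $v_i=\true$ and the edge for a negative literal $\vneg{v_i}$ carries the guard $v_i=\false$. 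Since all registers start at $\false$ and each register is assigned exactly once in the first phase, any run reaching the final node first commits to a total assignment and then, for each clause, must traverse some literal-edge whose guard holds — that is, it must satisfy every clause; conversely, any satisfying assignment of $\varphi$ yields such a run. The graph is acyclic, since every edge goes from one diamond to the next, and the whole machine is computable from $\varphi$ in linear time. Hence reachability for acyclic Boolean register machines is NP-hard, and together with the previous paragraph it is NP-complete.

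I do not expect a genuine obstacle here; the reduction is routine. The only points that deserve a line of care are the remark that acyclicity bounds witness paths to polynomial length (so that the nondeterministic guess is legitimate), and the observation that the guard checks are performed against the register state reached by executing the chosen prefix, which is well defined.
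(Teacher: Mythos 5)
Your proof is correct and follows essentially the same route as the paper: NP membership by nondeterministically guessing a (necessarily polynomial-length, by acyclicity) path and checking it, and NP-hardness by the same CNF-SAT reduction via a chain of assignment diamonds followed by clause-guard diamonds, exactly as in \autoref{fig:Boolean_register_machine_NP}. Your added remarks on why the witness path is short and why the guard checks are well defined are sound but do not change the argument.
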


\begin{proof}
A path from initial to final nodes, along with register values, may be guessed nondeterministically, then checked in polynomial time, thus reachability is in NP.

Any CNF-SAT problem with $\nRegs$ Boolean unknowns may be encoded as a Boolean $\nRegs$-register machine as follows: a sequence of $\nRegs$ disjunctions between $v_i:=\true$ and $v_i:=\false$ for each variable $i$, and then for each clause $v_{i^+_1} \lor \dots \lor v_{i^+_{n^+}} \lor
    v_{i^-_1} \lor \dots \lor v_{i^-_{n^-}}$, a disjunction between edges
$v_{i^+_i}=\true$ for all $1 \leq i \leq n^+$ and edges
$v_{i^-_i}=\false$ for all $1 \leq i \leq n^-$ (\autoref{fig:Boolean_register_machine_NP}).
\end{proof}

All forthcoming reductions will replace each instruction edge ($v_i=b$ guard edges, $v_i := b$ assignment edge) by a ``gadget'', a small acyclic piece of control-flow graph adorned with accesses to memory blocks;
they will also add a prologue and an epilogue.
The idea is to simulate executions of the Boolean register machine by executions of the cache system.

However, there is one problem: in Boolean register machines, an execution aborts when a guard is not satisfied,
whereas in our cache analysis problems, executions never abort except when reaching a control node with no outgoing edge.
We overcome this limitation by arranging the cache analysis problems so that the cache states following the simulation of an \emph{invalid} guard (e.g. the simulated state encodes $v_i = \true$ but the guard is $v_i = \false$) are irremediably marked as incorrectly formed;
thus the reachability problem for the Boolean register machine is reduced to the reachability of a correctly formed cache state at a certain node, which, in the epilogue, is encoded into the reachability of a cache hit (or a cache miss, respectively).

\section{FIFO}
\label{sec:FIFO}
FIFO (First-In, First-Out), also known as ``round-robin'', caches follow the same mechanism as LRU (a bounded queue ordered by age in the cache), except that a block is not rejuvenated on a hit.
They are used in Motorola PowerPC 56x, Intel XScale, ARM9, ARM11~\cite[p.21]{Reineke_PhD}, among others.

\subsection{Fundamental properties}
\label{sec:fifo_fundamental_properties}
The state of a FIFO cache with associativity $\nWays$ is a word of length at most $\nWays$ over the alphabet of cache blocks, composed of pairwise distinct blocks; an empty cache is defined by the empty word.
When an access is made to a block $a$, if it belongs to the word (\emph{hit}) then the cache state does not change.
If it does not belong to the word (\emph{miss}), and the length of the word is less than $\nWays$, then $a$ is appended to the word;
otherwise, the length of the word is exactly $\nWays$ ---
the first block of the word is discarded and $a$ is appended.

\begin{lemma}
  The exist-hit and the exist-miss problems are in NP for acyclic control flow graphs.
\end{lemma}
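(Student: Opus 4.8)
The plan is to show membership in NP by the standard ``guess a short witness, verify in polynomial time'' argument, exactly as was done for LRU in Theorems~\ref{th:lru-exist-hit-cyclic-np-complete} and \ref{th:lru-exist-miss-cyclic-np-complete}, but the acyclic hypothesis here makes the witness bound almost immediate. First I would observe that since the control-flow graph is acyclic, every path from $S$ to $F$ has length at most $|V|$, hence is of polynomial size. A nondeterministic algorithm then guesses such a path edge by edge, maintains the FIFO cache state (a word of length at most $\nWays$ over the blocks seen so far, which fits in polynomial space), updates it in constant time per edge according to the FIFO transition rule recalled above, and at node $F$ checks whether $a$ is present (for exist-hit) or absent (for exist-miss). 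Both the guessing and the verification run in polynomial time, so both problems are in NP.

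The only point that needs a word of care is that the associativity $\nWays$ is given in unary (per Definition~\ref{def:exist-hit}), so a cache state, being a word of length at most $\nWays$, has size polynomial in the input; simulating one FIFO access is then clearly polynomial. I would also note, as in the corresponding LRU remark, that one may assume without loss of generality that the only relevant blocks are those occurring on edges of $G$ together with~$a$, so the alphabet is of polynomial size and nothing blows up.

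I do not expect any real obstacle here: unlike the cyclic case for LRU, there is no need for a cycle-elimination lemma to bound the witness length, since acyclicity already bounds every path. The statement is essentially a direct consequence of the fact that reachability in the product automaton (control node, FIFO state) can be witnessed by a path whose length is bounded by the number of control nodes when the control graph is acyclic.
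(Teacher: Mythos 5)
Your proposal is correct and is essentially the paper's own argument: the paper's proof is the one-liner ``guess a path nondeterministically and execute the policy along it,'' and you have simply filled in the (routine) details that acyclicity bounds the path length and that the unary associativity keeps the simulated cache state polynomial. Nothing further is needed.
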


\begin{proof}
  Guess a path nondeterministically and execute the policy along it.
\end{proof}

\begin{lemma}
  The exist-hit and the exist-miss problems are in PSPACE for general graphs.
\end{lemma}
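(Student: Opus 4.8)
The plan is to exhibit a nondeterministic algorithm that uses only polynomial space and then invoke Savitch's theorem ($\mathrm{NPSPACE}=\mathrm{PSPACE}$).

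First I would note that a FIFO cache state admits a polynomial-size encoding, by the same block-alphabet reduction already used in Section~\ref{sec:fixed_associativity}: the only blocks that can ever occur in the cache are the (at most $\nWays$) blocks of the initial state together with the blocks labelling the edges of $G$; call this set $B'$, so $|B'|\le |E|+\nWays$. A FIFO state is then an ordered list of at most $\nWays$ pairwise distinct elements of $B'$, described by $\nWays\lceil\log_2(|B'|+1)\rceil$ bits, which is polynomial in the input size (recall $\nWays$ is given in unary, and in the arbitrary-initial-state variant that initial list is simply guessed, again at polynomial cost).

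The algorithm maintains a current control node $p$ (initially $S$) and a current cache state $\sigma$ (initially the empty word, or the guessed initial state). At each step it guesses an outgoing edge of $p$, reads its block label $b$, updates $\sigma$ by applying the FIFO transition on $b$ (trivially computable in polynomial time and space), and moves to the head of that edge; it accepts as soon as it reaches $F$ with $a\in\sigma$ (exist-hit) or $a\notin\sigma$ (exist-miss). To guarantee termination on every branch I would carry a binary step counter and reject once it exceeds $|V|\cdot(|B'|+1)^{\nWays}$, an upper bound on the number of configurations $(p,\sigma)$; this counter needs only polynomially many bits, and no accepting run is lost, since any trace reaching the target can be shortened so as to visit each configuration at most once. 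Hence both problems lie in $\mathrm{NPSPACE}$, and by Savitch's theorem in $\mathrm{PSPACE}$.

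There is no real obstacle here: the only point requiring care is the bookkeeping that both the cache-state encoding and the run-length bound are genuinely of polynomial size, which rests on $\nWays$ being specified in unary and on the reduction of the block alphabet to $B'$. The very same argument applies verbatim to PLRU, NMRU and pseudo-RR, since all it uses is that a cache state has a polynomial-size description and that the one-step transition function is polynomial-time computable.
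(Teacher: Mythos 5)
Your proposal is correct and follows the same route as the paper: a nondeterministic polynomial-space simulation of the cache policy along a guessed path, followed by an appeal to Savitch's theorem. The paper states this in one sentence; you have merely filled in the (correct) bookkeeping about the polynomial-size state encoding and the configuration-counting bound on run length.
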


\begin{proof}
  Simulate the execution of the policy using a polynomial-space nondeterministic Turing machine. Based on Savitch's theorem, both problems are in PSPACE. 
\end{proof}

\subsection{Reduction to Exist-Hit}

We reduce the reachability problem for the Boolean register machine to the exist-hit problem for the FIFO cache as follows.
The associativity of the cache is chosen as $\nWays = 2\nRegs-1$.
The alphabet of cache blocks is
$\{ (a_{i,b})_{1 \leq i \leq \nRegs,b \in \{\false,\true\}} \} \cup
\{ (e_i)_{1 \leq i \leq \nRegs} \} \cup
\{ (f_i)_{1 \leq i \leq \nRegs-1} \} \cup
\{ (g_i)_{1 \leq i \leq \nRegs-1} \}$.

The main idea is to encode the value of registers by loading the blocks $a_{i,b}$ into the cache ($a_{i,\true}$ is used when the register $i$ contains value true, and $a_{i,\false}$ is used for false).
The blocks $e_{i}$ are used to distinguished valid boolean machine executions from executions where the machine should have halt.
Finally, blocks $f_i$ and $g_i$ are used in epilogue to turn valid states into cache hits and invalid states into cache misses.

The register state $v_1,\dots,v_{\nRegs}$ of the register machine is to be encoded as the FIFO state
\begin{equation}
  a_{1,v_1} e_2 a_{2,v_2} \dots e_{\nRegs} a_{\nRegs,v_{\nRegs}}.
\end{equation}
We use the FIFO state essentially as a delay-line memory.%
\footnote{We thank Ken McMillan for this remark.}

\begin{definition}
We say that a FIFO state of that form is \emph{well-formed at shift $1$}, or \emph{well-formed} for short if it is of the form
\begin{equation}
  a_{1,v_1} e_2 a_{2,v_2} \dots e_{\nRegs} a_{\nRegs,v_{\nRegs}}
\end{equation}

We say that a FIFO state is well-formed at shift $i$ ($2 \leq i \leq \nRegs$) if it is of the form
\begin{equation}
  a_{i,v_i} e_{i+1} a_{i+1,v_{i+1}} \dots a_{\nRegs,v_{\nRegs}} e_1
  a_{1,v_1} e_2 \dots a_{i-1,v_{i-1}}
\end{equation}

In both cases, we say that the FIFO state \emph{corresponds} to the state $v_1,\dots,v_{\nRegs}$.
This formalizes the notion of valid and invalid states.
\end{definition}

\begin{definition}\label{def:FIFO_reduction}
We turn the register machine graph into a cache analysis graph as follows.
\begin{itemize}
\item From the cache analysis initial node $I_f$ to the register machine former initial node $I_r$ there is a prologue, a sequence of accesses $a_{1,\false} e_2 \dots a_{\nRegs-1,\false} e_{\nRegs} a_{\nRegs,\false}$.

\item Each guard edge $v_i = b$ is replaced by the gadget
  \begin{equation}
    \begin{tikzpicture}[node distance=4em,->]
      \node (q0) {start};
      \node (q1) [right of=q0] {};
      \node (qim2) [right of=q1] {};
      \node (qim1) [right of=qim2] {};
      \node (qi) [right of=qim1] {};
      \node (qip1) [right of=qi] {};
      \node (qRm1) [right of=qip1] {};
      \node (qR) [right of=qRm1] {end};
      \path (q0) edge [bend left] node[above] {$\phi_{1,\false}$} (q1);
      \path (q0) edge [bend right] node[below] {$\phi_{1,\true}$} (q1);
      \path (q1) edge [dotted, bend left] (qim2);
      \path (q1) edge [dotted, bend right] (qim2);
      \path (qim2) edge [bend left] node[above] {$\phi_{i-1,\false}$} (qim1);
      \path (qim2) edge [bend right] node[below] {$\phi_{i-1,\true}$} (qim1);
      \path (qim1) edge node[above] {$\phi_{i,b}$} (qi);
      \path (qi) edge [bend left] node[above] {$\phi_{i+1,\false}$} (qip1);
      \path (qi) edge [bend right] node[below] {$\phi_{i+1,\true}$} (qip1);
      \path (qip1) edge [dotted, bend left] (qRm1);
      \path (qip1) edge [dotted, bend right] (qRm1);
      \path (qRm1) edge [bend left] node[above] {$\phi_{\nRegs,\false}$} (qR);
      \path (qRm1) edge [bend right] node[below] {$\phi_{\nRegs,\true}$} (qR);
    \end{tikzpicture}
  \end{equation}
  where $\phi_{i,b}$ denotes the sequence of accesses
  $a_{i,b} e_i a_{i,b}$.

\item Each assignment edge $v_i := b$ is replaced by the gadget
  \begin{equation}
    \begin{tikzpicture}[node distance=4em,->]
      \node (q0) {start};
      \node (q1) [right of=q0] {};
      \node (qim2) [right of=q1] {};
      \node (qim1) [right of=qim2] {};
      \node (qi) [right of=qim1] {};
      \node (qip1) [right of=qi] {};
      \node (qRm1) [right of=qip1] {};
      \node (qR) [right of=qRm1] {end};
      \path (q0) edge [bend left] node[above] {$\phi_{1,\false}$} (q1);
      \path (q0) edge [bend right] node[below] {$\phi_{1,\true}$} (q1);
      \path (q1) edge [dotted, bend left] (qim2);
      \path (q1) edge [dotted, bend right] (qim2);
      \path (qim2) edge [bend left] node[above] {$\phi_{i-1,\false}$} (qim1);
      \path (qim2) edge [bend right] node[below] {$\phi_{i-1,\true}$} (qim1);
      \path (qim1) edge node[above] {$\psi_{i,b}$} (qi);
      \path (qi) edge [bend left] node[above] {$\phi_{i+1,\false}$} (qip1);
      \path (qi) edge [bend right] node[below] {$\phi_{i+1,\true}$} (qip1);
      \path (qip1) edge [dotted, bend left] (qRm1);
      \path (qip1) edge [dotted, bend right] (qRm1);
      \path (qRm1) edge [bend left] node[above] {$\phi_{\nRegs,\false}$} (qR);
      \path (qRm1) edge [bend right] node[below] {$\phi_{\nRegs,\true}$} (qR);
    \end{tikzpicture}
  \end{equation}
  where $\psi_{i,b}$ denotes the sequence of accesses $e_i a_{i,b} e_i$.
\item From the register machine former final node $F_r$ to a node $F_a$ there is a sequence of accesses
  $\psi_{1,\false} \dots \psi_{\nRegs,\false}$,
  constituting the first part of the epilogue.
\item From $F_a$ to a node $F_h$ there is a sequence of accesses 
  $a_{1,\false} g_1 e_2 f_2 a_{2,\false} g_2 \dots e_{\nRegs-1} f_{\nRegs-1} a_{\nRegs-1,\false} g_{\nRegs-1} e_{\nRegs} f_{\nRegs}$,
  constituting the second part of the epilogue.
\item The final node is $F_f = F_h$.
\end{itemize}
\end{definition}

The main difficulty in this reduction is that the Boolean register machines may terminate traces if a guard is not satisfied, whereas the cache problem has no guards and no way to terminate traces.
Our workaround is that cache states that do not correspond to traces from the Boolean machine are irremediably marked as incorrect (formally: \emph{well-phased but not well-formed}, per the following definition).

\begin{definition}
We say that a FIFO state of that form is \emph{well-phased at shift $1$}, or \emph{well-phased} for short if it is of the form
\begin{equation}
  \beta_1 \alpha_2 \beta_2 \dots \alpha_{\nRegs} \beta_{\nRegs}
\end{equation}
where, for each $i$:
\begin{itemize}
  \item either $\alpha_i = e_i$ and $\beta_i = a_{i,b_i}$ for some $b_i$, 
  \item or $\beta_i = e_i$ and $\alpha_i = a_{i,b_i}$ for some $b_i$.
\end{itemize}
    
We say that a FIFO state is well-phased at shift $i$ ($2 \leq i \leq \nRegs$) if it is of the form
\begin{equation}
  \beta_i \alpha_{i+1} \beta_{i+1} \dots \alpha_{\nRegs} \beta_{\nRegs}
  \alpha_1 \beta_1 \dots \alpha_{i-1} \beta_{i-1}
\end{equation}
\end{definition}

\begin{lemma}
  Assume $w$ is well-formed at shift $i$, corresponding to state $\sigma=(\sigma_1,\dots,\sigma_{\nRegs})$.
  If $\sigma_i = b$, then executing $\phi_{i,b}$ over FIFO state $w$ leads to a state well-formed at shift $i+1$ ($1$ if $i=\nRegs$), corresponding to $\sigma$~too.
  If $\sigma_i = \neg b$, then executing $\phi_{i,b}$ over FIFO state $w$ leads to a state well-phased, but not well-formed, at shift $i+1$ ($1$ if $i=\nRegs$).
\end{lemma}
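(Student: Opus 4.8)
The plan is a direct unfolding of the three accesses making up $\phi_{i,b} = a_{i,b}\, e_i\, a_{i,b}$ over the word $w$, reading off the resulting FIFO state after each. Two preliminary observations make the bookkeeping routine. First, since $\nWays = 2\nRegs - 1$ and a word that is well-formed (or well-phased) at any shift has length $1 + 2(\nRegs-1) = \nWays$, the cache is full at the start; a hit leaves the word unchanged, and a miss on a full word discards the first (oldest, leftmost) letter and appends the new one at the right, so the cache stays full throughout and every miss is an ``evict-leftmost, append-rightmost'' step. Second, writing $w$ in its cyclic ``shift $i$'' form $w = a_{i,\sigma_i}\, e_{i+1}\, a_{i+1,\sigma_{i+1}} \cdots a_{i-1,\sigma_{i-1}}$ (indices taken cyclically in $\{1,\dots,\nRegs\}$, so ``$i+1$'' reads as ``$1$'' when $i = \nRegs$), the $e$-letters occurring in $w$ are exactly the $e_k$ with $k \neq i$, and the only letter of the form $a_{i,\cdot}$ in $w$ is $a_{i,\sigma_i}$; hence $e_i \notin w$, and $a_{i,b} \in w$ if and only if $\sigma_i = b$.

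In the case $\sigma_i = b$, the first access $a_{i,b}$ is a hit on the leftmost letter and leaves $w$ unchanged; the access $e_i$ is a miss, evicting $a_{i,\sigma_i}$ and appending $e_i$; the second access $a_{i,b}$ is then a miss, since the copy of $a_{i,b} = a_{i,\sigma_i}$ was just evicted, so it evicts the new leftmost letter $e_{i+1}$ and appends $a_{i,b}$. Carrying out these three edits on the explicit word yields exactly the word well-formed at shift $i+1$ corresponding to $\sigma$, once one uses $b = \sigma_i$ to identify $a_{i,b}$ with $a_{i,\sigma_i}$.

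In the case $\sigma_i = \neg b$, the first access $a_{i,b}$ is a miss (no $a_{i,b}$ occurs in $w$), evicting $a_{i,\sigma_i}$ and appending $a_{i,b}$; the access $e_i$ is a miss, evicting $e_{i+1}$ and appending $e_i$; the second access $a_{i,b}$ is a hit, as $a_{i,b}$ is still present near the right end. The resulting word is $a_{i+1,\sigma_{i+1}}\, e_{i+2}\, a_{i+2,\sigma_{i+2}} \cdots a_{i-1,\sigma_{i-1}}\, a_{i,b}\, e_i$, which fits the well-phased-at-shift-$(i+1)$ pattern, with pair $(\alpha_k,\beta_k) = (e_k, a_{k,\sigma_k})$ for every $k \neq i$ and $(\alpha_i,\beta_i) = (a_{i,b}, e_i)$ for the remaining index; it is not well-formed at shift $i+1$, because a well-formed word at that shift ends in $a_{i,\sigma_i}$ whereas this one ends in $e_i$.

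The only genuine difficulty is keeping the cyclic ``shift $i$'' indexing straight and handling the wrap-around at $i = \nRegs$; once the two preliminary observations above are in place, each case reduces to three deterministic edits of an explicitly written word, and I would write out the $i = \nRegs$ instance once to make the wrap-around transparent.
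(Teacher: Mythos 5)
Your proposal is correct and follows essentially the same route as the paper: a direct symbolic execution of the three accesses $a_{i,b}\,e_i\,a_{i,b}$ on the full FIFO word, with the pattern hit--miss--miss when $\sigma_i=b$ and miss--miss--hit when $\sigma_i=\neg b$, yielding the same two final words. The paper merely states it ``without loss of generality'' for $i=1$ instead of carrying the cyclic indices, but the computation is identical.
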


\begin{proof}
  Without loss of generality we prove this for $i=1$ and $b=\false$.
  Assume $w= a_{1,\false} e_2 a_{2,v_2} \dots e_{\nRegs} a_{\nRegs,v_{\nRegs}}$; then the sequence $\phi_{1,\false} = a_{1,\false} e_1 a_{1,\false}$ yields $a_{2,v_2} \dots e_{\nRegs} a_{\nRegs,v_{\nRegs}} e_1 a_{1,\false}$.
  Assume now $w= a_{1,\true} e_2 a_{2,v_2} \dots e_{\nRegs} a_{\nRegs,v_{\nRegs}}$;
  then $\phi_{1,\false}$ yields $a_{2,v_2} \dots e_{\nRegs} a_{\nRegs,v_{\nRegs}} a_{1,\false} e_1$.
\end{proof}

\begin{lemma}
  Assume $w$ is well-formed at shift $i$, corresponding to state $\sigma=(\sigma_1,\dots,\sigma_{\nRegs})$.
  Executing $\psi_{i,b}$ over FIFO state $w$ leads to a state well-formed at shift $i+1$ ($1$ if $i=\nRegs$), corresponding to $\sigma$ where $\sigma_i$ has been replaced by~$b$.
\end{lemma}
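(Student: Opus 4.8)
\subsubsection*{Proof proposal}

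The plan is to prove this by directly executing, one at a time, the three accesses that make up $\psi_{i,b}$ --- namely $e_i$, then $a_{i,b}$, then $e_i$ again --- on the FIFO word $w$, using the semantics recalled in \autoref{sec:fifo_fundamental_properties}: a hit leaves the word unchanged, a miss on a cache of length $\nWays$ discards the first letter and appends the accessed block at the end. Since $w$ is well-formed it has length $2\nRegs-1 = \nWays$, so the cache is full and every miss triggers an eviction. Exactly as in the preceding lemmas it suffices to treat the case $i=1$ and $b=\false$; the general case follows by cyclically renaming the registers and the shift, with the usual convention that ``shift $\nRegs+1$'' is read as ``shift $1$''.

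First I would record the two structural facts that classify the three accesses. A state well-formed at shift $1$, say $w = a_{1,v_1}\,e_2\,a_{2,v_2}\,e_3\cdots e_{\nRegs}\,a_{\nRegs,v_{\nRegs}}$, contains the blocks $e_2,\dots,e_{\nRegs}$ but \emph{not} $e_1$, and the only register-$1$ block $a_{1,\cdot}$ that it contains is $a_{1,v_1}$. Then I would run the three accesses. (i) $e_1$ is a miss, so the head $a_{1,v_1}$ is evicted and $e_1$ appended, giving $e_2\,a_{2,v_2}\cdots e_{\nRegs}\,a_{\nRegs,v_{\nRegs}}\,e_1$. (ii) $a_{1,\false}$ is a miss --- the only register-$1$ block that was present, $a_{1,v_1}$, has just been evicted, so this holds whether or not $v_1=\false$ --- hence the head $e_2$ is evicted and $a_{1,\false}$ appended, giving $a_{2,v_2}\,e_3\cdots e_{\nRegs}\,a_{\nRegs,v_{\nRegs}}\,e_1\,a_{1,\false}$. (iii) $e_1$ is now a hit (it is the second-to-last letter), so the word is unchanged. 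Finally I would observe that this word is literally the state well-formed at shift $2$ corresponding to $(\false,v_2,\dots,v_{\nRegs})$, i.e.\ to $\sigma$ with $\sigma_1$ replaced by $\false$, which is the claim.

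I do not expect a genuine obstacle: the argument is bookkeeping on a length-$\nWays$ queue. The two points that need care are (a) the wrap-around in the general statement, where $i=\nRegs$ leads back to shift $1$, the one phase whose word contains no block $e_1$, so that the ``before'' and ``after'' templates have slightly different shapes there; and (b) the fact, used in step (ii), that the access to $a_{i,b}$ is a miss even when $b$ already equals $\sigma_i$, precisely because the preceding access to $e_i$ has just evicted $a_{i,\sigma_i}$, so no register-$i$ block is present at that moment.
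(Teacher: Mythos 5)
Your proof is correct and follows essentially the same route as the paper: reduce WLOG to $i=1$, $b=\false$, and symbolically execute $\psi_{1,\false}=e_1\,a_{1,\false}\,e_1$ on the well-formed word to obtain $a_{2,v_2}\dots e_{\nRegs}\,a_{\nRegs,v_{\nRegs}}\,e_1\,a_{1,\false}$. The paper only states the resulting word, whereas you additionally justify the hit/miss classification of each access (in particular that $a_{i,b}$ misses even when $b=\sigma_i$, because the preceding $e_i$ evicted $a_{i,\sigma_i}$), which is a useful clarification but not a different argument.
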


\begin{proof}
  Without loss of generality we prove it for $i=1$ and $b=\false$.
  Assume $w= a_{1,v_1} e_2 a_{2,v_2} \dots e_{\nRegs} a_{\nRegs,v_{\nRegs}}$; then the sequence $\psi_{1,\false} = e_1 a_{1,\false} e_1$ yields $a_{2,v_2} \dots e_{\nRegs} a_{\nRegs,v_{\nRegs}} e_1 a_{1,\false}$.
\end{proof}

\begin{corollary}
  Assume starting in a well-formed FIFO state, corresponding to state $\sigma$, then any path through the gadget encoding an assignment or a guard
  \begin{itemize}
  \item either leads to a well-formed FIFO state, corresponding to the state $\sigma'$ obtained by executing the assignment, or $\sigma'=\sigma$ for a valid guard;
  \item or leads to a well-phased but not well-formed state.
  \end{itemize}
\end{corollary}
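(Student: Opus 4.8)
The plan is to decompose each gadget into its $\nRegs$ successive ``column stages''. Both the guard gadget for $v_i = b$ and the assignment gadget for $v_i := b$ have the shape ``stage $1$; stage $2$; $\dots$; stage $\nRegs$'', where stage $k$ for $k \neq i$ is a free choice between the blocks $\phi_{k,\false}$ and $\phi_{k,\true}$, while stage $i$ is the single forced block $\phi_{i,b}$ (guard) or $\psi_{i,b}$ (assignment). A path through the gadget is thus the same thing as a choice $c_k \in \{\false,\true\}$ at each stage $k \neq i$, and its effect on the FIFO is the composition of the $\nRegs$ corresponding access sequences, applied for $k = 1, 2, \dots, \nRegs$ in that order. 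In a well-formed run the state after stage $k$ is well-formed at shift $k+1$; this cyclic bookkeeping of shifts is the backbone of the argument, and I would prove the corollary by induction on the number $m$ of stages already executed.

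The invariant I would carry is: after executing stages $1, \dots, m$ from a FIFO state that is well-formed at shift $1$ and corresponds to $\sigma$, the current state is \emph{either} (i) well-formed at shift $m+1$ (cyclically, i.e.\ shift $1$ once $m = \nRegs$) and corresponds to $\sigma^{(m)}$, where $\sigma^{(m)}$ equals $\sigma$ except that its $i$-th component is $b$ when the gadget is the assignment $v_i := b$ and $m \geq i$; \emph{or} (ii) well-phased but not well-formed. The base case $m = 0$ is the hypothesis. For the inductive step, if before stage $m+1$ we are in case (i): a free stage $\phi_{m+1,c_{m+1}}$ (for $m+1 \neq i$) and a guard stage $\phi_{i,b}$ are instances of the $\phi$-lemma --- we stay in (i) when $c_{m+1} = \sigma^{(m)}_{m+1}$, resp.\ $b = \sigma^{(m)}_i$, and drop to (ii) otherwise --- while an assignment stage $\psi_{i,b}$ is an instance of the $\psi$-lemma, keeping us in (i) with the $i$-th bit updated to $b$. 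If before stage $m+1$ we are already in case (ii), I need the one fact not supplied by those lemmas: applying $\phi_{k,c}$ or $\psi_{k,c}$ to a state that is well-phased but not well-formed at shift $k$ again yields a state well-phased but not well-formed at shift $k+1$. Granted this, case (ii) is absorbing, the induction closes, and the corollary is read off at $m = \nRegs$: alternative (i) is the first bullet (well-formed, corresponding to $\sigma' = \sigma^{(\nRegs)}$), alternative (ii) is the second.

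The remaining fact --- persistence of ``well-phased but not well-formed'' --- is the only real obstacle, since the supplied lemmas describe the column operations only on well-formed inputs. I would establish it by the same kind of direct computation used to prove those lemmas. A state well-phased at shift $k$ is a word of length $2\nRegs - 1$ whose ``column $k$'' sits at the front as a single block (either $a_{k,\cdot}$ or $e_k$) and whose every other column $\ell$ is the pair $\{a_{\ell,\cdot}, e_\ell\}$ in one of its two orders; it fails to be well-formed precisely when that front block is an $e$-block or some later column is in the wrong order. Running the three accesses of $\phi_{k,c}$ (resp.\ $\psi_{k,c}$) ever touches only $a_{k,\cdot}$ and $e_k$; because the associativity is $2\nRegs - 1$, the misses advance the window by exactly one column, so columns $\ell \neq k,k+1$ keep both content and orientation, column $k+1$'s pair is ``unfolded'' with its \emph{second} element becoming the new front block of the shifted state, and column $k$ is rebuilt as the pair $\{a_{k,c}, e_k\}$ in an orientation determined by $c$ and by which block was at the front. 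A short case check on these three places shows the output is again well-phased and that any defect of the input (an $e$-block at the front, or a mis-oriented later column) is carried over to a defect of the output --- in particular a mis-oriented column $k+1$ becomes an $e$-block at the front of the shifted state, so defects survive even the wrap-around from column $\nRegs$ back to column $1$. Hence not-well-formedness is preserved, which is all the inductive step requires.
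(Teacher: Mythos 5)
Your proof is correct and follows the same route as the paper: a stage-by-stage induction through the gadget, using the two lemmas on $\phi_{i,b}$ and $\psi_{i,b}$ over well-formed states for the matching case, plus the fact that ``well-phased but not well-formed'' is absorbing under each stage. That last fact, which you rightly identify as the one ingredient not supplied by the preceding lemmas and verify by the same kind of direct two-miss computation, is precisely the lemma the paper states and proves immediately after this corollary.
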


\begin{lemma}
  Assume $w$ is well-phased, but not well-formed, at shift $i$, then executing $\psi_{i,b}$ or  $\phi_{i,b}$ over FIFO state $w$ leads to a state well-phased, but not well-formed, at shift $i+1$.
\end{lemma}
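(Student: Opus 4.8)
The plan is to reduce to the case $i=1$ (the general case being identical up to a cyclic relabelling of register indices, exactly as in the preceding lemmas) and then run a short, finite case analysis tracking how the three accesses of $\phi_{1,b}$ or $\psi_{1,b}$ transform $w$. Write $w = \beta_1\,\alpha_2\beta_2\cdots\alpha_{\nRegs}\beta_{\nRegs}$ for the given well-phased state, where $\{\alpha_j,\beta_j\}=\{e_j,a_{j,\cdot}\}$ for $2\le j\le\nRegs$ and $\beta_1\in\{e_1,a_{1,\cdot}\}$. Since $|w|=\nWays=2\nRegs-1$ the cache is full, so each access is either a hit (no change) or a miss (drop the front block, append the accessed one). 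Before the case analysis I would record two structural facts. First, the only index-$1$ blocks that can occur in $w$ are $e_1$ (present iff $\beta_1=e_1$) and at most one $a_1$-block, namely $\beta_1$ if it is an $a_1$-block, because every pair $\alpha_j\beta_j$ with $j\ge 2$ consists solely of index-$j$ blocks. Second, calling register $j$ ($2\le j\le\nRegs$) \emph{in phase $0$} when $\alpha_j\beta_j = e_j\,a_{j,\cdot}$ and \emph{in phase $1$} when $\alpha_j\beta_j = a_{j,\cdot}\,e_j$, the state $w$ is well-formed at shift $1$ precisely when $\beta_1$ is an $a_1$-block and every register $2,\dots,\nRegs$ is in phase $0$; so non-well-formedness means $\beta_1=e_1$ or some register $j\ge2$ is in phase $1$.

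Next I would carry out the case analysis, splitting on whether $\beta_1$ is $e_1$, equal to $a_{1,b}$, or an $a_1$-block distinct from $a_{1,b}$ (the last case arising only for $\phi_{1,b}$), and on whether the gadget is $\phi_{1,b}=a_{1,b}\,e_1\,a_{1,b}$ or $\psi_{1,b}=e_1\,a_{1,b}\,e_1$. In each case one runs the three accesses, using the first structural fact to decide hit versus miss. The outcome is uniform: exactly two misses occur, the blocks dropped are $\beta_1$ and then $\alpha_2$, and the blocks appended are $e_1$ and $a_{1,b}$; their order at the back is $e_1\,a_{1,b}$ when the gadget is $\phi_{1,b}$ with $\beta_1=a_{1,b}$, or $\psi_{1,b}$ with $\beta_1$ an $a_1$-block, and $a_{1,b}\,e_1$ otherwise. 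Hence the resulting state is $w' = \beta_2\,\alpha_3\beta_3\cdots\alpha_{\nRegs}\beta_{\nRegs}\,\gamma_1\gamma_2$ with $\{\gamma_1,\gamma_2\}=\{e_1,a_{1,\cdot}\}$ and $(\gamma_1,\gamma_2)$ as described, which is exactly the shape $\beta_2\,\alpha_3\beta_3\cdots\alpha_{\nRegs}\beta_{\nRegs}\,\alpha_1\beta_1$ required for shift $2$.

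Finally I would verify the two conclusions. Well-phasedness of $w'$ at shift $2$ is immediate: the pairs $\alpha_3\beta_3,\dots,\alpha_{\nRegs}\beta_{\nRegs}$ are untouched, $\beta_2\in\{e_2,a_{2,\cdot}\}$ because $w$ was well-phased, and $\gamma_1\gamma_2$ is a legal pair for register $1$. For the failure of well-formedness I would argue contrapositively: if $w'$ were well-formed at shift $2$, then $\beta_2$ would be an $a_2$-block (forcing $\alpha_2=e_2$, i.e.\ register $2$ in phase $0$ in $w$), every register $3,\dots,\nRegs$ would be in phase $0$ in $w'$ and hence in $w$ (unchanged), and $(\gamma_1,\gamma_2)=(e_1,a_{1,b})$, which by the dichotomy above forces $\beta_1$ to be an $a_1$-block; but those three conditions together say exactly that $w$ is well-formed at shift $1$, contradicting the hypothesis. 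I do not expect a real obstacle: the only genuine work is the bookkeeping of the handful of subcases in the middle step, which is mechanical once the two structural facts are available; the one point worth stating cleanly is that non-well-formedness of $w'$ is best obtained by propagating well-formedness backwards to $w$, rather than arguing about $w'$ directly.
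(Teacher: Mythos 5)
Your proposal is correct and follows essentially the same route as the paper's proof: reduce to $i=1$ without loss of generality, then symbolically run the three accesses of $\phi_{1,b}$ or $\psi_{1,b}$ on the full FIFO, splitting on whether the front block $\beta_1$ is $e_1$, the accessed $a_1$-block, or the other $a_1$-block; your tabulation of which accesses hit, which two blocks are evicted, and in which order $e_1$ and $a_{1,b}$ land at the back agrees with the paper's explicit computations. The only (harmless) stylistic difference is at the end: the paper points in each case at a concrete defect of the resulting state (the trailing $a_{1,b}\,e_1$ or a persisting flipped pair $\alpha_j\beta_j$), whereas you conclude by propagating well-formedness of the output back to well-formedness of the input, which is equivalent and arguably tidier.
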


\begin{proof}
  Without loss of generality, we shall prove this for $i=1$.
  Let $w = \beta_1 \alpha_2 \beta_2 \dots \alpha_{\nRegs} \beta_{\nRegs}$.
  
  First case: $\beta_1 = e_1$.
  $\psi_{1,b} = e_1 a_{1,b} e_1$ then leads to $\beta_2 \alpha_3 \beta_3 \dots \alpha_{\nRegs} \beta_{\nRegs} a_{1,b} e_1$, which is well-phased, but not well-formed due to the last two blocks, at shift~$2$.
  $\phi_{1,b} = a_{1,b} e_1 a_{1,b}$ also leads to the same state.

  Second case: $\beta_1$ is either $a_{1,\false}$ or $a_{1,\true}$; assume the former without loss of generality.
  Then there exists $j > 1$ such that
  $\alpha_j = a_{j, v_j}$ and $\beta_j = e_j$.
  $\psi_{1,b}$ then leads to $\beta_2 \alpha_3 \beta_3 \dots \alpha_{\nRegs} \beta_{\nRegs} e_1 a_{1,b}$, which is well-phased, but not well-formed due to the $\alpha_j,\beta_j$, at shift~$2$.

  $\phi_{1,\false}$ leads to $\beta_2 \alpha_3 \beta_3 \dots \alpha_{\nRegs} \beta_{\nRegs} e_1 a_{1,\false}$, which is well-phased, but not well-formed due to the $\alpha_j,\beta_j$, at shift~$2$.
  
  $\phi_{1,\true}$ leads to $\beta_2 \alpha_3 \beta_3 \dots \alpha_{\nRegs} \beta_{\nRegs} a_{1,\true} e_1$, which is well-phased, but not well-formed due to the last two blocks, at shift~$2$.
\end{proof}

\begin{corollary}
  Assume starting in a well-phased but not well-formed FIFO state, then any path through the gadget encoding an assignment or a guard leads to a well-phased but not well-formed FIFO state.
\end{corollary}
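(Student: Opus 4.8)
The plan is to obtain this corollary directly from the preceding lemma by a short induction along the gadget, so there is nothing really new to prove here. First I would recall, from Definition~\ref{def:FIFO_reduction}, the precise shape of a gadget: whether it encodes a guard $v_i = b$ or an assignment $v_i := b$, it is a chain of exactly $\nRegs$ consecutive edges in which the $j$-th edge is labelled $\phi_{j,\false}$ or $\phi_{j,\true}$ for every $j \neq i$, and is labelled $\phi_{i,b}$ (guard case) or $\psi_{i,b}$ (assignment case) for $j = i$. In every case, and regardless of which of the two branches is taken at a slot $j \neq i$, the $j$-th edge accesses one of the sequences $\phi_{j,\cdot}$ or $\psi_{j,\cdot}$, which is exactly the situation treated by the preceding lemma on well-phased but not well-formed states.

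Next I would run the induction. Start from a FIFO state $w$ that is well-phased but not well-formed at shift $1$, fix a path through the gadget, and let $w = w^{(0)}, w^{(1)}, \dots, w^{(\nRegs)}$ be the states reached after executing the first $0, 1, \dots, \nRegs$ edges along that path. The induction hypothesis is that $w^{(j-1)}$ is well-phased but not well-formed at shift $j$; since the $j$-th edge accesses $\phi_{j,\cdot}$ or $\psi_{j,\cdot}$, the preceding lemma yields that $w^{(j)}$ is well-phased but not well-formed at shift $j+1$ (with the usual convention identifying shift $\nRegs+1$ with shift $1$). Letting $j$ range from $1$ to $\nRegs$, I conclude that $w^{(\nRegs)}$ is well-phased but not well-formed at shift $1$, that is, well-phased but not well-formed, which is the claim; and since the path was arbitrary, this holds for every path through the gadget.

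The only point that needs any care — and it is still routine — is the bookkeeping of the shift index: one must check that the gadget does walk through the register slots $1, 2, \dots, \nRegs$ in this order, so that the shift advances by exactly one per edge and wraps cleanly modulo $\nRegs$, and that the preceding lemma's hypothesis is met at each step (which is precisely what the induction hypothesis supplies). There is no genuine obstacle: this corollary is just the iteration of the per-edge lemma over the $\nRegs$ edges of a gadget, stated separately only to package that lemma into the per-gadget form used later in the reduction (alongside the companion corollary for starts in well-formed states).
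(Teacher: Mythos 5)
Your proof is correct and matches the paper's intent exactly: the paper leaves this corollary without an explicit proof precisely because it is the straightforward iteration of the preceding per-$\phi_{i,b}$/$\psi_{i,b}$ lemma over the $\nRegs$ edges of a gadget, with the shift advancing by one per edge and wrapping back to $1$. Your bookkeeping of the shift index and the observation that every edge of the gadget (on either branch of each switch) is a $\phi_{j,\cdot}$ or $\psi_{j,\cdot}$ in the order $j=1,\dots,\nRegs$ is exactly what is needed.
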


\begin{corollary}
  Any path from a well-formed FIFO state in $I_r$ to $F_r$ in the FIFO graph from \autoref{def:FIFO_reduction}
  \begin{itemize}
  \item either corresponds to a valid sequence of assignments and guards from the register machine from $I_r$ to $F_r$, and leads to a well-formed FIFO state corresponding to the final state of that sequence
  \item or corresponds to an invalid sequence of assignments and guards from the register machine, and leads to a well-phased but not well-formed FIFO state.
  \end{itemize}

  Conversely, any valid sequence of assignments and guards from the register machine maps from $I_r$ to $F_r$ transforms a well-formed FIFO state into a well-formed FIFO state, corresponding respectively to the initial and final states of that sequence.
\end{corollary}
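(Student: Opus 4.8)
The plan is to prove both halves of the statement by induction on the number $k$ of gadgets traversed, using the two preceding corollaries as the inductive step. First I would make the correspondence explicit: between $I_r$ and $F_r$, the graph of \autoref{def:FIFO_reduction} is exactly the register-machine graph with each edge --- a guard $v_i=b$ or an assignment $v_i:=b$ --- replaced by its (acyclic) gadget. Hence a path $\pi$ of the cache graph from $I_r$ to $F_r$ canonically determines a path $e_1\cdots e_k$ of the register-machine graph together with, for each $j$, a choice of branch through the gadget of $e_j$. Starting from the given well-formed FIFO state in $I_r$, say corresponding to register state $\sigma$, I would call $\pi$ \emph{faithful up to step $j$} if running FIFO along it leaves the cache state well-formed after each of the first $j$ gadgets; the induction hypothesis is that, in that case, $e_1\cdots e_j$ is a valid partial run of the register machine from $\sigma$ and the cache state after gadget $j$ is well-formed, corresponding to the register state $\sigma^{(j)}$ reached by that partial run, and otherwise the cache state after gadget $j$ is well-phased but not well-formed.

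For the forward direction the base case $k=0$ is immediate (the empty run is valid, $\sigma^{(0)}=\sigma$). For the step, apply the hypothesis to the prefix of length $k$. If that prefix is not faithful, it ends in a well-phased but not well-formed state, so by the second preceding corollary so does the state after the gadget of $e_{k+1}$, whatever branch is chosen, and $\pi$ lands in the second bullet. If the prefix is faithful, the state after gadget $k$ is well-formed, corresponding to $\sigma^{(k)}$, and the first preceding corollary applied to the gadget of $e_{k+1}$ yields exactly the desired dichotomy: either the chosen branch keeps the state well-formed, corresponding to the register state obtained by executing $e_{k+1}$ from $\sigma^{(k)}$ --- which forces $e_{k+1}$ to be a legal step from $\sigma^{(k)}$ (a guard that holds, or an assignment) and hence extends the partial run validly --- or the chosen branch makes the state well-phased but not well-formed and $\pi$ is in the second bullet. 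Taking $j=k$ with endpoint $F_r$ gives the first part of the corollary.

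For the converse, given a valid run $e_1\cdots e_k$ from $I_r$ to $F_r$ starting in a register state $\sigma$ that is also the state encoded by some well-formed FIFO state, I would build the matching cache path by picking, inside each gadget, the branch whose every free switch $\phi_{j,b'}$ uses the $b'$ equal to the current value of register~$j$; by the per-switch lemmas on $\phi_{i,b}$ and $\psi_{i,b}$ this branch exists and, gadget by gadget, keeps the FIFO state well-formed at the appropriate shift. An induction on $k$ using only the well-formed clause of the first preceding corollary then shows this path carries the well-formed state for $\sigma$ to the well-formed state for the final register state of the run.

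I do not anticipate a genuine obstacle: the per-switch lemmas and the two preceding corollaries have already done the combinatorial work, including the rotation of the shift by $\nRegs$ within one gadget. The only point needing care is the bookkeeping of the correspondence ``cache path $\leftrightarrow$ register-machine path plus per-gadget branch'', and in particular noticing that in the forward direction faithfulness can be lost not only by violating a guard but also by choosing a free branch $\phi_{j,b'}$ that disagrees with the current register value --- both cases being absorbed by the ``well-phased but not well-formed'' alternative of the first preceding corollary and then propagated by the second.
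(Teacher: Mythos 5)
Your proof is correct and follows essentially the same route the paper intends: the paper states this as an unproved corollary precisely because it is the straightforward chaining, by induction on the number of gadgets traversed, of the two preceding corollaries (well-formed states either stay well-formed and track the register state, or fall into the well-phased-but-not-well-formed sink, which then absorbs everything). Your explicit remark that faithfulness can also be lost by a mismatched free switch $\phi_{j,b'}$, not only by a violated guard, is exactly the right point of care and is correctly handled by the ``any path through the gadget'' phrasing of the first corollary.
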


\begin{corollary}
  The path from $F_r$ to $F_a$ (as in Definition~\ref{def:FIFO_reduction}):
  \begin{itemize}
  \item transforms a well-phased but not well-formed FIFO state into a well-phased but not well-formed FIFO state
  \item transforms any well-formed FIFO state into a well-formed FIFO state $w_0$ corresponding to the initial register state (all registers zero).
  \end{itemize}
\end{corollary}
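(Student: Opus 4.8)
The plan is a short induction along the path from $F_r$ to $F_a$, which by Definition~\ref{def:FIFO_reduction} is the access sequence $\psi_{1,\false}\,\psi_{2,\false}\cdots\psi_{\nRegs,\false}$, invoking the two lemmas already proved about how $\psi_{i,b}$ acts, respectively, on a state well-formed at shift $i$ and on a state well-phased but not well-formed at shift $i$.

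By the preceding corollary the state entering this path at $F_r$ is at shift $1$: it is either well-formed at shift $1$, corresponding to some register state $\sigma$, or well-phased but not well-formed at shift $1$. I would prove, by induction on $k$ from $0$ to $\nRegs$, that after executing the prefix $\psi_{1,\false}\cdots\psi_{k,\false}$ the state is at shift $k+1$ (with the convention $\nRegs+1 = 1$) and is: (i) in the first case, well-phased but not well-formed; (ii) in the second case, well-formed and corresponding to $\sigma$ with components $\sigma_1,\dots,\sigma_k$ overwritten by $\false$. The induction step is immediate: since the state is at shift $k$ and the $k$-th mini-sequence is $\psi_{k,\false}$, the hypothesis of the relevant lemma (``$w$ well-formed at shift $k$'', resp.\ ``$w$ well-phased but not well-formed at shift $k$'') is met, and that lemma delivers exactly the asserted state at shift $k+1$.

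Taking $k=\nRegs$, the shift is again $1$. In case (i) this is a well-phased but not well-formed FIFO state, which is the first bullet. In case (ii) this is a well-formed state at shift $1$ in which every register equals $\false$, i.e.\ the state $w_0 = a_{1,\false}\,e_2\,a_{2,\false}\cdots e_{\nRegs}\,a_{\nRegs,\false}$, which corresponds to the all-zero initial register state; this is the second bullet.

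There is no genuine obstacle here: the only delicate point is the indexing invariant that the $k$-th mini-sequence $\psi_{k,\false}$ is applied exactly when the state sits at shift $k$ --- and this alignment is precisely what the encoding was designed to guarantee. All the real content was already discharged in the two $\psi$-lemmas; the prologue-to-$F_r$ corollary supplies the needed fact that $F_r$ is reached at shift $1$.
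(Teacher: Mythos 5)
Your proof is correct and matches the paper's (implicit) argument: the paper states this corollary without proof, as an immediate induction along $\psi_{1,\false}\cdots\psi_{\nRegs,\false}$ using the two $\psi$-lemmas, exactly as you do. One transcription slip: in your inductive statement the labels (i) and (ii) are swapped relative to your ``first case'' (well-formed) and ``second case'' (well-phased but not well-formed) --- a well-formed input stays well-formed with registers overwritten by $\false$, and a well-phased-but-not-well-formed input stays well-phased-but-not-well-formed --- but your concluding paragraph applies the lemmas the right way around, so the argument stands.
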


\begin{lemma}
  The path $a_{1,\false} g_1 e_2 f_2 a_{2,\false} g_2 \dots e_{\nRegs-1} f_{\nRegs-1} a_{\nRegs-1,\false} g_{\nRegs-1} e_{\nRegs} f_{\nRegs}$ (from $F_a$ to $F_h$ in Definition~\ref{def:FIFO_reduction}):
  \begin{itemize}
  \item transforms $w_0$ into $a_{\nRegs,\false} g_1 f_2 g_2 \dots f_{\nRegs-1} g_{\nRegs-1} f_{\nRegs}$
  \item transforms any other word $w$ consisting of $a$'s and $e$'s into a word not containing~$a_{\nRegs,\false}$.
  \end{itemize}
\end{lemma}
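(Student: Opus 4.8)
The plan is to prove both bullets by a direct, deterministic simulation of the FIFO discipline along the epilogue path, relying on two elementary invariants. First I would record a preliminary observation that removes any ambiguity about the domain of the lemma: every cache state reachable at $F_a$ in the reduction is a \emph{full} word of length $\nWays=2\nRegs-1$ over the ``$a$/$e$'' alphabet $\{a_{i,b}\}\cup\{e_i\}$ (the prologue already fills the cache and FIFO never shrinks it), so it suffices to argue for such $w$; in particular each subsequent miss evicts exactly one block, namely the head of the queue.

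For the first bullet I would simply unroll the run on $w_0=a_{1,\false}e_2a_{2,\false}\dots e_{\nRegs}a_{\nRegs,\false}$, isolating a single ``replace the head'' step: when the path accesses an $a$/$e$ block that currently sits at the head of the queue it is a hit and the queue is unchanged, and the \emph{next} access is to a fresh block ($g_i$ or $f_i$), which is a miss that evicts precisely that head block and appends the fresh one at the tail. The epilogue path $a_{1,\false},g_1,e_2,f_2,a_{2,\false},g_2,\dots,e_{\nRegs},f_{\nRegs}$ is exactly $\nWays-1$ such alternations; they peel off the first $\nWays-1$ blocks of $w_0$ (that is, $a_{1,\false},e_2,a_{2,\false},\dots,e_{\nRegs}$) one at a time while stacking $g_1,f_2,g_2,\dots,f_{\nRegs}$ at the tail in access order, leaving only $w_0$'s last block $a_{\nRegs,\false}$ at the head. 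This is exactly $a_{\nRegs,\false}g_1f_2g_2\dots f_{\nRegs-1}g_{\nRegs-1}f_{\nRegs}$.

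For the second bullet I would argue by contraposition: assuming the final state contains $a_{\nRegs,\false}$, deduce $w=w_0$. Since $a_{\nRegs,\false}$ is never accessed along this path, it must have been in $w$ and never evicted. The $\nWays-1$ fresh blocks $g_1,\dots,g_{\nRegs-1},f_2,\dots,f_{\nRegs}$ are all misses (none occurs in a word of $a$'s and $e$'s, and each is accessed once), hence there are at least $\nWays-1$ evictions; if moreover any $a$/$e$ access were a miss we would get $\ge\nWays$ evictions and nothing of $w$ would survive, a contradiction. So every $a$/$e$ access is a hit, there are exactly $\nWays-1$ evictions, and a trace shows the fresh misses evict $w_1,\dots,w_{\nWays-1}$ in order, so the unique survivor is $w_{\nWays}$; thus $w_{\nWays}=a_{\nRegs,\false}$. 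Writing $c_1,\dots,c_{\nWays-1}$ for the $a$/$e$ accesses in order (that is, $a_{1,\false},e_2,a_{2,\false},\dots,e_{\nRegs}$), the cache just before $c_j$ is $\{w_j,\dots,w_{\nWays}\}$ together with the already-inserted fresh blocks, so $c_j$ being a hit on an $a$/$e$ block forces $c_j\in\{w_j,\dots,w_{\nWays}\}$, and since $c_j\ne a_{\nRegs,\false}=w_{\nWays}$ the position of $c_j$ in $w$ is $\ge j$. The $c_j$ are distinct, so $w$ restricted to positions $1,\dots,\nWays-1$ is a permutation of $\{c_1,\dots,c_{\nWays-1}\}$, and the constraint ``position of $c_j$ is $\ge j$'' forces that permutation to be the identity by a descending induction; hence $w_j=c_j$ for all $j$, and with $w_{\nWays}=a_{\nRegs,\false}$ we get $w=w_0$.

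I expect the only step requiring real care to be this last bookkeeping in the second bullet — the precise description of the cache contents at each $a$/$e$ access under the hypothesis that all of them are hits, and the ensuing position/permutation argument; everything else is a mechanical application of the FIFO rule (hit $\Rightarrow$ no change, miss on a full queue $\Rightarrow$ dequeue head, enqueue at tail). The underlying idea that makes all of it routine is the ``hit on the head, then a fresh miss evicting that same head'' pattern built into the epilogue.
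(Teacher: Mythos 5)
Your proof is correct and follows essentially the same strategy as the paper's: the fresh blocks $g_i,f_i$ force $\nWays-1$ misses, survival of the never-accessed $a_{\nRegs,\false}$ caps the total at $\nWays-1$, hence every $a$/$e$ access is a hit, and the resulting position constraints pin $w$ down to $w_0$. The only cosmetic difference is that you close the argument with a forward permutation/descending-induction step ($\mathrm{pos}(c_j)\geq j$ forces the identity) where the paper inducts backwards from the last position; these are the same counting argument.
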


\begin{proof}
  The first item is trivial. We shall now prove that it is necessary for the input word to be exactly $w_0$ in order for the final word to contain $a_{\nRegs,\false}$.
  In order for that, there must have been at most $2\nRegs-2$ misses along the path.
  The accesses to
  $g_1, f_2, g_2, \dots, f_{\nRegs-1}, g_{\nRegs-1}, f_{\nRegs}$ are always misses.
  As there are $2\nRegs-2$ of them, there must have been exactly those misses and no others.
  This implies that $a_{\nRegs,\false}$ was in the last position in $w$.
  
  When $e_{\nRegs}$ is processed, similarly there were exactly $2\nRegs-3$ misses, and $e_{\nRegs}$ must be a hit.
  This implies that $e_{\nRegs}$ was in the last or the penultimate position in $w$, but since the last position was occupied by  $a_{\nRegs,\false}$, $e_{\nRegs}$ must have been in the penultimate position.

  The same reasoning holds for all preceding locations, down to the first one, and thus the lemma holds.
\end{proof}

From all these lemmas, the main result follows:
\begin{corollary}
There is an execution of the FIFO cache from $I_f$ to $F_f$ such that $a_{\nRegs,\false}$ is in the final cache state if and only if there is an execution of the Boolean register machine from $I_r$ to $F_r$.
\end{corollary}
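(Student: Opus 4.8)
The plan is to assemble this corollary mechanically from the chain of lemmas proved above, by tracking the FIFO state as it is pushed through the prologue, the body of gadgets, and the two stages of the epilogue. The one preliminary fact I would record explicitly is that the prologue $a_{1,\false} e_2 a_{2,\false} \dots e_{\nRegs} a_{\nRegs,\false}$ from $I_f$ to $I_r$ consists of $2\nRegs-1$ pairwise distinct accesses, starts from the empty cache, and hence fills all $2\nRegs-1$ ways; so the state at $I_r$ is exactly that word, i.e.\ well-formed at shift $1$ and corresponding to the all-$\false$ register state. Thus any execution of the FIFO graph that reaches $I_r$ does so in this unique well-formed state, and this state is indeed reachable there.

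For the ``if'' direction, I would start from a register-machine execution from $I_r$ to $F_r$ and build the matching FIFO execution: traverse the prologue, then for each register-machine edge taken traverse the corresponding gadget choosing at every position $j$ the branch $\phi_{j,\sigma_j}$ (resp.\ $\psi_{j,\sigma_j}$) matching the current register value — always a legal path through the gadget. By the corollary on valid sequences of assignments and guards, the state at $F_r$ is well-formed and corresponds to the final register state; by the corollary describing the path $F_r\to F_a$ this becomes $w_0$; and by the last lemma the path $F_a\to F_h=F_f$ turns $w_0$ into a cache state containing $a_{\nRegs,\false}$.

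For the ``only if'' direction, suppose some execution from $I_f$ to $F_f$ ends with $a_{\nRegs,\false}$ in the cache. By the prologue observation, the state at $I_r$ is the all-$\false$ well-formed state. By the corollary on paths from a well-formed state at $I_r$ to $F_r$, the segment from $I_r$ to $F_r$ either (a) corresponds to a valid register-machine execution from $I_r$ to $F_r$ and leaves a well-formed state, or (b) leaves a well-phased but not well-formed state, which is in particular a word over the $a$'s and $e$'s. In case (b), the $F_r\to F_a$ corollary preserves ``well-phased but not well-formed'' (and the word stays over $a$'s and $e$'s, since the $\psi$-gadgets only access such blocks), so the last lemma forces the state at $F_f$ not to contain $a_{\nRegs,\false}$ — contradiction. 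Hence (a) holds and we recover a register-machine execution from $I_r$ to $F_r$.

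I do not expect a genuine obstacle here: the substantive work has already been done in the preceding lemmas, and what remains is bookkeeping. The two points that need a line of care are (i) confirming that the prologue lands in exactly the all-$\false$ well-formed state — a one-line counting argument on cache occupancy — and (ii) confirming that a well-phased-but-not-well-formed state remains a word over the $a$'s and $e$'s throughout the first part of the epilogue, so that the hypothesis ``$w$ consisting of $a$'s and $e$'s'' of the last lemma is actually met. Both are immediate from the definitions.
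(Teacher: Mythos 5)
Your proposal is correct and follows exactly the route the paper intends: the paper gives no separate argument for this corollary beyond ``from all these lemmas, the main result follows,'' and your write-up is precisely that assembly, with the two bookkeeping points you flag (the prologue landing in the all-$\false$ well-formed state, and a well-phased-but-not-well-formed state staying a word over $a$'s and $e$'s through the first epilogue so the final lemma applies) being the right details to check.
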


\begin{theorem}
  The exist-hit problem for FIFO caches is NP-complete for acyclic graphs and PSPACE-complete for general graphs.
\end{theorem}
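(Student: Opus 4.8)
The plan is to assemble the pieces already in place rather than prove anything new. For membership, I would invoke the two lemmas established at the start of this section: for an acyclic control-flow graph one guesses a path nondeterministically and simulates the FIFO policy along it in polynomial time, so exist-hit is in NP; for a general graph one simulates FIFO with a nondeterministic polynomial-space Turing machine and appeals to Savitch's theorem, so exist-hit is in PSPACE.

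For hardness I would use the reduction of \autoref{def:FIFO_reduction}. First I would note that it is computable in polynomial time and that the associativity it produces, $\nWays = 2\nRegs-1$, is polynomial in the input (and, as observed in the earlier remark, the problem does not care whether associativity is given in unary or binary): each guard edge and each assignment edge of the Boolean register machine is replaced by an acyclic gadget with $O(\nRegs)$ edges, while the prologue from $I_f$ to $I_r$ and the two-part epilogue from $F_r$ through $F_a$ to $F_h=F_f$ are fixed acyclic paths of length $O(\nRegs)$. Consequently, if the register machine has an acyclic control-flow graph, so does the resulting cache-analysis graph. Next I would invoke the final corollary of the preceding development: there is an execution of the FIFO cache from $I_f$ to $F_f$ whose final state contains $a_{\nRegs,\false}$ if and only if the Boolean register machine has an execution from $I_r$ to $F_r$; that is, the exist-hit instance with distinguished block $a_{\nRegs,\false}$ is positive exactly when the register-machine reachability instance is. Combining this with the two lemmas that reachability for Boolean register machines is NP-complete for acyclic control flow and PSPACE-complete in general, exist-hit for FIFO is NP-hard on acyclic graphs and PSPACE-hard on general graphs.

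Putting membership and hardness together yields NP-completeness for acyclic graphs and PSPACE-completeness in general. The one conceptual difficulty --- that a Boolean register machine aborts a trace on a failed guard whereas a FIFO execution never aborts --- has already been dispatched by the well-phased/well-formed machinery: a guard taken with the wrong register value turns a well-formed state into a well-phased but not well-formed one, a taint that every subsequent guard and assignment gadget preserves forever, and the epilogue is designed so that precisely the unique well-formed final state produces a hit on $a_{\nRegs,\false}$ while every tainted state produces a miss there. The only remaining items to verify when writing the full proof are bookkeeping: that the block alphabet of the register-machine part is disjoint from the epilogue blocks $f_i$ and $g_i$, and that the epilogue lemma's counting argument (at most $2\nRegs-2$ misses forces the input word to be exactly $w_0$) is applied with the correct starting word $w_0$ emerging from the $F_r$-to-$F_a$ segment.
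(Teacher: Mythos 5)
Your proposal is correct and follows exactly the paper's route: membership comes from the two lemmas at the start of the FIFO section, and hardness from the polynomial-time, acyclicity-preserving reduction of \autoref{def:FIFO_reduction} together with the corollary that the final cache contains $a_{\nRegs,\false}$ iff the Boolean register machine reaches $F_r$. The paper's own proof is just a one-line appeal to these same ingredients, so your write-up is a faithful (and somewhat more explicit) version of the same argument.
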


\begin{proof}
  As seen above, a register machine reachability problem can be reduced in polynomial time to a exist-hit FIFO problem, preserving acyclicity if applicable.
\end{proof}

\begin{remark}
  We have described a reduction from a $\nRegs$-register machine to a FIFO cache problem with an odd $2\nRegs-1$ number of ways.
  This reduction may be altered to yield an even number of ways as follows.
  Two special padding blocks $p$ and $p'$ are added.
  A well-formed state is now $p a_{1,b_1} e_2 a_{2,b_2} \dots a_{\nRegs,b_{\nRegs}}$;
  the definition of well-phased states is similarly modified.
  Each gadget $G$ for assignment or guard is replaced by $p' G p G$.
  The first $p'$ turns padding $p$ into $p'$, $G$ is applied.
  The second $p'$ turns $p'$ into $p$ and $G$ is applied again.
  
  This remark also applies to the exist-miss problem.
\end{remark}

\subsection{Reduction to Exist-Miss}
We modify the reduction for exist-hit in order to exhibit a miss on $a_{r,\false}$ later on if and only if it is in the cache at the end of the graph defined above.

\begin{definition}\label{def:FIFO_reduction_miss}
  We transform the register machine graph into a cache analysis graph as in \autoref{def:FIFO_reduction}, with the following modification:
  in between $F_h$ and $F_f$ we insert a sequence
  $a_{\nRegs,\false} e_1 a_{1,\false} \dots e_{\nRegs-1} a_{\nRegs-1,\false} e_{\nRegs}$,
  constituting the third part of the epilogue.
\end{definition}

\begin{lemma}
  The path from $F_h$ to $F_f$ transforms
  $a_{\nRegs,\false} g_1 f_2 g_2 \dots f_{\nRegs-1} g_{\nRegs-1} f_{\nRegs}$ into
  a word not containing $a_{\nRegs,\false}$.
  It transforms any word composed of $f$'s and $g$'s only into a word containing~$a_{\nRegs,\false}$.
\end{lemma}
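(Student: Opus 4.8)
The path from $F_h$ to $F_f$ is nothing but the fixed access sequence $\sigma$ forming the third part of the epilogue in \autoref{def:FIFO_reduction_miss}; one first checks that $\sigma$ visits, exactly once each, $\nWays$ pairwise distinct blocks, beginning with an access to $a_{\nRegs,\false}$. The plan is then to run the FIFO policy along $\sigma$ from each of the two input queues and read off the result. The whole argument rests on a single observation: the effect of $\sigma$ is opposite according to whether or not $a_{\nRegs,\false}$ is present in the input queue, because FIFO ignores a hit but shifts the queue on a miss.

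For the first input $w^{+} = a_{\nRegs,\false} g_1 f_2 g_2 \dots f_{\nRegs-1} g_{\nRegs-1} f_{\nRegs}$, I would recall from the exist-hit analysis that this word has length exactly $\nWays$ and that $a_{\nRegs,\false}$ is its oldest entry. Hence the first access of $\sigma$, to $a_{\nRegs,\false}$, is a hit and changes nothing; the remaining $\nWays-1$ accesses are to blocks that occur neither in $w^{+}$ nor earlier in $\sigma$, so they are all misses; and $\nWays-1$ misses evict the $\nWays-1$ oldest entries of $w^{+}$, the first of which is $a_{\nRegs,\false}$. Since $\sigma$ never touches $a_{\nRegs,\false}$ again, the final queue omits it.

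For the second input $w$, made only of $f$'s and $g$'s, no block of $\sigma$ occurs in $w$ and the blocks of $\sigma$ are pairwise distinct, so all $\nWays$ accesses along $\sigma$ are misses. Pushing $\nWays$ fresh, pairwise distinct blocks into a $\nWays$-way FIFO queue flushes out all previous contents and leaves exactly those $\nWays$ blocks; so the final queue is precisely the blocks of $\sigma$, and in particular contains $a_{\nRegs,\false}$, which ends up as the oldest entry, having been inserted first.

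The substance of the lemma is exactly this bookkeeping, so that is where care is required: pinning down that $\sigma$ and $w^{+}$ both have length exactly $\nWays$, that $a_{\nRegs,\false}$ is the oldest block of $w^{+}$, and that $a_{\nRegs,\false}$ occurs only at the head of $\sigma$; an off-by-one in any of these flips whether $a_{\nRegs,\false}$ is flushed out or retained. Conceptually, this lemma is the mirror image of the corresponding lemma for the exist-hit epilogue --- the same $\sigma$ sends ``$a_{\nRegs,\false}$ in the cache'' to ``$a_{\nRegs,\false}$ not in the cache'' and vice versa --- which is precisely the swap that turns the exist-hit reduction into the exist-miss reduction.
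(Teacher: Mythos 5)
Your overall strategy --- run the FIFO policy along the third epilogue from each of the two inputs and count misses --- is the right one, and your treatment of the first input reaches the correct conclusion. But the one ``bookkeeping'' fact on which you yourself say everything hinges is asserted rather than checked, and it is false for the sequence the paper actually defines. The path from $F_h$ to $F_f$ in Definition~\ref{def:FIFO_reduction_miss} is $a_{\nRegs,\false}\, e_1\, a_{1,\false}\, e_2\, a_{2,\false} \dots e_{\nRegs-1}\, a_{\nRegs-1,\false}\, e_{\nRegs}$, which consists of $1 + 2(\nRegs-1) + 1 = 2\nRegs$ pairwise distinct accesses, whereas the associativity is $\nWays = 2\nRegs-1$. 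So $\sigma$ makes $\nWays+1$ accesses, not $\nWays$.

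This off-by-one is harmless for the first claim: $a_{\nRegs,\false}$ is the oldest entry of the full word $w^{+}$, so it is evicted whether the initial hit is followed by $\nWays-1$ or by $\nWays$ misses. It is fatal for the second claim, however. If the input word consists only of $f$'s and $g$'s, then all $2\nRegs = \nWays+1$ accesses of $\sigma$ are cold misses to pairwise distinct blocks, and a $\nWays$-way FIFO then retains exactly the \emph{last} $\nWays$ of them; the unique block evicted is the first one inserted, namely $a_{\nRegs,\false}$. The final word therefore does \emph{not} contain $a_{\nRegs,\false}$ --- the opposite of what the lemma asserts and of what you conclude. In other words, the computation you outline, carried out with the correct length, refutes the second half of the lemma as stated rather than proving it; both the statement and your argument become correct only if the third epilogue is shortened to exactly $\nWays$ pairwise distinct accesses beginning with $a_{\nRegs,\false}$, for instance by dropping the trailing $e_{\nRegs}$. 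Since the paper supplies no proof of this lemma, the explicit count was the whole content of the exercise; performing it would have exposed the discrepancy between the lemma and Definition~\ref{def:FIFO_reduction_miss} instead of silently inheriting it.
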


\begin{theorem}
  The exist-miss problem for FIFO caches is NP-complete for acyclic graphs and PSPACE-complete for general graphs.
\end{theorem}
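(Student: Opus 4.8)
The plan is to mirror the exist-hit development of the preceding subsection. I would reduce the reachability problem for Boolean register machines --- NP-complete for acyclic control-flow graphs and PSPACE-complete in general --- to exist-miss for FIFO, reusing verbatim the construction of \autoref{def:FIFO_reduction} and appending the extra epilogue of \autoref{def:FIFO_reduction_miss}, and taking $a := a_{\nRegs,\false}$ as the block whose absence is tested. Since that reduction is computable in polynomial time and preserves acyclicity, NP-hardness for acyclic graphs and PSPACE-hardness in general would follow.

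For membership I would simply invoke the two lemmas opening \autoref{sec:FIFO}: on an acyclic graph a witness for exist-miss is a polynomial-length path, which can be guessed and along which the FIFO policy is simulated in polynomial time (hence NP); on a general graph the same simulation runs in nondeterministic polynomial space, so by Savitch's theorem exist-miss is in PSPACE. Both apply unchanged to the graph produced by the reduction.

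For hardness I would chain the corollaries already established for the exist-hit reduction with the new lemma describing the segment from $F_h$ to $F_f$. Those corollaries say that every cache-graph path from $I_f$ to $F_h$ either corresponds to a valid run of the register machine reaching $F_r$ --- and then the FIFO state at $F_h$ is exactly $w_1 := a_{\nRegs,\false} g_1 f_2 g_2 \dots f_{\nRegs-1} g_{\nRegs-1} f_{\nRegs}$ --- or corresponds to a run that would have aborted at a guard, and then the state at $F_h$ is (up to the usual bookkeeping) a word over the $f$'s and $g$'s. The new lemma then flips the membership of $a_{\nRegs,\false}$: it maps $w_1$ to a word \emph{not} containing $a_{\nRegs,\false}$, and maps every word over the $f$'s and $g$'s to one that \emph{does}. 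Hence there is an execution of the FIFO cache from $I_f$ to $F_f$ whose final state omits $a_{\nRegs,\false}$ precisely when the register machine has a valid run from $I_r$ to $F_r$; combined with the membership results, exist-miss for FIFO is NP-complete for acyclic graphs and PSPACE-complete in general.

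I expect the composition above to be routine and the real work to sit in the two supporting facts: the precise characterization of the FIFO states reachable at $F_h$ on invalid runs (they must be words that the final epilogue converts into words containing $a_{\nRegs,\false}$, which is slightly stronger than merely ``not containing $a_{\nRegs,\false}$''), and the analysis of the brand-new epilogue from $F_h$ to $F_f$. Both should yield to the delay-line / miss-counting template used throughout this section --- along each straight-line segment the fresh accesses are forced misses, their number matches the associativity $2\nRegs-1$ exactly when the incoming word is the intended one and is strictly larger otherwise, so an unintended word has all of its positions overwritten --- and the only genuinely delicate point will be getting the indices and parities of the $a_{i,\false}$, $e_i$, $f_i$, $g_i$ accesses lined up with the FIFO queue so that exactly the intended states are hit. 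That is the exist-hit argument pushed one segment further, so no new idea is needed.
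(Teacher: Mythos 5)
Your proposal is exactly the paper's route: it reuses Definition~\ref{def:FIFO_reduction} with the extra epilogue of Definition~\ref{def:FIFO_reduction_miss}, invokes the NP/PSPACE membership lemmas from the start of Section~\ref{sec:FIFO}, and composes the exist-hit corollaries with the new lemma on the $F_h$-to-$F_f$ segment that flips the membership of $a_{\nRegs,\false}$. You also correctly single out the one delicate point --- pinning down the FIFO words reachable at $F_h$ on invalid runs precisely enough that the third epilogue is guaranteed to reload $a_{\nRegs,\false}$ --- which is exactly where the paper's own (unproved) lemma carries the load.
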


\subsection{Extension to arbitrary starting cache}
\begin{lemma}\label{lemma:FIFO_arbitrary_start}
  The exist-hit and exist-miss problems for an empty starting FIFO cache state are reduced, in linear time, to the same kind of problem for an arbitrary starting cache state, with the same associativity.
\end{lemma}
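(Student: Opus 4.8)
The plan is to adapt the prologue trick used above for LRU, with one change forced by FIFO's behaviour on hits. Given an instance $P$ of the exist-hit (or exist-miss) problem for FIFO with \emph{empty} initial cache, I build an instance $P'$ of the same problem with \emph{arbitrary} initial cache by adding a fresh start node $I'$ and a simple path of $2\nWays$ edges from $I'$ to the old start node of $P$, labelled by $2\nWays$ pairwise distinct block names $f_1,\dots,f_{2\nWays}$ that do not occur in $P$ and are distinct from $a$; the rest of $P$ (its graph, its final node $F$, the target $a$, the associativity $\nWays$) is left untouched. Since $\nWays$ is given in unary this adds only linearly many nodes and edges, so the transformation runs in linear time, and it preserves acyclicity, so it applies both in the acyclic (NP) and in the general (PSPACE) setting. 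Why $2\nWays$ and not $\nWays$ fresh blocks, as for LRU? Because a legal initial state for $P'$ may itself reuse some of the fresh names $f_j$, and on FIFO a hit does not rejuvenate: accessing an $f_j$ already present leaves it at its (possibly very old) position, so a single pass of $\nWays$ fresh accesses need not evict everything in an adversarial $\sigma$, and a block of $P$ could survive into the analysis of the suffix path.

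The first key step is a \emph{flushing} lemma: from any legal FIFO state $\sigma$ (which has at most $\nWays$ blocks), executing $f_1\cdots f_{2\nWays}$ leaves a cache that contains only blocks among $f_1,\dots,f_{2\nWays}$ --- in particular, no block of $P$ and not $a$. The argument is a simple miss count: at most $\nWays$ of the $f_j$ lie in $\sigma$, so at least $\nWays$ of the accesses $f_1,\dots,f_{2\nWays}$ are first (hence only) accesses to blocks absent from the cache, i.e.\ misses; each miss evicts the current FIFO-oldest block, and since at most $\nWays-|\sigma|$ of those misses occur while the cache is not yet full, at least $|\sigma|$ evictions take place; as evictions peel blocks off the old end and re-appended blocks go to the young end, the first $|\sigma|$ evictions remove exactly the original contents of $\sigma$. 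A block of $P$ in $\sigma$ is never re-accessed in the prologue, so it is gone for good; thus, whatever $\sigma$ was, after the prologue the cache is full of $\nWays$ ``dead'' blocks not occurring in $P$.

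The second key step is that such a dead-filled FIFO cache behaves, with respect to the blocks of $P$, like an empty one: for any path $\pi$ of $P$ (whose labels are blocks of $P$, none of them an $f_j$) and any FIFO state $d$ consisting of $\nWays$ blocks not occurring in $\pi$, running $\pi$ from $d$ and running $\pi$ from the empty cache give states that agree on the membership of every block of $P$, in particular of $a$. Indeed, the run from $d$ is exactly the run from the empty cache preceded by $\nWays$ ``phantom'' misses that append blocks never accessed again; a phantom append can only bring forward the eviction of itself, never change whether a later access to a block of $P$ is a hit or a miss, so $\pi$ misses at the same positions in both runs; once $\pi$ has produced $\nWays$ of its own misses all the dead blocks are flushed and the two runs coincide, and before that the two caches hold the same blocks of $P$ (the dead-filled one holding, in addition, a FIFO-age suffix of $d$). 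This equivalence is immediate for LRU from Proposition~\ref{ref:LRU_access_count}, but FIFO has no such clean characterisation, so it must be argued directly from the eviction order --- this is where I expect the real work to be.

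Combining the two steps gives both directions of the reduction. If $P$ (empty start) is positive, witnessed by a path $\pi$ from $S$ to $F$, then in $P'$ take the empty initial cache (legal): the prologue brings it to a dead-filled state, and then $\pi$ reaches $F$ with $a$ in (respectively, not in) the cache by the second step, so $P'$ is positive. Conversely, a witness for $P'$ is a legal initial state $\sigma$ together with a path that, after the prologue, follows a path $\pi$ of $P$; by the first step the prologue leaves a dead-filled cache whatever $\sigma$ was, and then by the second step $\pi$ run from the empty cache reaches $F$ with the same status of $a$, so $P$ is positive. Hence exist-hit (resp.\ exist-miss) for FIFO with empty initial cache reduces in linear time to the corresponding problem with arbitrary initial cache, at the same associativity.
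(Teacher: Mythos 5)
Your proof is correct and takes essentially the same approach as the paper: prepend a prologue of pairwise-distinct fresh blocks that flushes any legal initial FIFO state, then observe that the resulting cache is equivalent to an empty one for the remaining accesses. The only differences are that the paper uses $2\nWays-1$ fresh blocks, citing the tight bound of Reineke et al.\ for the flushing step, and leaves the ``behaves as though empty'' step as an assertion, whereas you use $2\nWays$ blocks and prove both steps directly (your miss-counting argument and the invariant that the dead-filled run carries an extra age-suffix of dead blocks are both sound).
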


\begin{proof}
  Let $\Sigma$ be the alphabet of blocks in the problem and $\nWays$ its associativity.
  Let $e_1,\dots,e_{2\nWays-1}$ be new blocks not in~$\Sigma$; after accessing them in sequence, the cache contains only elements from these accesses~\cite[Th.~1]{DBLP:journals/rts/ReinekeGBW07}.
  Prepend this sequence as a prologue to the cache problem; then the rest of the execution of the cache problem will behave as though it started from an empty cache.
\end{proof}

\begin{corollary}
  The exist-hit and exist-miss problems for FIFO caches with arbitrary starting state is NP-complete for acyclic graphs and PSPACE-complete for general graphs.
\end{corollary}

\subsection{Extension to Pseudo-RR caches}
Recall how a FIFO cache with multiple cache sets --- the usual approach in hardware caches --- operates.
A memory block of address $x$ is stored in the cache set number $H(x)$ where $H$ is a suitable function, normally a simple combination of the bits of~$x$.
In typical situations, this is as though the address $x$ were specified as a pair $(s,a)$ where $s$ is the number of the cache set and $a$ is the block name to be used by the FIFO in cache set number~$s$.

In a FIFO cache, each cache set, being a FIFO, can be implemented as a circular buffer: an array of cache blocks and a ``next to be evicted'' index.
In contrast, in a pseudo-RR cache, the ``next to be evicted'' index is global to all cache sets.

A FIFO cache exist-hit or exist-miss problem with cache blocks  $a_1,\dots,a_n$ can be turned into an equivalent pseudo-RR problem simply by using $(s,a_1),\dots,(s,a_n)$ as addresses for a constant distinguished cache set~$s$.
Thus, both exist-hit and exist-miss are NP-hard for acyclic control-flow graphs on pseudo-RR caches, and PSPACE-hard for general control-flow graphs.

The same simulation arguments used for FIFO (\autoref{sec:fifo_fundamental_properties}) hold for establishing membership in NP and PSPACE respectively.


\section{PLRU}
\label{sec:PLRU}
Because LRU caches were considered too difficult to implement efficiently in hardware, various schemes for heuristically approximating the behavior of a LRU cache (keeping the most recently used data) have been proposed.
By ``heuristically approximating'' we mean that these schemes are assumed, on ``typical'' workloads, to perform close to LRU, even though worst-case performance may be different.%
\footnote{%
  Experimentally, on typical workloads, the tree-based PLRU scheme described in this section is said to produce 5\% more misses on a level-1 data cache compared to LRU~\cite{Al-Zoubi:2004:PEC:986537.986601}.
  However, that scheme may, under specific concocted workloads, indefinitely keep data that are actually never used except once---
  a misperformance that cannot occur with LRU~\cite{DBLP:journals/pieee/HeckmannLTW03}.
  This can produce \emph{domino effects}: the cache behavior of a loop body may be indefinitely affected by the cache contents before the loop~\cite{berg:OASIcs:2006:672}.

Because of the difficulties in obtaining justifiable bounds on the worst-case execution times of programs running on a PLRU cache, some designers of safety-critical real-time systems lock all cache ways except for two, exploiting the fact that a 2-way PLRU cache is the same as a 2-way LRU cache and thus recovering predictability~\cite[\S3]{berg:OASIcs:2006:672}.}
Some authors lump all such schemes as ``pseudo-LRU'' or ``PLRU'', and call the scheme in the present section ``tree-based PLRU'' or ``PLRU-t''~\cite{Al-Zoubi:2004:PEC:986537.986601}, while some others~\cite[p.~26]{Reineke_PhD} call ``PLRU'' only the scheme discussed here.

PLRU has been used in i486~\cite{i486_data_sheet} (4-way), Intel Pentium II-IV, PowerPC 75x~\cite[p.~21]{Reineke_PhD};
an 8-way PLRU is used in NXP/Freescale MPC745x~\cite[p.~3-41]{MPC7450_Manual} and MPC75x~\cite[p.~3-19]{MPC750_Manual}, e6500, MPC8540.

\subsection{PLRU caches}
The cache lines of a PLRU cache, which may contain cached blocks, are arranged as the leaves of a full binary tree --- thus the number of ways $\nWays$ is a power of $2$, often $4$ or~$8$.
Two lines may not contain the same block.
Each internal node of the tree has a tag bit, which is represented as an arrow pointing to the left or right branch.
The state of the cache is thus the content of the lines and the $\nWays-1$ tag bits.

There is always a unique line such that there is a sequence of arrows from the root of the tree to the line; this is the line \emph{pointed at by the tags}.
Tags are said to be \emph{adjusted away} from a line as follows: on the path from the root of the tree to the line, tag bits are adjusted so that the arrows all point away from that path.

When a block $a$ is accessed:
\begin{itemize}
\item If the block is already in the cache, tags are adjusted away from this line.
\item If the block is not already in the cache and one or more cache lines are empty, the leftmost empty line is filled with $a$, and tags are adjusted away from this block.
\item If the block is not already in the cache and no cache line is empty, the block pointed at by the tags is evicted and replaced with $a$, and tags are adjusted away from this block.
\end{itemize}

\subsection{Exist-Hit Problem}
\label{sec:eh_plru}
We reduce the reachability problem of a Boolean $\nRegs$-register machine to the PLRU exist-hit problem for a $(2\nRegs+2)$-way cache --- without loss of generality, we can always add useless registers so that $2\nRegs+2$ is a power of two.
The alphabet of cache blocks is
$\{ (a_{i,b})_{1 \leq i \leq \nRegs,b \in \{\false,\true\}} \} \cup
 \{ (e_i)_{0 \leq i \leq \nRegs} \} \cup
 \{ c \}$.

\begin{definition}
We say that a PLRU cache state is \emph{well-formed} and \emph{corresponds} to a Boolean state $(b_i)_{1 \leq i \leq \nRegs}$ if its leaves are, from left to right:
$c, e_0, a_{1,b_1}, e_1, \dots, a_{\nRegs,b_{\nRegs}}, e_{\nRegs}$.
\end{definition}

\begin{definition}
We say that a PLRU cache state is \emph{well-phased} if its leaves are, from left to right: $x_0, e_0, a_{1,b_1}, e_1, \dots, a_{\nRegs,b_{\nRegs}}, e_{\nRegs}$ where $x_0$ can be $c$ or any $a_{i,b}$ block.
\end{definition}

We use the PLRU state as a random access memory. Appropriate sequence of accesses define the memory location to be read or written.

\begin{lemma}
Let $0 \leq i \leq \nRegs$, there exists a sequence $\pi_i$ of accesses, of length logarithmic in $\nRegs$, such that, when run on a well-phased cache state $x_0, e_0, \dots, x_{\nRegs}, e_{\nRegs}$, that sequence makes tags point at $x_i$ without changing the contents of the cache lines.
\end{lemma}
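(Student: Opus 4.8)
The plan is to exploit two facts. First, accessing a block that is \emph{already resident} in a PLRU cache neither evicts nor relocates anything: by definition it only ``adjusts the tags away'' from the line holding that block. Hence any sequence of accesses to resident blocks leaves all cache-line contents untouched, which takes care of the last requirement of the lemma. Second, in a well-phased state every one of the $2\nRegs+2$ leaves is occupied, and the blocks $e_0,\dots,e_\nRegs$ occupy \emph{fixed}, state-independent leaves (the odd-numbered ones, if we index the leaves $0,1,\dots,2\nRegs+1$ from the left, so that $x_k$ sits at position $2k$ and $e_k$ at position $2k+1$). Thus it is enough to build $\pi_i$ out of accesses to $e_j$ blocks only; the resulting word then depends solely on $i$ and on the (fixed) shape of the tree, as the lemma demands.

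First I would record the effect of one access in tree terms. Let $h=\log_2(2\nRegs+2)$ be the height of the tree, put the root at depth $0$ and the leaves at depth $h$, and let $P$ be the root-to-leaf path ending at the leaf holding $x_i$; its tags sit at the nodes of $P$ at depths $0,\dots,h-1$, and we want every one of them to point toward $x_i$. Accessing a resident block in a leaf $L\neq x_i$ sets, all along the root-to-$L$ path, every tag to point away from $L$. Let $v$ be the node at which the path to $L$ branches off $P$. Then: (i) at every strict ancestor of $v$ the new tag points away from $L$, which on this common prefix is the same as away from $x_i$ --- the wrong direction; (ii) at $v$ the new tag points away from $L$, i.e.\ toward the subtree of $v$ that contains $x_i$ --- exactly the direction we want; (iii) all other tags touched lie strictly inside the subtree hanging below $v$ on $L$'s side, which is disjoint from $P$. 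In a word: one access to $L$ orients correctly exactly the tag at the branching node $v$, and may spoil only the tags of $P$ strictly above $v$.

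The key step is to order the accesses so that the branching node $v$ climbs $P$ from bottom to top. For the deepest tag of $P$, at the parent of $x_i$, the required $L$ is the sibling leaf of $x_i$, namely the leaf holding $e_i$; access $e_i$. For the tag at the node $u$ of $P$ at depth $k=h-2,h-3,\dots,0$, the required $L$ is any leaf of the ``uncle'' subtree, i.e.\ the child subtree of $u$ not containing $x_i$. This subtree has height $h-1-k\geq 1$, so it spans a run of $2^{h-1-k}\geq 2$ consecutive leaves whose first position is a multiple of $2^{h-1-k}$, hence even; the run therefore contains an odd-numbered leaf, i.e.\ some $e_j$; access that $e_j$. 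By (iii) this access leaves untouched every tag of $P$ strictly below $u$ (those branch off inside already-processed subtrees), so the tags fixed in earlier steps survive; by (ii) it orients the tag at $u$ correctly; by (i) it may disturb tags of $P$ strictly above $u$, which the later steps repair. After all $h$ steps every tag on $P$ points toward $x_i$, so the tags point at $x_i$; and $\pi_i$ consists of $h=O(\log\nRegs)$ accesses, all to resident $e$-blocks, so the contents are unchanged.

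I expect the main obstacle to be only the bookkeeping: making precise the claim that processing the branching nodes bottom-up is monotone (no step undoes an earlier one), together with the little alignment fact that every internal subtree of height at least one contains one of the fixed $e_j$ leaves --- this last point being what lets $\pi_i$ be chosen independently of the register valuation $(b_j)$ and of $x_0$. Once these are pinned down, the argument is a routine induction on the levels of $P$.
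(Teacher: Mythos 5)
Your proof is correct and follows the same approach as the paper's (much terser) argument: walk the root-to-$x_i$ path bottom-up and, at each level, access an $e_j$ resident in the sibling subtree so that the tag at the branching node is set toward $x_i$ and only higher tags can be disturbed. The paper states this in one sentence and illustrates it with $\pi_1=e_1e_0e_2$; your write-up just supplies the bookkeeping (monotonicity of the bottom-up order, and the parity argument showing every uncle subtree contains some $e_j$) that the paper leaves implicit.
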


\begin{proof}
Moving from $x_i$ to the root of the tree, at every node one $e_i$ block is accessed from the other branch at that node (see Figure~\ref{fig:plru_pi_sequence} for an example of sequence $\pi_1$).
\end{proof}

\begin{figure}
  \begin{tikzpicture}[scale=0.7, every node/.style={scale=0.7}]
    \plrueight{0}{6}{$x_0$,$e_0$,$x_1$,$e_1$,$x_2$,$e_2$,$x_3$,$e_3$}{1101110}
    \plrueight{9}{6}{$x_0$,$e_0$,$x_1$,$e_1$,$x_2$,$e_2$,$x_3$,$e_3$}{1001010}
    \plrueight{9}{0}{$x_0$,$e_0$,$x_1$,$e_1$,$x_2$,$e_2$,$x_3$,$e_3$}{1100010}
    \plrueight{0}{0}{$x_0$,$e_0$,$x_1$,$e_1$,$x_2$,$e_2$,$x_3$,$e_3$}{0110000}

    \path[draw,thick,->] (7, 8) -- node[above]{$e_1$} (9, 8);
    \path[draw,thick,->] (12.5, 5) -- node[right]{$e_0$} (12.5, 4);
    \path[draw,thick,->] (9, 2) -- node[above]{$e_2$} (7, 2);
  \end{tikzpicture}
  \caption{Sequence $\pi_1 = e_1 e_0 e_2$ makes tags point at $x_1$ without changing cache content}
  \label{fig:plru_pi_sequence}
\end{figure}
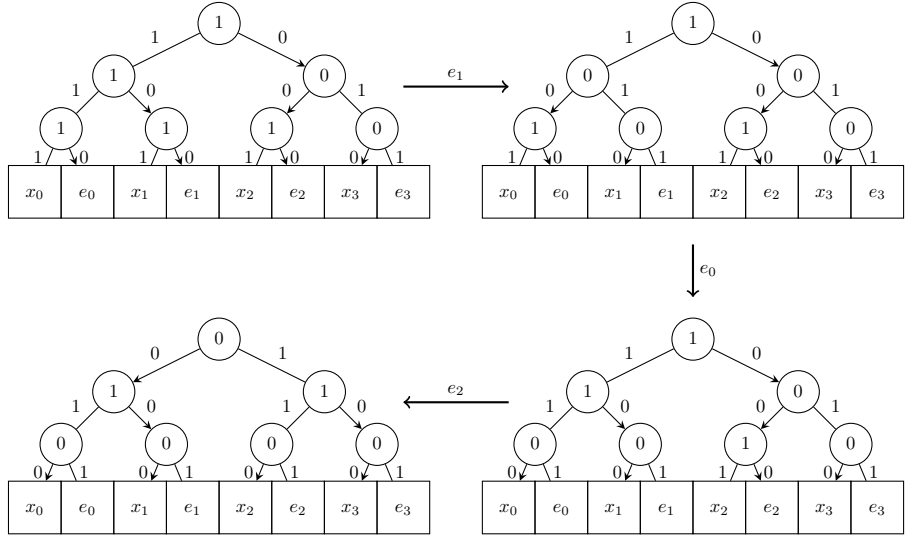

Let $1 \leq i \leq \nRegs$, $b \in \{ \true, \false \}$. Let $\phi_{i,b}$ be the sequence $\pi_0 a_{i,b}$, and $\psi_{i,b}$ the sequence $\pi_i a_{i,b}$.

\begin{definition}\label{def:PLRU_reduction}
We turn the register machine graph into a cache analysis graph as follows.
\begin{itemize}
\item From the cache analysis initial node $I_p$ to the register machine former initial node $I_r$ there is a sequence of accesses $c, e_0, a_{1, \false}, e_1, \dots, a_{\nRegs,\false}, e_{\nRegs}$.
\item Each guard edge $v_i = b$ is replaced by the sequence $\phi_{i,b}$, and each assignment edge  $v_i := b$ by the sequence $\psi_{i,b}$.
\item The cache final node $F_p$ is the same as the register machine final node~$F_r$.
\end{itemize}
\end{definition}

The idea is that any missed guard irremediably removes $c$ from the cache.
The following lemmas are easily proved by symbolically simulating the execution of the gadgets over the cache states:

\begin{lemma}
$\phi_{i,b}$ and $\psi_{i,b}$ map any well-phased but not well-formed state to a well-phased but not well-formed state.
\end{lemma}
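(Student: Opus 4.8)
The statement to prove is that $\phi_{i,b}$ and $\psi_{i,b}$ send any well-phased but not well-formed PLRU state to another well-phased but not well-formed one. Recall that well-phased means the leaves are, left to right, $x_0,e_0,a_{1,b_1},e_1,\dots,a_{\nRegs,b_{\nRegs}},e_{\nRegs}$ with $x_0$ either $c$ or some $a_{j,b'}$, and ``not well-formed'' means precisely that $x_0 \neq c$, i.e. $x_0 = a_{j,b'}$ for some $j,b'$. The plan is to argue that neither gadget ever loads $c$ (since $\phi_{i,b}$ and $\psi_{i,b}$ only access $\pi_0$- or $\pi_i$-blocks, all of which are among the $e_k$, followed by $a_{i,b}$), and that neither gadget ever evicts any leaf from a well-phased state, so the set of cache contents, and in particular the ``$x_0$ is some $a_{j,b'}$, not $c$'' status, is preserved.

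**Key steps.** First I would record the shape of $\phi_{i,b} = \pi_0\, a_{i,b}$ and $\psi_{i,b} = \pi_i\, a_{i,b}$, and note from the preceding lemma on $\pi_k$ that running $\pi_k$ on a well-phased state accesses only blocks already in the cache (the various $e_j$ reached along the path to leaf $k$), hence produces only hits: it changes no cache line contents, and afterwards the tags point at leaf number $k$ (i.e. at $x_0$ for $\pi_0$, at $a_{i,b_i}$ for $\pi_i$). Second, I would analyze the trailing access to $a_{i,b}$ in each case. Since the state is well-phased, the leaf in position $a_{i,\cdot}$ already holds either $a_{i,b}$ or $a_{i,\neg b}$. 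If it holds $a_{i,b}$, the access is a hit and merely adjusts tags away from that leaf — contents unchanged, well-phasedness obviously preserved, and $x_0$ unchanged so still not $c$. If it holds $a_{i,\neg b}$, the access is a miss; the evicted line is the one pointed at by the tags. For $\psi_{i,b}$ the tags point at leaf $i$ (position $a_{i,\cdot}$) by the $\pi_i$ lemma, so $a_{i,\neg b}$ is evicted and replaced by $a_{i,b}$ in the same slot, giving again a well-phased state with the same $x_0 \neq c$. For $\phi_{i,b}$ the tags point at leaf $0$ (position $x_0$) after $\pi_0$, so the miss evicts $x_0 = a_{j,b'}$ and writes $a_{i,b}$ into the $x_0$ slot — the result is still of the well-phased shape with the new $x_0 = a_{i,b}$, still not $c$. (Here I would double-check the ``leftmost empty line'' clause is vacuous, since a well-phased cache is full, all $2\nRegs+2$ leaves occupied.) Third, I would note that the new $a_{i,b}$ written in is not already present elsewhere — in the well-phased state position $i$ held $a_{i,\neg b}$ and no other leaf can hold any $a_{i,\cdot}$ block — so the ``two lines may not contain the same block'' invariant is respected and the resulting configuration is a legitimate PLRU state of well-phased form.

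**Main obstacle.** The routine part is the tag-bookkeeping inside $\pi_k$ and the away-adjustments; the one point needing genuine care is confirming that in the miss case the block being overwritten is exactly the one sitting in the slot we expect (position $x_0$ for $\phi$, position $a_{i,\cdot}$ for $\psi$), which hinges entirely on the $\pi_k$ lemma correctly placing the tags; and confirming that $c$ can never re-enter, which is immediate from inspecting the alphabet used by the gadgets but must be stated. I expect no difficulty beyond making these observations precise, so the proof will be short — essentially ``symbolically simulate $\pi_0 a_{i,b}$ and $\pi_i a_{i,b}$ on $x_0,e_0,a_{1,b_1},e_1,\dots$ and observe the shape is preserved and $c$ is never loaded.''
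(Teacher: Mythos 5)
Your proof is correct and follows exactly the route the paper intends: the paper offers no written argument for this lemma beyond the remark that it is ``easily proved by symbolically simulating the execution of the gadgets over the cache states,'' which is precisely what you carry out (using the $\pi_k$ lemma to locate the tags, then case-splitting on the trailing access). One small imprecision in your case split: when the leaf in position $i$ holds $a_{i,\neg b}$ it is still possible that $x_0 = a_{i,b}$ (well-phasedness only forbids duplicate blocks, so $x_0$ may be any $a_{j,\neg b_j}$), in which case the trailing access to $a_{i,b}$ is a \emph{hit} at leaf $0$ rather than a miss --- but the contents are then unchanged, so the conclusion (still well-phased with $x_0 \neq c$) is unaffected.
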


\begin{lemma}
$\psi_{i,b}$ maps a well-formed state to a well-formed state corresponding to the same Boolean state where register $i$ has been replaced by~$b$.
\end{lemma}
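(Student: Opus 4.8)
The plan is to prove this by a direct symbolic simulation of $\psi_{i,b} = \pi_i\, a_{i,b}$ on a well-formed state, tracking only the leaf contents --- which is legitimate because ``well-formed'' is defined purely in terms of the leaf word, with no constraint on the tag bits. Fix a well-formed state $w$ corresponding to $(b_j)_{1 \le j \le \nRegs}$, so that its leaves, left to right, are $c, e_0, a_{1,b_1}, e_1, \dots, a_{\nRegs,b_{\nRegs}}, e_{\nRegs}$; in the notation of the $\pi_i$ lemma this is the well-phased configuration $x_0, e_0, \dots, x_{\nRegs}, e_{\nRegs}$ with $x_0 = c$ and $x_j = a_{j,b_j}$ for $1 \le j \le \nRegs$.

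First I would invoke the already-established lemma on $\pi_i$: since $w$ is in particular well-phased, running $\pi_i$ on $w$ leaves every leaf content unchanged and ends with the tags pointing at the leaf that currently holds $x_i = a_{i,b_i}$. Next I would record two structural facts about $w$: it is a \emph{full} cache, having exactly $2\nRegs + 2 = \nWays$ leaves, all occupied; and among the blocks of the form $a_{i,\cdot}$ only $a_{i,b_i}$ occurs in $w$ (exactly once, at the leaf $x_i$), while $a_{i,\neg b_i}$ does not occur at all. Consequently, when $a_{i,b}$ is accessed right after $\pi_i$, there is no empty line, so the PLRU rule is in its ``evict the line pointed at by the tags'' branch precisely when $a_{i,b}$ is absent, i.e. precisely when $b \neq b_i$; and when $b = b_i$ the access is a hit landing on the very leaf the tags point at.

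Then I would carry out the two-case analysis. If $b = b_i$: the access to $a_{i,b}$ is a hit, the tags are merely adjusted away from that leaf, and no leaf content changes, so the result is still well-formed and corresponds to $(b_j)_j$ --- which, since $b = b_i$, is exactly $(b_j)_j$ with register $i$ overwritten by $b$. If $b \neq b_i$: the access misses, so the block pointed at by the tags, namely $a_{i,b_i}$, is evicted and replaced by $a_{i,b}$, and the tags are adjusted away from that leaf; every other leaf is untouched, so the new leaf word is $c, e_0, a_{1,b_1}, \dots, a_{i-1,b_{i-1}}, a_{i,b}, a_{i+1,b_{i+1}}, \dots, e_{\nRegs}$, i.e. the well-formed state corresponding to $(b_j)_j$ with register $i$ set to $b$. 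In either case the claimed conclusion holds.

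I do not expect a genuine obstacle: this is a routine simulation. The only steps needing care are (i) confirming that the leaf selected by $\pi_i$ is indeed the one holding $a_{i,b_i}$, which is exactly what the $\pi_i$ lemma delivers, and (ii) the observation that a well-formed cache is full, which is what forces the eviction branch of the PLRU update rule rather than the ``fill the leftmost empty line'' branch. Everything else is immediate from the definition of well-formedness and the PLRU access rule.
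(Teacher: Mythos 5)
Your proof is correct and follows exactly the route the paper intends: the paper dispatches this lemma (together with its neighbours) with the single remark that it is ``easily proved by symbolically simulating the execution of the gadgets over the cache states,'' and your argument is precisely that simulation, with the two points that actually need checking (that $\pi_i$ leaves the tags on the leaf holding $a_{i,b_i}$, and that fullness of the cache forces the eviction branch on a miss) correctly identified.
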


\begin{lemma}
$\phi_{i,b}$ maps a well-formed state corresponding to a Boolean state $(b_i)_{1 \leq i \leq \nRegs}$ to
\begin{itemize}
\item if $b_i=b$, a well-formed state corresponding to the same Boolean state;
\item otherwise, a well-phased but not well-formed state.
\end{itemize}
\end{lemma}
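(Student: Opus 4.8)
The plan is to prove this by symbolically simulating the execution of $\phi_{i,b} = \pi_0\, a_{i,b}$ on a well-formed state, tracking only the leaf contents of the tree; the tag bits are irrelevant to both well-formedness and well-phasedness, and in any case they are completely determined after each access by the ``adjust away'' rule, so they need not be followed in detail.

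First I would unfold the definitions. A well-formed state corresponding to $(b_i)_{1 \leq i \leq \nRegs}$ has leaves, from left to right, $c, e_0, a_{1,b_1}, e_1, \dots, a_{\nRegs,b_{\nRegs}}, e_{\nRegs}$; in particular all $2\nRegs+2$ cache lines are filled, with pairwise distinct blocks. Since a well-formed state is a fortiori well-phased, the lemma on $\pi_0$ applies to it: after running $\pi_0$ the leaf contents are unchanged and the tags point at the leftmost leaf, the one holding $c$. Then I would apply the PLRU rule for the access to $a_{i,b}$, splitting on whether $b_i = b$. If $b_i = b$, the block $a_{i,b}$ already occupies the line that in the well-formed state holds $a_{i,b_i}$, so the access is a hit: only tags are adjusted, the leaf contents are unchanged, and the state is still well-formed corresponding to $(b_i)$. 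If $b_i \neq b$, then $a_{i,b}$ is absent from the cache (the only block of the form $a_{i,\cdot}$ present is $a_{i,\neg b}$) and no line is empty, so the eviction branch of the PLRU rule fires: the line pointed at by the tags, namely the one holding $c$, is evicted and $a_{i,b}$ is written there. The resulting leaves are $a_{i,b}, e_0, a_{1,b_1}, e_1, \dots, a_{\nRegs,b_{\nRegs}}, e_{\nRegs}$, which matches the well-phased pattern with $x_0 = a_{i,b}$ but fails well-formedness precisely because $x_0 \neq c$; this gives the second case of the lemma.

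This is mostly bookkeeping, and I do not expect any genuine obstacle. The two points that deserve explicit mention are: first, one must invoke the fact that a well-formed state fills every one of the $2\nRegs+2$ lines, so that it is the eviction branch of the PLRU rule that applies rather than the ``fill the leftmost empty line'' branch; and second, in the $b_i \neq b$ case one should note that writing $a_{i,b}$ does not violate the ``no repeated block'' invariant, since $a_{i,b}$ was not already in the cache. The role of $\pi_0$ is exactly to guarantee that the line evicted in this case is the $c$ line and no other, which is what makes the outcome well-phased rather than arbitrary.
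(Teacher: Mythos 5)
Your proposal is correct and matches the paper's intent: the paper gives no detailed argument for this lemma, stating only that it "is easily proved by symbolically simulating the execution of the gadgets over the cache states," which is precisely the simulation you carry out (using the $\pi_0$ lemma to park the tags on the $c$ line, then case-splitting the access to $a_{i,b}$ into hit versus eviction of $c$). The only nitpick is your parenthetical claim that the tags are "completely determined after each access" --- only the tags along the accessed path are set, the rest persist --- but this is immaterial since, as you correctly note, well-formedness and well-phasedness depend only on leaf contents.
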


\begin{corollary}
There is an execution of the PLRU cache from $I_p$ to $F_p$ such that $c$ is in the final cache state if and only if there is an execution of the Boolean register machine from $I_r$ to $F_r$.
\end{corollary}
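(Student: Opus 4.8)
The plan is to assemble the corollary from the three preceding lemmas by a routine induction along paths, exploiting that in \autoref{def:PLRU_reduction} each register-machine edge is replaced by a \emph{deterministic} sequence of accesses ($\phi_{i,b}$ for a guard, $\psi_{i,b}$ for an assignment), so that paths in the cache analysis graph from $I_r$ to a node $p$ are in bijection with paths in the Boolean register machine graph from $I_r$ to $p$. First I would check the prologue: on an initially empty $(2\nRegs+2)$-way cache, the sequence $c, e_0, a_{1,\false}, e_1, \dots, a_{\nRegs,\false}, e_{\nRegs}$ consists of $2\nRegs+2$ pairwise distinct blocks, hence is a run of misses each filling the leftmost empty line, so the leaves end up (left to right) $c, e_0, a_{1,\false}, e_1, \dots$, i.e.\ the well-formed state corresponding to the all-$\false$ register state, which is exactly the initial state of the register machine.

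Next I would prove the invariant by induction on path length: after a cache-graph path $\pi$ from $I_r$ to a node $p$, the cache state is always well-phased; moreover it is well-formed and corresponds to a register state $\sigma$ if and only if the corresponding register-machine path is a \emph{valid} execution from $I_r$ (all registers $\false$) to $p$ ending in state $\sigma$. The base case is the prologue computation above. For the inductive step, the last register-machine edge is replaced by $\psi_{i,b}$ or $\phi_{i,b}$, and there are two cases for the state before it. If that state is well-phased but not well-formed, the first PLRU lemma keeps it well-phased but not well-formed, and by induction the prefix was not a valid execution, so neither is the extension. If that state is well-formed corresponding to $\sigma$: for an assignment gadget $\psi_{i,b}$ the second lemma gives a well-formed state corresponding to $\sigma[i\mapsto b]$, matching the (always-enabled) semantics of the edge; for a guard gadget $\phi_{i,b}$ the third lemma gives a well-formed state corresponding to $\sigma$ when $\sigma_i = b$ (the guard fires and leaves the state unchanged) and a well-phased but not well-formed state when $\sigma_i \neq b$ (the register-machine run would abort here).

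Finally I would instantiate $p = F_r = F_p$ and observe that a well-formed state contains $c$ (it is the leftmost leaf), whereas a well-phased state that is not well-formed has an $a_{i,b}$ block there instead and, since none of its other leaves equals $c$, does not contain $c$. Hence there is an execution of the PLRU cache from $I_p$ to $F_p$ ending with $c$ cached iff some cache-graph path from $I_r$ to $F_r$ yields a well-formed state iff, by the invariant, some valid execution of the Boolean register machine reaches $F_r$; the converse direction reads any such valid register-machine execution as a cache-graph path which, again by the invariant, leaves a well-formed state (hence containing $c$) at $F_p$. There is no real obstacle here, as all cache-state manipulation is packaged in the cited lemmas; the only point needing care is the bookkeeping for violated guards, namely that once a guard fails the state leaves the well-formed set permanently (the irreversibility lemma) and therefore can never again be mistaken for an encoding of a reachable register state.
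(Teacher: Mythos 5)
Your proof is correct and follows essentially the same route as the paper: the paper's own proof consists precisely of the key observation in your last paragraph, namely that a well-phased state is well-formed if and only if it contains $c$, with the inductive assembly of the three gadget lemmas left implicit. Your more detailed write-up of the prologue computation and the path-by-path induction is the routine bookkeeping the paper omits.
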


\begin{proof}
A well-phased state is well-formed if and only if it contains~$c$.
\end{proof}

\begin{theorem}
  The exist-hit problem for PLRU caches is NP-complete for acyclic graphs and PSPACE-complete for general graphs.
\end{theorem}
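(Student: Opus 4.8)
The plan is to assemble the theorem from the membership results already proved and the hardness reduction just developed. For membership, the exist-hit problem for PLRU on acyclic graphs is in NP by the generic argument: a path from $I_p$ to $F_p$ has polynomial length, so it may be guessed nondeterministically and the PLRU policy simulated along it in polynomial time. For general graphs, the problem is in PSPACE: a nondeterministic Turing machine can store the current control node and the current cache state (the $\nWays$ line contents plus $\nWays-1$ tag bits, all of polynomial size) and nondeterministically simulate one transition at a time, so the problem is in NPSPACE, hence in PSPACE by Savitch's theorem.

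For hardness, I would invoke the reduction of Definition~\ref{def:PLRU_reduction} together with the chain of lemmas and the corollary immediately preceding: starting from the prologue which installs the well-formed state corresponding to all-false registers, every gadget either preserves well-formedness and faithfully mimics the assignment or a satisfied guard, or produces a well-phased-but-not-well-formed state from which well-formedness can never be recovered; and a well-phased state contains $c$ exactly when it is well-formed. Hence there is an execution reaching $F_p=F_r$ with $c$ in the final cache if and only if the Boolean register machine has an execution from $I_r$ to $F_r$. Since this transformation is computable in polynomial (indeed linear, up to the logarithmic-length $\pi_i$ sequences) time and maps an acyclic register machine to an acyclic cache graph, NP-hardness for acyclic graphs follows from NP-completeness of reachability for acyclic Boolean register machines, and PSPACE-hardness for general graphs follows from PSPACE-completeness of the general reachability problem.

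Combining the two directions gives NP-completeness on acyclic control-flow graphs and PSPACE-completeness in general. The only mild subtlety, which I would note explicitly, is that the reduction targets a $(2\nRegs+2)$-way cache, and $2\nRegs+2$ need not be a power of two; this is harmless since we may pad the register machine with unused registers until $2\nRegs+2$ is a power of two, at only polynomial cost, and the associativity is given in unary so this does not affect the size bound. I do not expect any real obstacle here: all the work is in the preceding lemmas, and the theorem is a bookkeeping assembly of membership plus hardness. The one place to be slightly careful is to make sure acyclicity is preserved by the reduction — which it is, since each instruction edge of the register machine is replaced by a finite acyclic gadget and the prologue is a simple path — so the acyclic and general cases can be read off the same construction.
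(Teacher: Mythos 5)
Your proposal is correct and matches the paper's (largely implicit) argument: the theorem is assembled from the generic membership lemmas (guess-and-check for acyclic graphs, polynomial-space nondeterministic simulation plus Savitch for general graphs) and the reduction of Definition~\ref{def:PLRU_reduction} with its corollary, exactly as you describe. Your remark about padding with unused registers so that $2\nRegs+2$ is a power of two is also the paper's own caveat, so there is nothing to add.
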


\subsection{Exist-Miss Problem}
\label{sec:em_plru}

We use two extra blocks $d$ and $f$.

\begin{definition}
Let $Z$ be the sequence $\pi_{\nRegs} d \pi_{\nRegs} c \pi_0 f$.
\end{definition}

\begin{lemma}
$Z$ turns any well-formed state into a state not containing~$c$.
$Z$ turns any well-phased but not well-formed state into a state containing~$c$.
\end{lemma}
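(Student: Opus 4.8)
The plan is to prove the lemma by symbolic execution of the sequence $Z = \pi_{\nRegs}\, d\, \pi_{\nRegs}\, c\, \pi_0\, f$ over the two kinds of states. Throughout I would use the fact established earlier that $\pi_i$ repositions the tags to point at the $i$-th leaf without touching the contents of any cache line, so all the action is in the three actual block accesses $d$, $c$, $f$, which are all misses (since $d$ and $f$ are fresh, and $c$ is re-accessed only in the middle). Since the cache is full in both cases (a well-phased state has all $2\nRegs+2$ leaves occupied), each of these three misses evicts exactly the block currently pointed at by the tags, namely the block in the leaf selected by the preceding $\pi$ sequence.

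First I would handle the well-formed case. Here the leaves are $c, e_0, a_{1,b_1}, e_1, \dots, a_{\nRegs,b_{\nRegs}}, e_{\nRegs}$. Running $\pi_{\nRegs}$ points the tags at leaf $\nRegs$ (say, the $a_{\nRegs,b_{\nRegs}}$ slot — one must fix the indexing convention consistently with the earlier lemma), then $d$ evicts that block and is installed there, with tags adjusted away. Running $\pi_{\nRegs}$ again points the tags back at the same leaf — crucially, this is now the leaf holding $d$ — so the subsequent access to $c$ is a miss that evicts $d$ and puts $c$ in that slot. Then $\pi_0$ points the tags at leaf $0$, which in a well-formed state holds $c$; so the access to $f$ evicts $c$. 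Hence the final state does not contain $c$. The key point to verify carefully is that the $c$ we just inserted at leaf $\nRegs$ is distinct from the leaf-$0$ slot, so that the $f$-access genuinely removes $c$ rather than some copy — but PLRU forbids two lines holding the same block, so after the $c$-access there is exactly one copy of $c$, at leaf $\nRegs$, and the original leaf-$0$ occupant has been overwritten. Wait — I need to re-examine: if leaf $0$ originally held $c$ and we insert $c$ at leaf $\nRegs$, PLRU would never allow two copies; so in fact the $c$-access must be a \emph{hit} on the leaf-$0$ copy, not a miss. This is exactly the subtlety the proof must get right, and it is where I expect the main obstacle to lie: determining, in each case, whether the access to $c$ is a hit (when $c$ is still present, i.e.\ the well-formed case) or a miss (when $c$ has been evicted, i.e.\ the well-phased-but-not-well-formed case), and tracking how the tag adjustments after that access interact with the final $\pi_0 f$.

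For the well-phased but not well-formed case, the leaves are $x_0, e_0, a_{1,b_1}, e_1, \dots$ with $x_0$ some $a_{i,b}$ rather than $c$, so $c$ is \emph{absent} from the cache. Now $\pi_{\nRegs}$ points at leaf $\nRegs$, $d$ evicts whatever is there, $\pi_{\nRegs}$ again points at the $d$-leaf, and the access to $c$ is now a genuine miss that evicts $d$ and installs $c$ at leaf $\nRegs$. Then $\pi_0 f$ points the tags at leaf $0$ and evicts $x_0$ (which is not $c$), leaving $c$ safely in leaf $\nRegs$. Hence the final state contains $c$. The asymmetry between the two cases is precisely that in the well-formed case the middle $c$-access is a hit (rejuvenating the existing leaf-$0$ copy and, I would need to check, leaving the tags in a state from which $\pi_0 f$ still evicts $c$), whereas in the other case it is a miss that relocates $c$ to a safe leaf.

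So the real work is: (i) pin down the leaf-indexing convention for $\pi_i$ inherited from the earlier lemma; (ii) in the well-formed branch, observe the $c$-access is a hit on leaf $0$, track the resulting tag adjustment, and confirm that $\pi_0$ followed by the miss $f$ still evicts $c$ from leaf $0$; (iii) in the well-phased-not-well-formed branch, observe the $c$-access is a miss that lands at leaf $\nRegs$, and confirm $\pi_0 f$ evicts only the leaf-$0$ block, which is not $c$. I would present this as a short computation, citing the earlier $\pi_i$ lemma for the tag-repositioning and the PLRU semantics (eviction of the pointed-at line, no duplicate blocks) for everything else, and I would double-check the one genuinely delicate point — that after a hit on $c$ at leaf $0$, the tags adjusted away from leaf $0$ do not prevent the later $\pi_0$ from re-selecting leaf $0$ — which it cannot, since $\pi_0$ by construction forces the tags to point at leaf $0$ regardless of their prior state.
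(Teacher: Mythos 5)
Your proposal is correct and follows essentially the same route as the paper: symbolically execute $Z$, using the $\pi_i$ lemma to reposition tags, and observe that the middle access to $c$ is a hit (leaving $c$ at leaf $0$ to be evicted by $\pi_0 f$) in the well-formed case but a miss (relocating $c$ to the $d$-slot, out of reach of $\pi_0 f$) otherwise. Your initial claim that all three of $d$, $c$, $f$ are misses is wrong, but you catch and correct it yourself, and the hit/miss asymmetry you end up isolating is exactly the crux of the paper's argument.
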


\begin{proof}
Consider a well-formed state $c e_0 a_{1,b_1} e_1 \dots a_{\nRegs,b_{\nRegs}} e_{\nRegs}$.
$Z$ replaces $a_{\nRegs,b_{\nRegs}}$ by $d$; then the access to $c$ does not change the line contents since $c$ is in the cache, and $c$ is replaced by~$f$.

Consider a well-phased but not well-formed state $x_0 e_0 a_{1,b_1} e_1 \dots a_{\nRegs,b_{\nRegs}} e_{\nRegs}$ where $x_0 \neq c$.
$Z$ replaces $a_{\nRegs,b_{\nRegs}}$ by $d$; then the access to $c$ replaces $d$ by $c$, and $x_0$ is replaced by~$f$.
\end{proof}

\begin{definition}\label{def:PLRU_reduction_EM}
We turn the register machine graph into a cache analysis graph in the same manner as in Definition~\ref{def:PLRU_reduction}, but between $F_r$ and $F_p$ we insert $Z$ as epilogue.
\end{definition}

\begin{lemma}
There is an execution of the PLRU cache from $I_p$ to $F_p$ such that $c$ is not in the final cache state if and only if there is an execution of the Boolean register machine from $I_r$ to $F_r$.
\end{lemma}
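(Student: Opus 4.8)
The plan is to chain together the lemmas and corollaries already established for the exist-hit reduction, adding only the analysis of the new epilogue $Z$. Recall from Definition~\ref{def:PLRU_reduction} and its accompanying lemmas that any path from $I_p$ to $F_r$ either simulates a valid run of the Boolean register machine (in which case it lands in a well-formed state corresponding to the reached register state) or takes a wrong branch at some guard gadget, in which case it lands in a well-phased but not well-formed state. This dichotomy is preserved by all subsequent gadgets; in particular, a well-phased but not well-formed state can never become well-formed again, since no gadget in Definition~\ref{def:PLRU_reduction} reinserts $c$. So the set of states reachable at $F_r$ splits cleanly into ``well-formed, witnessing a genuine register-machine run'' and ``well-phased but not well-formed, a dead end''.

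Next I would invoke the new lemma about $Z$: it maps every well-formed state to a state \emph{not} containing $c$, and every well-phased but not well-formed state to a state \emph{containing} $c$. Composing this with the dichotomy above, at $F_p$ (which sits at the far end of $Z$) there is an execution producing a cache state without $c$ if and only if there is an execution reaching $F_r$ in a well-formed state, which, by the corollary to Definition~\ref{def:PLRU_reduction}, holds if and only if the Boolean register machine has an execution from $I_r$ to $F_r$. The ``only if'' direction uses that the well-phased-but-not-well-formed states, being the only alternative, always retain $c$ after $Z$ and hence cannot witness a miss; the ``if'' direction just runs the simulating path and then $Z$.

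Finally I would note that the reduction is polynomial-time and preserves acyclicity exactly as in the exist-hit case: $Z$ has length logarithmic in $\nRegs$ (each $\pi_i$ is logarithmic, and $Z$ is a fixed-length concatenation of four such blocks plus three single accesses), so the size blow-up is the same as before. The main obstacle — and the only place where real work is hidden — is the proof of the $Z$ lemma itself, i.e. checking by symbolic simulation of the PLRU tag updates that $\pi_{\nRegs} d \pi_{\nRegs} c \pi_0 f$ evicts $c$ from a well-formed state but preserves $c$ (by first evicting $d$ or the stray $x_0$ instead) from a well-phased state whose first leaf is some $a_{i,b}$ rather than $c$; but that lemma is already stated and proved in the excerpt, so here it suffices to cite it. Everything else is bookkeeping with the earlier corollaries.
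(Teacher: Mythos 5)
Your proof is correct and follows the same route the paper intends: the paper leaves this lemma's proof implicit, relying on exactly the chain you describe (prologue yields a well-formed state, the gadget lemmas preserve the well-formed versus well-phased-but-not-well-formed dichotomy, and the $Z$ lemma inverts the presence of $c$ so that a miss on $c$ at $F_p$ witnesses precisely the well-formed case). Nothing is missing.
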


\begin{theorem}
  The exist-miss problem for PLRU caches is NP-complete for acyclic graphs and PSPACE-complete for general graphs.
\end{theorem}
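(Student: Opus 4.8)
The plan is to combine the reduction just constructed (Definition~\ref{def:PLRU_reduction_EM}) with a membership argument entirely analogous to the one used for FIFO.

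For membership I would first note that a PLRU cache state is completely described by the contents of its $\nWays$ lines together with the $\nWays-1$ tag bits, hence is of polynomial size, and that a single PLRU access is computable in polynomial time. For acyclic control-flow graphs, any path from $I_p$ to $F_p$ has polynomial length, so one guesses such a path nondeterministically, simulates the policy along it, and checks at the end whether $c$ is absent; this places exist-miss in NP. For general graphs, a nondeterministic Turing machine simulating the policy needs only polynomial space to store the current control node and the current cache state, so exist-miss lies in nondeterministic polynomial space, hence in PSPACE by Savitch's theorem.

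For hardness I would invoke the lemma immediately preceding the theorem: the construction of Definition~\ref{def:PLRU_reduction_EM} produces a PLRU exist-miss instance (is there an execution from $I_p$ to $F_p$ ending with $c$ \emph{not} in the cache?) whose answer is positive if and only if the underlying Boolean register machine has a valid run from $I_r$ to $F_r$. The construction is computable in polynomial time --- each register-machine edge is replaced by a gadget $\phi_{i,b}$ or $\psi_{i,b}$ of length logarithmic in $\nRegs$, together with a fixed prologue and the epilogue $Z$ --- and it introduces no cycles, so an acyclic register machine is mapped to an acyclic cache analysis graph. Since reachability for Boolean register machines is NP-complete for acyclic control flow and PSPACE-complete in general, exist-miss for PLRU is NP-hard for acyclic graphs and PSPACE-hard for general graphs. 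Combined with the membership results, this yields NP-completeness for acyclic graphs and PSPACE-completeness for general graphs.

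The only genuinely delicate point is already discharged by the lemma on the epilogue $Z$, namely that $Z$ maps every well-formed state to one lacking $c$ and every well-phased-but-not-well-formed state to one containing $c$, so that ``miss on $c$ at $F_p$'' exactly captures ``the simulated register-machine run was valid and reached $F_r$''. I therefore expect no remaining obstacle: the proof is essentially the exist-hit argument of Section~\ref{sec:eh_plru} with the roles of the two classes of final states interchanged by $Z$, and it amounts only to citing the preceding lemma and the complexity of Boolean register machine reachability.
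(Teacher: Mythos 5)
Your proof is correct and follows essentially the same route as the paper: the paper leaves this theorem's justification implicit, relying on exactly the ingredients you cite --- NP/PSPACE membership by guessing a path and simulating the policy (plus Savitch's theorem for the cyclic case), and hardness via the polynomial-time, acyclicity-preserving reduction of Definition~\ref{def:PLRU_reduction_EM} combined with the lemma on the epilogue $Z$ and the complexity of Boolean register machine reachability. No gaps.
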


\subsection{Extension to an arbitrary starting cache}
\citet[Th.~12]{DBLP:journals/rts/ReinekeGBW07} proved the following result on PLRU caches:

\begin{theorem}\label{th:PLRU_evict_all}
  It takes at most $\frac{k}{2} \log_2 k +1$  pairwise different accesses to evict all entries from a $k$-way set-associative PLRU cache set. Again, this is a tight bound.
\end{theorem}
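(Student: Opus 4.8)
The plan is to recast the statement in the form that is actually used: for $k = 2^n$, let $\mathrm{evict}(k)$ be the least $N$ such that, from \emph{every} cache state and for \emph{every} sequence of $N$ pairwise distinct memory blocks avoiding a fixed block $z$ (the sequence being allowed to hit the other cached blocks, thereby steering the tags), the block $z$ is absent afterwards. Since evicting every entry is the same as evicting the single worst one, the claim becomes $\mathrm{evict}(2^n) = n\,2^{n-1}+1$. The only structural fact I would extract from the tree-PLRU definition is: accessing \emph{any} line $\ell$ re-orients, away from $\ell$, exactly the tag bits on the root-to-$\ell$ path, and changes nothing else. Consequently, if $\ell$ holds $z$ and $t_1,\dots,t_n$ are the bits on $z$'s path, an access to a line whose path first diverges from $\ell$'s at depth $i$ turns $t_i$ \emph{toward} $z$ while turning $t_1,\dots,t_{i-1}$ \emph{away} from $z$ and leaving $t_{i+1},\dots,t_n$ unchanged; and $z$ is evicted exactly when a \emph{miss} occurs while all of $t_1,\dots,t_n$ point toward $z$.

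For the upper bound I would induct on $n$; the base case $n\le 1$ (a $2$-way PLRU set is a $2$-way LRU set) gives $\mathrm{evict}(1)=1$, $\mathrm{evict}(2)=2$ directly. For the step, cut the tree at the root into the half $T_L$ containing $z$ and the half $T_R$, and classify each access as a $T_L$- or $T_R$-access according to the line it touches (for a miss, the line the tags point at). The three facts to establish are: \textbf{(a)} the subsequence of $T_L$-accesses drives $T_L$ exactly as a stand-alone sequence of pairwise distinct accesses on a $2^{n-1}$-way PLRU set would, so by the induction hypothesis at most $\mathrm{evict}(2^{n-1})-1$ of them can occur while $z$ survives; \textbf{(b)} every access to $T_R$ re-orients the root bit toward $T_L$, so the root can point at $T_R$ only at the very start or just after a $T_L$-access; hence between any two $T_R$-\emph{misses} lies a $T_L$-access, bounding the number of $T_R$-misses by (number of $T_L$-accesses)$\,+\,1$; \textbf{(c)} a $T_R$-\emph{hit} can only land on a block that was in $T_R$ initially and not yet accessed --- any block brought into $T_R$ by a miss was, by definition, accessed by that miss, so by pairwise distinctness it can never be accessed again --- so there are at most $|T_R| = 2^{n-1}$ $T_R$-hits in total. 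Adding (a)--(c), a run keeping $z$ alive has length at most $2\,\mathrm{evict}(2^{n-1}) + 2^{n-1} + O(1)$, and a careful treatment of the first and last inter-access gaps (in particular whether the initial root points at $T_R$) pins the constant so that $\mathrm{evict}(2^n) = 2\,\mathrm{evict}(2^{n-1}) + 2^{n-1} - 1$; unrolling this from $\mathrm{evict}(1)=1$ yields $\mathrm{evict}(2^n) = n\,2^{n-1}+1 = \tfrac{k}{2}\log_2 k + 1$.

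For tightness I would build, by the same recursion, an initial state and a distinct-access sequence of length $n\,2^{n-1}$ keeping $z$ alive: start with all tags pointing away from $z$; in an ``outer layer'' spend $2^{n-1}$ accesses exhausting the original contents of $T_R$ as pure wasted hits (the root bit toggling while $z$'s second-level bit never moves toward $z$); once the distinct blocks of $T_R$ run out one is forced into a $T_L$-access, which is played as an optimal move of the recursively-given $2^{n-1}$-way adversary for $T_L$, and in the gap right after it one inserts exactly one $T_R$-miss (which refreshes $T_R$ but, being an access to the inserted block, affords no further hit); the bookkeeping gives survival $2^{n-1} + 2\bigl(\mathrm{evict}(2^{n-1})-1\bigr) = n\,2^{n-1}$. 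I would sanity-check $n=1$ (survive $1$) and $n=2$ (survive $4$) by hand.

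The step I expect to be the crux is \textbf{(c)}, together with its mirror image in the lower-bound construction: it is precisely the fact that blocks inserted by misses are ``spent'' and cannot be recycled as hits that collapses the problem to the tight $\Theta(k\log k)$ bound. Without it one can only charge a $T_L$-access against \emph{two} $T_R$-accesses instead of roughly one, giving the useless recurrence $\mathrm{evict}(2^n)\le 3\,\mathrm{evict}(2^{n-1})+O(2^n)$, a bound polynomial but superlinear in $k$. The secondary nuisance is nailing the additive ``$-1$'', which is just the edge-case analysis of the first and last gaps and of the starting orientation of the root bit, and then making the lower-bound sequence meet the upper bound exactly.
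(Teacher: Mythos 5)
This theorem is not proved in the paper at all: it is quoted verbatim from \citet[Th.~12]{DBLP:journals/rts/ReinekeGBW07}, and the paper only uses the consequence that a fixed sequence of $\frac{k}{2}\log_2 k + 1$ fresh blocks flushes any initial state. So there is no ``paper proof'' to compare against; your proposal is a from-scratch argument, and on inspection its skeleton is sound. The structural facts, the root-level decomposition, and the three counts (a)--(c) all check out: the $T_L$-subsequence is indeed a faithful stand-alone PLRU run on a $2^{n-1}$-way set (since $T_R$-accesses never touch $T_L$'s lines or internal tags), every $T_R$-miss needs the root to point at $T_R$, which happens only initially or right after a $T_L$-access, and pairwise distinctness forbids hitting any block brought in by a miss, so $T_R$-hits are confined to the $2^{n-1}$ initial residents of $T_R$. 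The lower-bound construction also works and I verified it by hand for $k=2,4$.

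The one place where the outline stops short of a proof is exactly the spot you flag as a ``secondary nuisance,'' and I would push back on calling it secondary: with $L \le \mathrm{evict}(2^{n-1})-1$, $M \le L+1$ and $H \le 2^{n-1}$ taken independently, the recurrence is $E_n \le 2E_{n-1} + 2^{n-1}$, whose unrolling from $E_0=1$ gives $\frac{k}{2}\log_2 k + k$, not $\frac{k}{2}\log_2 k + 1$ --- the stray $+1$ doubles at every level rather than staying a constant. What saves the argument is a genuine trade-off, not mere edge-case bookkeeping: the $+1$ in $M$ is available only if the very first access of the whole run is a $T_R$-miss (any earlier access, of either kind, re-aims the root), and that first miss necessarily evicts a still-unaccessed initial resident of $T_R$, forcing $H \le 2^{n-1}-1$. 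Hence $M + H \le L + 2^{n-1}$ in all cases, which yields the exact recurrence $S_n = 2S_{n-1} + 2^{n-1}$ for the maximal surviving length and hence $\mathrm{evict}(2^n)=n2^{n-1}+1$. With that one lemma made explicit (and the tacit assumption that the initial cache set is full, so that a miss always evicts the line pointed at by the tags), your proof is complete and matches the cited bound, including tightness.
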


More precisely, they prove that there is a sequence of accesses of length $\frac{k}{2} \log_2 k +1$ such that after executing this sequence over a cache with arbitrary initial state, the cache contains only elements from the sequence.

\begin{lemma}
  The exist-hit and exist-miss problems for an empty starting PLRU cache state are reduced, in linear time, to the same kind of problem for an arbitrary starting cache state, with the same associativity.
\end{lemma}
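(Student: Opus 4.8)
The plan is to mimic the argument already used for FIFO in Lemma~\ref{lemma:FIFO_arbitrary_start}: reduce the empty‑initial‑state problem to the arbitrary‑initial‑state problem by prepending a fixed ``flushing'' prologue that, regardless of the (arbitrary) starting contents, leaves the cache holding only freshly introduced blocks, none of which ever occur in the original instance. Concretely, let $\Sigma$ be the block alphabet of the given PLRU instance and $\nWays$ its associativity; introduce $\frac{\nWays}{2}\log_2 \nWays + 1$ brand‑new blocks $e_1,\dots,e_{\frac{\nWays}{2}\log_2 \nWays+1}\notin\Sigma$. By \autoref{th:PLRU_evict_all} (Reineke et al., Th.~12), accessing these new blocks in sequence evicts every original entry: after the prologue the cache state contains only blocks from $\{e_1,\dots\}$, and in particular contains none of the blocks appearing in the original instance.

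The key steps, in order, are: first, build the new instance $P'$ by prepending the linear‑length prologue $e_1 \dots e_{\frac{\nWays}{2}\log_2 \nWays+1}$ to the control‑flow graph of the original instance $P$, keeping the same query block, same associativity, and same designated start/final nodes (now the start of $P$ becomes an interior node reached after the prologue). Second, observe that after the prologue, the PLRU state, though arbitrary in its tag bits, contains only the $e_j$ blocks; since none of these are in $\Sigma$ and PLRU never ``resurrects'' an evicted block, for the purposes of every subsequent access made by $P$ these lines behave exactly like occupied lines holding irrelevant filler — equivalently, any execution of $P$ from an empty cache is matched, access‑for‑access, by an execution of $P'$ from some arbitrary initial state, and conversely every execution of $P'$ from an arbitrary start, after the prologue, reaches a state that agrees with a reachable state of $P$ from empty on the membership status of every block in $\Sigma$. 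Third, conclude that exist‑hit (resp.\ exist‑miss) holds for $P$ from the empty cache if and only if it holds for $P'$ from an arbitrary cache, and that the reduction is computable in linear time and preserves acyclicity. Combining with the already‑established NP‑completeness (acyclic) and PSPACE‑completeness (general) of the empty‑start problems gives the corresponding hardness for the arbitrary‑start versions; membership follows from the same nondeterministic guess‑and‑simulate / Savitch arguments as before, now simply guessing the initial state as well.

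The one subtle point — the ``main obstacle'' such as it is — is verifying that after the flushing prologue the \emph{entire} behaviour of $P$ is unaffected by the leftover $e_j$ blocks and by whatever tag configuration they induce. This is not quite as immediate as for FIFO, because PLRU's replacement decision depends on tag bits, not merely on which lines are occupied; one must check that having the $\nWays$ lines filled with blocks outside $\Sigma$ makes the cache behave, on any $\Sigma$‑access sequence, like a full cache whose concrete resident blocks happen never to be touched again — which is exactly the situation, so the eviction choices made during $P$ are determined solely by the tag dynamics driven by the $\Sigma$‑accesses, identically to the empty‑start run of $P$ after \emph{its} first $\nWays$ misses. Making this correspondence precise (a straightforward bisimulation between configurations of $P$‑from‑empty and $P'$‑from‑arbitrary, matching states that coincide on the location of each $\Sigma$‑block and on all tag bits) is the only thing that needs care; everything else is bookkeeping.

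\begin{proof}
  Let $\Sigma$ be the alphabet of blocks in the original PLRU instance $P$ and $\nWays$ its associativity. By \autoref{th:PLRU_evict_all} there is a sequence of $\frac{\nWays}{2}\log_2 \nWays+1$ pairwise distinct accesses such that, after executing it over a PLRU cache with arbitrary initial state, the cache contains only blocks from that sequence. Pick $\frac{\nWays}{2}\log_2 \nWays+1$ fresh blocks not occurring in $\Sigma$, and let this sequence be the prologue; prepend it to the control‑flow graph of $P$ to obtain $P'$, keeping the same query block, associativity, and final node, and letting the former start node of $P$ become the node reached immediately after the prologue. This transformation is computable in linear time and preserves acyclicity.

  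After executing the prologue, all $\nWays$ cache lines hold fresh blocks outside $\Sigma$, in some order and with some tag configuration. Since PLRU never reinserts an evicted block, none of these fresh blocks is ever accessed again during the execution of $P$; hence each such line behaves, with respect to every subsequent $\Sigma$‑access, exactly like an occupied line whose resident block is never touched. Consequently the replacement choices made while executing $P$ are governed entirely by the tag dynamics driven by the $\Sigma$‑accesses, just as in an execution of $P$ from an empty cache once its first $\nWays$ lines have been filled. This yields a bisimulation between configurations of ``$P$ from empty'' and ``$P'$ from an arbitrary state, past the prologue'', matching configurations that agree on the location of every $\Sigma$‑block and on all $\nWays-1$ tag bits: any execution of $P$ from the empty cache lifts to an execution of $P'$ from some (indeed every) arbitrary initial state, and conversely.

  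Therefore exist‑hit (resp.\ exist‑miss) holds for $P$ from the empty cache if and only if it holds for $P'$ from an arbitrary cache. Combined with the NP‑completeness for acyclic graphs and PSPACE‑completeness for general graphs of the empty‑start problems established in \S\ref{sec:eh_plru} and \S\ref{sec:em_plru}, this gives the analogous hardness for the arbitrary‑start versions; membership in NP (resp.\ PSPACE) follows from the guess‑and‑simulate (resp.\ Savitch) arguments of \autoref{sec:fifo_fundamental_properties}, now additionally guessing the initial cache state.
\end{proof}
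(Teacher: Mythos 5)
Your overall strategy is exactly the paper's: prepend the flushing sequence of $\frac{\nWays}{2}\log_2 \nWays+1$ fresh blocks from \autoref{th:PLRU_evict_all} and argue that the remainder of the execution behaves as if started from an empty cache. However, the justification you give for the key step --- a bisimulation matching configurations of ``$P$ from empty'' and ``$P'$ past the prologue'' that agree on the location of every $\Sigma$-block and on all tag bits --- is false as stated. PLRU fills an \emph{empty} line by the ``leftmost empty line'' rule, whereas after the prologue the cache is \emph{full} of fresh blocks, so the first $\nWays$ $\Sigma$-misses are placed at the tag-pointed leaves instead. These two placement orders are genuinely incompatible: consecutive tag-driven misses alternate between the two root subtrees, so for $\nWays \geq 4$ they can never reproduce the left-to-right order, the tree positions of the $\Sigma$-blocks end up permuted differently, and since eviction depends on tree position the runs diverge. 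Concretely, with $\nWays=4$ the access sequence $a\,b\,c\,d\,b\,x$ run from the empty cache ends with contents $a,b,x,d$ (so $a$ is a hit afterwards), while run from a full cache of never-reaccessed blocks whose tags initially point at line $0$ it places $a,b,c,d$ at lines $0,2,1,3$ and ends with contents $x,c,b,d$ (so $a$ has been evicted). The configurations you propose to match therefore do not stay matched, and the claim that ``any execution of $P$ from the empty cache lifts to an execution of $P'$ from every arbitrary initial state, and conversely'' fails.

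In fairness, the paper's own proof asserts the same equivalence in a single sentence (``the rest of the execution \dots will behave as though it started from an empty cache''), so the gap is inherited rather than introduced; but you explicitly identified this as the one point needing care and then resolved it with an argument that does not hold. Closing it requires something more: for instance, showing that the \emph{existential} questions (quantified over paths and, on the $P'$ side, over the initial state $\sigma$) still coincide even though individual trajectories do not, or reworking the hardness constructions of \S\ref{sec:eh_plru}--\S\ref{sec:em_plru} so that their correctness is insensitive to the permutation of leaves produced by a tag-driven initial fill. The purely combinatorial part of your write-up (choice of the flushing sequence, linear size, preservation of acyclicity, the membership arguments) is fine.
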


\begin{proof}
  Same proof as \autoref{lemma:FIFO_arbitrary_start}, except we need a sequence of $M=\frac{\nWays}{2} \log_2 \nWays + 1$ new blocks.

  Let $\Sigma$ be the alphabet of blocks in the problem and $\nWays$ its associativity.
  Let $e_1,\dots,e_M$ be new blocks not in~$\Sigma$; after accessing them in a sequence as constructed in Theorem~\ref{th:PLRU_evict_all}, there is no longer any block from $\Sigma$ in the cache.
  Prepend this sequence as a prologue to the cache problem; then the rest of the execution of the cache problem will behave as though it started from an empty cache.
\end{proof}

\begin{corollary}
  The exist-hit and exist-miss problems for FIFO caches with arbitrary starting state is NP-complete for acyclic graphs and PSPACE-complete for general graphs.
\end{corollary}


\section{NMRU}
\label{sec:NMRU}
Other forms of ``pseudo-LRU'' schemes have been proposed than the one discussed in \autoref{sec:PLRU}.
One of them, due to~\citet{MRU_patent} is based on the use of ``most recently used'' bits.
It is thus sometimes referred to as the ``not most recently used'' (NMRU) policy, or ``PLRU-m''~\cite{Al-Zoubi:2004:PEC:986537.986601}.
Confusingly, some literature~\cite{Reineke_PhD} also refers to this policy as ``MRU'' despite the fact that in this policy, it is \emph{not} the most recently used data block that is evicted first.

NMRU is used in the Intel Nehalem architecture, among others.

\subsection{NMRU caches}
\begin{definition}
The state of an $\nWays$-way NMRU cache is a sequence of at most $\nWays$ memory blocks $\alpha_i$, each tagged by a $0/1$ ``MRU-bit'' $r_i$ saying whether the associated block is to be considered not recently used ($0$) or recently used ($1$),
denoted by $\alpha_1^{r_1} \dots \alpha_{\nWays}^{r_\nWays}$.

An access to a block in the cache, a \emph{hit}, results in the associated MRU-bit being set to~$1$. If there were already $\nWays-1$ MRU-bits equal to $1$, then all the other MRU-bits are set to~$0$.

An access to a block $a$ not in the cache, a \emph{miss}, results in:
\begin{itemize}
\item if the cache is not full (number of blocks less than $\nWays$), then $a^1$ is appended to the sequence
\item if the cache is full (number of blocks equal to $\nWays$), then the leftmost (least index~$i$) block with associated MRU-bit $0$ is replaced by~$a^1$.
If there were already $\nWays-1$ MRU-bits equal to $1$, then all the other MRU-bits are set to~$0$.
\end{itemize}
\end{definition}

\begin{remark}
This definition is correct because the following invariant is maintained: either the cache is not full, or it is full but at least one MRU-bit is zero.
\end{remark}

\begin{example}
Assume $\nWays=4$. If the cache contains $a^0 b^0 c^0$, then an access to $d$ yields $a^0 b^0 c^0 d^1$ since the cache was not full. If $a$ is then accessed, the state becomes $a^1 b^0 c^0 d^1$. If $e$ is then accessed, the state becomes $a^1 e^1 c^0 d^1$ since $b$ was the leftmost block with a zero MRU-bit.
If $f$ is then accessed, then the state becomes $a^0 e^0 f^1 d^0$.
\end{example}

\subsection{Reduction to Exist-Hit}

We reduce the reachability problem for the register machine to the exist-hit problem for the NMRU cache as follows.
The associativity of the cache is chosen as $\nWays = 2\nRegs+3$.
The alphabet of the cache blocks is
$\{ (a_{i,b})_{1 \leq i \leq \nRegs,b \in \{\false,\true\}}\} \cup
\{ (e_i)_{1 \leq i \leq \nRegs} \} \cup
\{ (c_i)_{1 \leq i \leq \nRegs} \} \cup
\{ d \} \cup
\{ g_0, g_1 \}$.

The register state $v_1,\dots,v_{\nRegs}$ of the register machine is to be encoded as the NMRU state
\begin{equation}
  e_1^0 \dots e_\nRegs^0 d^0 a_{1,v_1} \dots a_{\nRegs,v_\nRegs}^0 g_0^0 g_1^1
\end{equation}
where the exponent (0 or 1) is the MRU-bit associated with the block.

\begin{definition}\label{def:NMRU_reduction}
We turn the register machine graph into a cache analysis graph as follows.
\begin{itemize}
\item From the cache analysis initial node $I_f$ to the register machine former initial node $I_r$ there is the prologue: the sequence of accesses $e_1 \dots  e_\nRegs d a_{1,\false} \dots a_{\nRegs,\false} g_0 g_1$.

\item Each guard edge $v_i = b$ is replaced by the gadget $\phi_{i,b} g_0 \phi_{i,b} g_1$, where $\phi_{i,b}$ is
  \begin{equation}
    \begin{tikzpicture}[node distance=3em,->]
      \node (q0) {start};
      \node (qd) [right of=q0] {};
      \node (q1) [right of=qd] {};
      \node (qim2) [right of=q1] {};
      \node (qim1) [right of=qim2] {};
      \node (qi) [right of=qim1] {};
      \node (qip1) [right of=qi] {};
      \node (qRm1) [right of=qip1] {};
      \node (qR) [right of=qRm1] {};
      \path (q0) edge node[above] {$d$} (qd);
      \path (qd) edge [bend left] node[above] {$a_{1,\false}$} (q1);
      \path (qd) edge [bend right] node[below] {$a_{1,\true}$} (q1);
      \path (q1) edge [dotted, bend left] (qim2);
      \path (q1) edge [dotted, bend right] (qim2);
      \path (qim2) edge [bend left] node[above] {$a_{i-1,\false}$} (qim1);
      \path (qim2) edge [bend right] node[below] {$a_{i-1,\true}$} (qim1);
      \path (qim1) edge node[above] {$a_{i,b}$} (qi);
      \path (qi) edge [bend left] node[above] {$a_{i+1,\false}$} (qip1);
      \path (qi) edge [bend right] node[below] {$a_{i+1,\true}$} (qip1);
      \path (qip1) edge [dotted, bend left] (qRm1);
      \path (qip1) edge [dotted, bend right] (qRm1);
      \path (qRm1) edge [bend left] node[above] {$a_{\nRegs,\false}$} (qR);
      \path (qRm1) edge [bend right] node[below] {$a_{\nRegs,\true}$} (qR);
      \node (qb1) [right of=qR] {};
      \node (qbRm1) [right of=qb1] {};
      \node (qbR) [right of=qbRm1] {end};
      \path (qR) edge node[above] {$e_1$} (qb1);
      \path (qb1) edge [dotted] (qbRm1);
      \path (qbRm1) edge node[above] {$e_\nRegs$} (qbR);
    \end{tikzpicture}
  \end{equation}

\item Each assignment edge $v_i := b$ is replaced by the gadget $\psi_{i,b} g_0 \psi_{i,b} g_1$, where $\psi_{i,b}$ is
  \begin{equation}
    \begin{tikzpicture}[node distance=3em,->]
      \node (q0) {start};
      \node (qd) [right of=q0] {};
      \node (q1) [right of=qd] {};
      \node (qim2) [right of=q1] {};
      \node (qim1) [right of=qim2] {};
      \node (qip1) [right of=qim1] {};
      \node (qRm1) [right of=qip1] {};
      \node (qR) [right of=qRm1] {};
      \path (q0) edge node[above] {$d$} (qd);
      \path (qd) edge [bend left] node[above] {$a_{1,\false}$} (q1);
      \path (qd) edge [bend right] node[below] {$a_{1,\true}$} (q1);
      \path (q1) edge [dotted, bend left] (qim2);
      \path (q1) edge [dotted, bend right] (qim2);
      \path (qim2) edge [bend left] node[above] {$a_{i-1,\false}$} (qim1);
      \path (qim2) edge [bend right] node[below] {$a_{i-1,\true}$} (qim1);
      \path (qim1) edge [bend left] node[above] {$a_{i+1,\false}$} (qip1);
      \path (qim1) edge [bend right] node[below] {$a_{i+1,\true}$} (qip1);
      \path (qip1) edge [dotted, bend left] (qRm1);
      \path (qip1) edge [dotted, bend right] (qRm1);
      \path (qRm1) edge [bend left] node[above] {$a_{\nRegs,\false}$} (qR);
      \path (qRm1) edge [bend right] node[below] {$a_{\nRegs,\true}$} (qR);
      \node (qb1) [right of=qR] {};
      \node (qbRm1) [right of=qb1] {};
      \node (qbR) [right of=qbRm1] {};
      \node (qi) [right of=qbR] {end};
      \path (qR) edge node[above] {$e_1$} (qb1);
      \path (qb1) edge [dotted] (qbRm1);
      \path (qbRm1) edge node[above] {$e_\nRegs$} (qbR);
      \path (qbR) edge node[above] {$a_{i,b}$} (qi);
    \end{tikzpicture}
  \end{equation}

\item From the register machine former final node $F_r$ to a node $F_a$ there is a sequence of gadgets for the assignments $v_1 := \false \dots v_r := \false$, the first part of the epilogue.
\item From $F_a$ to a node $F_h$ there is a sequence of accesses
  $a_{1,\false} \dots a_{\nRegs,\false} c_1 \dots c_\nRegs$, the second part of the epilogue.
\item The final node is $F_f = F_h$.
\end{itemize}
\end{definition}

\begin{definition}
We say that an NMRU state is well-formed  at step $s \in \{0,1\}$ if it is of the form
\begin{equation}
  \beta_1^0 \dots \beta_\nRegs^0 d^0 \alpha_1^0 \dots \alpha_\nRegs^0 g_0^s g_1^{1-s}
\end{equation}
where $\forall i, 1 \leq i \leq \nRegs, \alpha_i \in \{a_{\sigma(i),\false}, a_{\sigma(i),\true}\}, \beta_i = e_{\sigma'(i)}$ and $\sigma$ and $\sigma'$ are two permutations of $[1,\nRegs]$.
In other words, a well-formed state contains $\nRegs$ distinct blocks $e_i$ placed before $d$, and $\nRegs$ blocks $a_{i,b}$, with distinct $i$'s, placed between $d$ and $g_0$.
We say ``well-formed'' for short if~$s=0$.
\end{definition}

\begin{definition}
We say that an NMRU state is well-phased at step $s \in \{0,1\}$ if it is of the form
\begin{equation}
  \gamma_{\sigma(1)}^0 \dots \gamma_{\sigma(\nRegs)}^0 d^0 \gamma_{\sigma(\nRegs+1)}^0 \dots \gamma_{\sigma(2\nRegs)}^0 g_0^s g_1^{1-s}
\end{equation}
where $\gamma_1 = e_1, \dots, \gamma_\nRegs = e_\nRegs, \gamma_{\nRegs+1} \in \{a_{1,\false}, a_{1,\true}\}, \dots, \gamma_{2\nRegs} \in \{a_{\nRegs,\false}, a_{\nRegs,\true}\}$ and $\sigma$ is a permutation of $[1,2\nRegs]$.
We say ``well-phased'' for short if~$s=0$.
\end{definition}

\begin{lemma}
  Executing a path through $\phi_{i,b} g_s$ over an NMRU state $w$ well-phased at step $s$ always leads to a state well-phased at step $1-s$.
  Furthermore that state
  \begin{itemize}
  \item either is not well-formed at step $1-s$
  \item or is identical to $w$ except for the $g_0$ and $g_1$ blocks, and this may occur only if $a_{i,b}$ belongs to~$w$.
  \end{itemize}
\end{lemma}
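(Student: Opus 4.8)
The plan is to verify the claim by symbolically simulating the gadget $\phi_{i,b} g_s$ on an arbitrary state that is well-phased at step $s$. Recall that $\phi_{i,b}$ is the sequence $d\, a_{1,?}\dots a_{i-1,?}\, a_{i,b}\, a_{i+1,?}\dots a_{\nRegs,?}\, e_1 \dots e_\nRegs$ (where the $?$'s are whatever the chosen path picks), followed by one access to $g_s$. Without loss of generality I would fix $s=0$, so the gadget is $\phi_{i,b}\, g_0$ and we start from a state $w = \gamma_{\sigma(1)}^0 \dots \gamma_{\sigma(\nRegs)}^0\, d^0\, \gamma_{\sigma(\nRegs+1)}^0 \dots \gamma_{\sigma(2\nRegs)}^0\, g_0^0\, g_1^1$; the general $s$ case is obtained by swapping the roles of $g_0$ and $g_1$. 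The first observation I would record is an invariant on MRU-bit counts: in a well-phased state there is exactly one block with MRU-bit $1$ (namely $g_{1-s}$), the cache is full (all $2\nRegs+3$ ways occupied), and an access to a block not already carrying bit $1$ either simply sets its bit to $1$ (raising the count to $2$) or, if the count had reached $\nWays-1 = 2\nRegs+2$, resets all the others. Since we start at count $1$, I need to track how the count evolves across the $2\nRegs+1$ accesses of $\phi_{i,b}$ plus the final $g_0$.

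Next I would walk through the accesses in order. The access to $d$: $d$ is in the cache (it sits in the middle of every well-phased state) with bit $0$, so this is a hit, $d$'s bit becomes $1$, and — crucially — because only $g_{1-s}$ had bit $1$, $d$ was the \emph{leftmost} block with bit $0$; I will argue this is exactly the pivot that makes the subsequent miss, if any, land on $d$'s slot. Then come the accesses $a_{1,?}, \dots, a_{\nRegs,?}$: for each index $j \neq i$ the path may choose either $a_{j,\false}$ or $a_{j,\true}$; for $j = i$ it must choose $a_{i,b}$. Here the case split is whether the chosen block is already present in $w$. If every chosen $a_{j,?}$ is present (which for $j=i$ forces $a_{i,b} \in w$), all these are hits, they successively acquire bit $1$, and the MRU-bit count climbs; I would check that it reaches $\nWays-1$ at exactly the right moment so that the final $e$-accesses and $g_0$ behave as hits, all bits get reset to $0$ at the end, and the net effect is that the \emph{set} of blocks in the cache is unchanged — hence the resulting state is $w$ with $g_0,g_1$ bits flipped, i.e.\ well-formed at step $1-s$ iff $w$ was. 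If, on the other hand, some chosen $a_{j,?}$ is \emph{absent} from $w$ (in particular this is forced when $a_{i,b}\notin w$, i.e.\ the guard is violated), that access is a miss; the leftmost bit-$0$ block at that moment is the one occupying $d$'s old slot (or, after $d$ was overwritten, an $e$-slot), so a block gets evicted from the ``left region'' and $a_{j,?}$ is inserted there — the telltale signature that destroys well-formedness. I would then follow the remaining accesses to confirm the damage is permanent within this gadget: the resulting state is still of well-phased shape (the right number of blocks, bit pattern $g_0^{1-?}g_1^{?}$, everything else bit $0$) but has an $a$-block sitting among the first $\nRegs$ positions or a missing $e$-block, so it is not well-formed at step $1-s$. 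Finally I would note the shape is preserved in all cases — the $2\nRegs$ non-$d$, non-$g$ slots always end up with bit $0$, and $g_0,g_1$ carry complementary bits — which is the ``always leads to a state well-phased at step $1-s$'' part of the statement.

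The main obstacle, I expect, is bookkeeping the leftmost-zero-bit replacement position across a run of hits interspersed with a possible miss: one has to be careful that after the hit on $d$ (which moves $d$ to bit $1$) the next bit-$0$ block is precisely the one whose eviction is observable, and that when the MRU-bit count hits $\nWays-1$ the mass reset happens at a predictable point and not somewhere that would accidentally ``repair'' a malformed state. The count arithmetic — starting at $1$, one hit on $d$, then $\nRegs$ accesses to $a$-slots, then $\nRegs$ accesses to $e$-slots, then $g_0$, against the threshold $\nWays-1 = 2\nRegs+2$ — is exactly tuned so that in the all-hits case the reset lands at the very end; verifying this tuning, and that a miss anywhere perturbs it in a way that cannot be undone, is the delicate part. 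Everything else is a routine case analysis on which of the path's binary choices coincide with the blocks already in $w$, and the symmetry between $s=0$ and $s=1$ lets me do it once.
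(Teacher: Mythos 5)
Your plan follows essentially the same route as the paper's proof: write out a generic well-phased state, symbolically simulate $\phi_{i,b}\,g_s$ access by access, and split on whether the $\nRegs$ $a$-accesses are all hits. In the all-hits case the line contents are untouched, the number of set MRU-bits climbs to exactly $\nWays-1=2\nRegs+2$ after the last $e$-access, and the access to $g_s$ triggers the global reset, leaving $w$ with only the $g_0,g_1$ bits swapped (and this forces $a_{i,b}\in w$, since that access must be a hit). In the miss case an $a$-block is planted, with MRU-bit $1$, to the left of $d$, where it cannot be dislodged before $g_s$ because evictions only target bit-$0$ positions and no global reset occurs earlier; the result is well-phased but not well-formed. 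Your explicit MRU-bit counting makes the timing of the reset more transparent than the paper's version.

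One step is misstated, and it is the linchpin of the ``not well-formed'' branch. You write that the first miss lands in ``$d$'s old slot (or, after $d$ was overwritten, an $e$-slot)''. But $d$ belongs to every well-phased state, so the opening access to $d$ is always a hit: $d$ is never overwritten and never moves, and its slot carries bit $1$ from then on. The correct reason the first miss lands \emph{left of} $d$ is a counting argument: if the $(k{+}1)$-th $a$-access is the first miss, then at that moment only $g_{1-s}$, $d$, and the $k<\nRegs$ previously hit $a$-blocks carry MRU-bit $1$, so at most $k$ of the $\nRegs$ positions to the left of $d$ are marked; hence the leftmost bit-$0$ position, where the miss inserts, lies among those first $\nRegs$ positions. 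With that justification substituted for the sentence about $d$'s slot, the rest of your plan goes through.
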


\begin{proof}
  The input state $w$ is $x_1^0, \dots, x_r^0, d^0, x_{r+1}^0, \dots, x_{2r}^0, g_0^0, g_1^1$ where the $(x_i)_{1 \leq i \leq 2r}$ are a permutation of
  $\{ e_i \mid 1 \leq i \leq r \} \cup
  \{ a_{i,\beta_i} \mid 1 \leq i \leq r \}$ for some sequence of Booleans
  $(\beta_i)_{1 \leq i \leq r}$.

  Consider a path through $\phi_{i,b}$: it consists of $d$, followed by a sequence of $r$ $a$'s, then $r$ $e$'s.
  Each of these accesses either freshens, or overwrites, one of the $x$ positions.
  After the sequence of $a$'s, there are either no $a$'s to the left of $d$, or at least one.
  The former case is possible only if all $a$'s are hits, freshening positions to the right of $d$ --- this means all these positions are left untouched except that their MRU bits are flipped to $1$.
  Then the sequence of $e$'s just flips to $1$ the MRU-bits of the $e$'s, all located to the left of~$d$.
  The resulting state is thus identical to $w$ except that all MRU-bits to the left of $g_0$ have been flipped to~$1$;
  thus after accessing $g_0$, the state is identical to the initial state except that it ends with $g_0^{1-s} g_1^s$ instead of $g_0^s g_1^{1-s}$.
  
  Now consider the latter case: after the sequence of $a$'s there is at least one position of the form $a^1_{i,\beta}$ to the left of $d$.
  This position cannot be overwritten by the $e$'s.
  After the path through $\phi_{i,b}$, the state is thus of the form  $x_1^1, \dots, x_r^1, d^1, x_{r+1}^1, \dots, x_{2r}^1, g_0^0, g_1^1$, and one of the $x_i$ for $1 \leq i \leq r$ is an~$a$.
  The access to $g_0$ yields $x_1^0, \dots, x_r^0, d^0, x_{r+1}^0, \dots, x_{2r}^0, g_0^{1-s}, g_1^s$.
  This state is well-phased but not well-formed. 
\end{proof}

\begin{lemma}
  Executing a path through $\psi_{i,b} g_s$ over an NMRU state well-phased at step $s$ always leads to a state well-phased at step $1-s$.
  Furthermore that state
  \begin{itemize}
  \item either is not well-formed at step $1-s$
  \item or is identical to the initial state except for the $g_0$ and $g_1$ blocks, and, possibly, the $a_{i,\beta_i}$ block replaced by~$a_{i,b}$.
  \end{itemize}  
\end{lemma}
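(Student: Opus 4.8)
The plan is to mirror the proof just given for the guard gadget $\phi_{i,b}g_s$, the only new features being that $\psi_{i,b}$ omits the access to $a_{i,\cdot}$ in its ``read'' phase and appends a single write $a_{i,b}$ at the very end. By symmetry (exchange $g_0$ and $g_1$) I would treat only $s=0$, writing the input state as $w=x_1^0\cdots x_{\nRegs}^0\,d^0\,x_{\nRegs+1}^0\cdots x_{2\nRegs}^0\,g_0^0g_1^1$ with the $x_k$ a permutation of $\{e_j\}_j\cup\{a_{j,\beta_j}\}_j$, and I would record that a path through $\psi_{i,b}$ is a sequence of $2\nRegs+1$ accesses to $2\nRegs+1$ pairwise distinct blocks, none of which is $g_0$ or $g_1$: namely $d$, then $a_{j,c_j}$ for $j\ne i$ (the path's choices $c_j$), then $e_1,\dots,e_{\nRegs}$, then $a_{i,b}$.

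First I would establish the ``always well-phased at step $1-s$'' half by a counting argument identical to the $\phi$ case: the number of set MRU-bits is $1$ in $w$, rises by at most one per access, hence stays $\le 2\nRegs+2=\nWays-1$ throughout $\psi_{i,b}$, so no all-marked reset fires; therefore $g_1$ stays marked and $g_0$ (the penultimate line, unmarked) is never the leftmost unmarked line (before the $k$-th access at most $k-1\le 2\nRegs$ of the $2\nRegs+1$ lines to its left are marked), so neither $g_0$ nor $g_1$ is evicted; $d\in w$ so the first access is a hit pinning $d$ in its line with bit~$1$ afterwards; and a line once marked cannot be evicted without a reset, so the $2\nRegs+1$ accesses hit $2\nRegs+1$ distinct lines and thus exactly fill all non-$g$ lines, each left marked and holding one of the accessed blocks. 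The closing access to $g_0=g_s$ is then a hit that forces an all-marked reset, clearing every bit but $g_0$'s; the resulting state has the required shape, its block multiset being $d$, all $e_j$, and exactly one $a_{j,\cdot}$ per register ($a_{j,c_j}$ for $j\ne i$, $a_{i,b}$ for $j=i$).

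For the dichotomy I would note that, since the final access only resets bits, the result is well-formed exactly when no $a$-block sits left of $d$ at the end of $\psi_{i,b}$, and I would split on the path. If some body access $a_{j,c_j}$ ($j\ne i$) is a miss: the body (after the $d$-hit) has only $\nRegs-1<\nRegs$ accesses, so at most $\nRegs-1$ of the leftmost $\nRegs$ lines ever get marked during it, hence any body miss writes its fresh $a$-block to a line strictly left of $d$, where it stays marked (no reset) and is never overwritten by a later $e$-miss --- so the result is not well-formed. If no body access is a miss but $w$ is not well-formed: nothing is evicted during the body or (consequently) during the all-hits $e$-run, so only the trailing $a_{i,b}$ can relocate anything, yet the $a$-block that $w$ has in some line $p_0\le\nRegs$ either is re-hit in place (if it belongs to a register $j\ne i$, since then $c_j=\beta_j$) or, being $a_{i,\beta_i}$, is either re-hit or overwritten by the $a$-block $a_{i,b}$ in line $p_0$ itself --- so line $p_0$ still holds an $a$-block and the result is not well-formed. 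Finally, if $w$ is well-formed and $c_j=\beta_j$ for all $j\ne i$, a short symbolic simulation shows the result equals $w$ up to the $g_0,g_1$ bits and, possibly, $a_{i,\beta_i}$ replaced by $a_{i,b}$ (the replacement happening when $b\ne\beta_i$, in which case $a_{i,b}$ misses and its leftmost-unmarked target is precisely $a_{i,\beta_i}$'s line, which is right of $d$); it is then well-formed at step~$1$ and corresponds to the Boolean state of $w$ with register~$i$ set to~$b$. Since the three cases cover every path, both bullets follow.

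I expect the middle case --- $w$ well-phased but not well-formed, path making no body miss --- to be the only delicate point: one must check that the $e$-run, which here is entirely hits and therefore inert, together with the single trailing write, cannot carry a pre-existing misplaced $a$-block across $d$, and one must confirm the slot-counting claim that body misses always land strictly left of $d$ regardless of whether $w$ is well-formed. The MRU-bit arithmetic and the well-formed-path simulation are routine.
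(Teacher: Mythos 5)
Your proof is correct and follows essentially the same route as the paper's: a symbolic simulation of the gadget, with the dichotomy reduced to whether an $a$-block ends up to the left of $d$ (the paper splits on the post-body configuration, you split on body-miss versus all-hits, but these coincide). You are in fact more explicit than the paper about the well-phasedness half, about where a body miss must land, and about the sub-case where the misplaced block is $a_{i,\beta_i}$ itself; the paper leaves these to the reader.
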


\begin{proof}
  Again, the initial state is $x_1^0, \dots, x_r^0, d^0, x_{r+1}^0, \dots, x_{2r}^0, g_0^0, g_1^1$ where the $(x_i)_{1 \leq i \leq 2r}$ are a permutation of
  $\{ e_i \mid 1 \leq i \leq r \} \cup
  \{ a_{i,\beta_i} \mid 1 \leq i \leq r \}$ for some sequence of Booleans
  $(\beta_i)_{1 \leq i \leq r}$.

  Consider a path through $\psi_{i,b}$: it consists of $d$, followed by a sequence of $r-1$ $a$'s, then $r$ $e$'s, then $a_{i,b}$.
  Each of these accesses either freshens, either overwrites, one of the $x$ positions.
  After the sequence of $a$'s, there are either no $a$'s to the left of $d$, or at least one.
  The former case is possible only if all these $a$'s are hits, freshening positions to the right of $d$.
  Then the sequence of $e$'s freshens the $e$'s to the left of~$d$.
  There is one remaining $x$ position with a zero MRU-bit: it is to the right of $d$ and carries a block $a_{i,\beta_i}$.
  This block is then updated or freshened by the $a_{i,b}$ access.
  Then the access to $g_0$ flips all MRU-bits to $0$ except the one for $g_0$, which is flipped to $1$.
  Since all of the accesses before the $a_{i,b}$ access were hits, the permutation of the positions has not changed: the state is the same as the initial state except that $a_{i,\beta_i}^0$ is replaced by $a_{i,b}^0$ and $g_0^s g_1^{1-s}$ is replaced by $g_0^{1-s} g_1^s$.

  Now consider the latter case: after the sequence of $a$'s there is at least one position of the form $a^1_{i,\beta}$ to the left of $d$.
  Then, as in the proof of the previous lemma, there is still $a^0_{i,\beta}$ to the left of $d$ at the end of the path through $\psi_{i,b} g_s$.
  Thus the final state cannot be well-formed.
\end{proof}

\begin{corollary}
  Assume starting in a well-formed NMRU state, corresponding to Boolean state $\sigma$, then any path through the gadget encoding an assignment or a guard
  \begin{itemize}
  \item either leads to a well-formed NMRU state, corresponding to the state $\sigma'$ obtained by executing the assignment, or $\sigma'=\sigma$ for a valid guard;
  \item or leads to a well-phased but not well-formed state.
  \end{itemize}
\end{corollary}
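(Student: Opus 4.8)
The plan is to obtain this corollary with no new computation, simply by chaining the two preceding lemmas twice and doing the case analysis carefully. Recall that the gadget for an assignment $v_i := b$ is the sequence $\psi_{i,b}\, g_0\, \psi_{i,b}\, g_1$ and the gadget for a guard $v_i = b$ is $\phi_{i,b}\, g_0\, \phi_{i,b}\, g_1$, and that a well-formed state is in particular well-phased at step $0$. So each gadget is exactly two applications of a single pass, one at step $0$ and one at step $1$, and I would invoke the relevant lemma ($\psi$-lemma or $\phi$-lemma) for each pass in turn.

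First I would treat the assignment gadget. Starting from a well-formed state $w$ corresponding to $\sigma$, the $\psi$-lemma applied with $s=0$ gives after $\psi_{i,b}\, g_0$ a state $w_1$, well-phased at step $1$, which is either (a) not well-formed at step $1$, or (b) equal to $w$ up to the $g$-bits and the replacement of $a_{i,\sigma_i}$ by $a_{i,b}$, i.e.\ well-formed at step $1$ and corresponding to $\sigma' = \sigma[i\mapsto b]$. Applying the $\psi$-lemma again with $s=1$ to $w_1$ yields $w_2$, well-phased at step $0$. In case (b) the slot $i$ already holds $a_{i,b}$, so the benign branch of the lemma flips the $g$-bits back and leaves everything else untouched, producing precisely the well-formed state at step $0$ corresponding to $\sigma'$; the non-benign branch produces a not-well-formed state; either outcome is one of the two alternatives of the corollary. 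In case (a) I claim $w_2$ is still well-phased but not well-formed: by the lemma it is either not well-formed outright, or it differs from $w_1$ only in the $g$-bits and possibly in the block occupying slot $i$; but swapping one $a_{i,\cdot}$ block for another moves no block across $d$, so whatever structural defect witnesses the non-well-formedness of $w_1$ (an $e$-block right of $d$, or an $a$-block left of $d$) persists in $w_2$.

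Next I would treat the guard gadget in the same style, the only difference being that the ``identical up to $g$-bits'' branch of the $\phi$-lemma is available only when $a_{i,b}$ is in the current state. If the guard is valid, i.e.\ $\sigma_i = b$, then $w$ contains $a_{i,b}$, so the first pass $\phi_{i,b}\, g_0$ may leave $w$ unchanged apart from the $g$-bits (a state well-formed at step $1$, still corresponding to $\sigma$), and then $\phi_{i,b}\, g_1$ may restore the $g$-bits, giving the well-formed state at step $0$ corresponding to $\sigma' = \sigma$; if instead either pass exits well-formedness, the persistence argument above (even simpler here, since the $\phi$-branch changes no cache content at all) keeps the state well-phased but not well-formed. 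If the guard is invalid, i.e.\ $\sigma_i = \neg b$, then $w$ contains $a_{i,\neg b}$ and not $a_{i,b}$, so the benign branch of the $\phi$-lemma is unavailable for the first pass and $\phi_{i,b}\, g_0$ necessarily produces a state not well-formed at step $1$; the second pass then preserves non-well-formedness, so only the second alternative of the corollary occurs, as it should. This exhausts all cases.

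The main obstacle I expect is the persistence claim used in case (a) of both gadgets: one has to be sure that ``well-phased but not well-formed'' is genuinely stable under a second pass of the sub-gadget. The key point to nail down is that, on such an input, the sub-gadget only freshens or permutes the existing blocks or overwrites a $0$-MRU-bit block, and in the benign branch the sole content change is the same-slot swap $a_{i,\beta_i}\!\mapsto\! a_{i,b}$; since that swap does not alter which slots are of $e$-type and which of $a$-type relative to $d$, the structural defect cannot be repaired. Everything else is MRU-bit bookkeeping already carried out inside the two lemmas, so the corollary follows by routine case enumeration.
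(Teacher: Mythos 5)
Your proposal is correct and matches the paper's intent: the corollary is stated there without an explicit proof precisely because it is meant to follow by applying the two preceding lemmas once at step $0$ and once at step $1$, exactly as you do. Your persistence observation (that the ``identical up to $g$-bits, possibly with the same-slot swap $a_{i,\beta_i}\mapsto a_{i,b}$'' branch cannot repair a well-phased-but-not-well-formed state, since it moves no block across $d$) is the one point worth making explicit, and you make it correctly.
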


\begin{lemma}
  Executing the sequence $a_{1,\false} \dots a_{\nRegs,\false} c_1 \dots c_\nRegs$ from $F_a$ to $F_h$ over a well-formed NMRU state corresponding to a zero Boolean state leads to a state containing~$d$ --- more specifically, a state of the form $c_1^1,\dots,c_{\nRegs}^1,d^0,a_{\pi(1),\false}^1,\dots,a_{\pi(\nRegs),\false}^1 g_0^0 g_1^1$ where $\pi$ is a permutation.
\end{lemma}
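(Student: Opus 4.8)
The plan is to prove the lemma by a direct symbolic simulation of the NMRU update rule along the sequence $a_{1,\false}\dots a_{\nRegs,\false}\,c_1\dots c_\nRegs$, starting from the explicit form of the input state. A well-formed state corresponding to the zero Boolean state is
\[
  e_{\sigma'(1)}^0\dots e_{\sigma'(\nRegs)}^0\;d^0\;a_{\sigma(1),\false}^0\dots a_{\sigma(\nRegs),\false}^0\;g_0^0\;g_1^1
\]
for some permutations $\sigma,\sigma'$ of $[1,\nRegs]$; note the cache holds exactly $2\nRegs+3=\nWays$ blocks (it is full) and that $g_1$ is the only block carrying MRU-bit~$1$.

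First I would run the $\nRegs$ accesses $a_{1,\false},\dots,a_{\nRegs,\false}$. Each such block already sits between $d$ and $g_0$, so every access is a hit that merely sets the corresponding MRU-bit to~$1$, moving and evicting nothing. The point to check is that the global reset branch of the hit rule never triggers: just before the access to $a_{j,\false}$ there are exactly $j$ ones (namely $a_{1,\false},\dots,a_{j-1,\false}$ together with $g_1$), and $j\le\nRegs<2\nRegs+2=\nWays-1$. After this phase the state is $e_{\sigma'(1)}^0\dots e_{\sigma'(\nRegs)}^0\,d^0\,a_{\sigma(1),\false}^1\dots a_{\sigma(\nRegs),\false}^1\,g_0^0\,g_1^1$.

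Next I would run the $\nRegs$ accesses $c_1,\dots,c_\nRegs$; each $c_j$ is a fresh block, so every access is a miss on a full cache and $c_j^1$ replaces the leftmost block whose MRU-bit is~$0$. By induction on $j$, just before the access to $c_j$ the $0$-bit blocks are, from left to right, $e_{\sigma'(j)},\dots,e_{\sigma'(\nRegs)},d,g_0$ (all the $a$'s, the $c$'s inserted so far, and $g_1$ carry~$1$), so $c_j$ overwrites $e_{\sigma'(j)}$ and leaves $d$, $g_0$ and the remaining $e$'s intact. Again the reset branch is harmless: just before the access to $c_j$ there are $\nRegs+j\le 2\nRegs<\nWays-1$ ones. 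Hence the final state is
\[
  c_1^1\dots c_\nRegs^1\;d^0\;a_{\sigma(1),\false}^1\dots a_{\sigma(\nRegs),\false}^1\;g_0^0\;g_1^1,
\]
which indeed still contains $d$ and is exactly of the claimed shape with $\pi=\sigma$.

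The only real bookkeeping is to confirm at each step which slot is ``leftmost with MRU-bit~$0$'' and that the number of $1$-bits never reaches $\nWays-1$; both hinge on the sizing $\nWays=2\nRegs+3$, which guarantees that $d$ and $g_0$ are never disturbed and that at most $2\nRegs+1$ bits are ever set simultaneously. I expect this threshold argument, combined with tracking the position of $d$, to be the main (though mild) obstacle; the remainder is a routine unfolding of the update rule.
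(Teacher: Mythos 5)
Your proof is correct and follows essentially the same route as the paper's (which simply notes that the $a_{i,\false}$ accesses are hits that freshen MRU-bits and the $c_i$ accesses then overwrite the $e$'s); you merely spell out the bookkeeping the paper leaves implicit. In particular, your explicit check that the number of set MRU-bits never reaches $\nWays-1$, so the global reset never fires, is a valid and worthwhile verification of a detail the paper's one-line proof glosses over.
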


\begin{proof}
  The $a_{1,\false} \dots a_{\nRegs,\false}$ just freshen the corresponding blocks (MRU-bit set to $1$), and then the $c_1 \dots c_\nRegs$ overwrite the $e$'s.
\end{proof}

\begin{lemma}
  Executing the sequence $a_{1,\false} \dots a_{\nRegs,\false} c_1 \dots c_\nRegs$ from $F_a$ to $F_h$ over a well-phased but not well-formed NMRU state leads to a state not containing~$d$ --- where the $2\nRegs$ first MRU bits are set to $1$, the next one to $0$, and then $g_0^0 g_1^1$.
\end{lemma}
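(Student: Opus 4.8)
The plan is to follow the same symbolic-simulation recipe as in the preceding (well-formed) lemma, while accounting for the two extra degrees of freedom of a well-phased but not well-formed state: at least one $a$-block occupies one of the $\nRegs$ slots to the left of $d$ (equivalently at least one $e$-block lies to its right), and for each register $i$ the state contains exactly one of $a_{i,\false}$, $a_{i,\true}$, so that along $a_{1,\false}\cdots a_{\nRegs,\false}$ the access to $a_{i,\false}$ is a hit when $a_{i,\false}$ is present and a miss when $a_{i,\true}$ is present instead. Throughout I use that after the prologue the cache is full, that $\nWays=2\nRegs+3$, and that $d$ occupies the fixed slot $\nRegs+1$ in any well-phased state.

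First I would treat the \emph{$a$-phase}, namely $a_{1,\false}\cdots a_{\nRegs,\false}$. None of these accesses names $d$, so $d$ keeps MRU-bit~$0$ and never moves; I claim no miss evicts it. Each access is a ``$0\to1$'' event (a hit on a not-yet-accessed $a_{i,\false}$, or a miss that discards a bit-$0$ block and installs a fresh $a_{i,\false}^1$), so after $j$ of them exactly $1+j$ MRU-bits are $1$; the count stays below $\nWays-1$, so no global reset fires. In particular, before the $j$-th access at most $j-1<\nRegs$ of the $\nRegs$ slots left of $d$ are $1$, hence the leftmost bit-$0$ slot — where a miss installs its block — always lies strictly left of $d$; so misses rewrite only slots $1,\dots,\nRegs$, $d$ is never the leftmost bit-$0$ block, and $a_{1,\false},\dots,a_{\nRegs,\false}$ all end the phase present with MRU-bit~$1$. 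I would then record the resulting shape: for some $\ell$, slots $1,\dots,\nRegs$ hold $\nRegs-\ell$ bit-$1$ and $\ell$ bit-$0$ blocks, then $d^0$, then slots $\nRegs+2,\dots,2\nRegs+1$ hold $\nRegs-\ell$ bit-$0$ and $\ell$ bit-$1$ blocks, then $g_0^0 g_1^1$. The decisive point is $\ell\le\nRegs-1$: not-well-formedness puts at least one $a$-block left of $d$, which after the $a$-phase forces at least one bit-$1$ block left of $d$ — the original $a_{i,\false}$ if it was already there, or else the fresh $a_{i,\false}^1$ installed when the $a_{i,\true}$ occupying that slot triggered a miss.

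Next I would run the \emph{$c$-phase} $c_1\cdots c_\nRegs$: each $c_j$ is a miss (these blocks occur nowhere earlier), so it evicts the current leftmost bit-$0$ block and installs $c_j^1$, again with no reset. The first $\ell$ of them fill the $\ell$ bit-$0$ slots strictly left of $d$; thereafter slots $1,\dots,\nRegs$ are all bit-$1$, so $c_{\ell+1}$ — which exists because $\ell+1\le\nRegs$ — evicts $d$. The remaining $\nRegs-\ell-1$ accesses overwrite leftmost bit-$0$ blocks among slots $\nRegs+2,\dots,2\nRegs+1$; reading off the outcome one gets that $d$ is gone, the $2\nRegs$ leading MRU-bits are $1$, the next one is $0$, and the tail is $g_0^0 g_1^1$ — the asserted form.

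The step I expect to be the genuine obstacle is this bookkeeping. Unlike the well-formed case — where every $a$-access is a hit and the $c$'s merely overwrite the left-hand $e$-blocks — here one must discipline the case split between hit and miss at each $a_{i,\false}$, handle the unknown permutation underlying the well-phased state, and verify in every case both that the leftmost-bit-$0$ pointer behaves as described (shielding $d$ through the $a$-phase but condemning it at the $(\ell+1)$-th $c$-access) and that the number of $1$-bits never reaches $\nWays-1$, so that no global MRU reset disturbs the argument. I would keep this under control with the loop invariant ``after $j$ accesses of the $a$-phase, at most $j-1$ of the $\nRegs$ slots left of $d$ carry a $1$, the blocks already installed by misses occupy the leftmost slots, and $d$ is untouched'', from which both the bound on $\ell$ and the eventual eviction of $d$ fall out.
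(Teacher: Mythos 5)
Your argument is correct on the essential point and follows the same route as the paper's (much terser) proof: since the state is not well-formed, not all $\nRegs$ accesses $a_{1,\false},\dots,a_{\nRegs,\false}$ can be hits on blocks to the right of $d$, so at least one of them freshens or installs a block to the left of $d$; hence at most $\nRegs-1$ zero-bit slots remain left of $d$ after the $a$-phase, and the $\nRegs$ misses $c_1,\dots,c_\nRegs$ necessarily reach and evict $d$. Your explicit bookkeeping --- the count of MRU-bits equal to $1$ never reaches $\nWays-1=2\nRegs+2$, so no global reset fires; before the $j$-th $a$-access at most $j-1<\nRegs$ of the slots left of $d$ carry a $1$, so every miss installs strictly left of $d$ and $d$ survives the $a$-phase --- is exactly what the paper leaves implicit, and it is sound.

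One caveat about the last step: ``reading off the outcome'' does not actually yield the asserted bit pattern. The unique surviving zero bit among the first $2\nRegs+1$ slots is the \emph{rightmost} slot to the right of $d$ that still carried bit $0$ after the $a$-phase, and this need not be slot $2\nRegs+1$: if that slot initially holds some $a_{i,\false}$, it is freshened during the $a$-phase and the surviving zero sits strictly to its left. For instance with $\nRegs=2$, the well-phased, not well-formed state $e_1^0\,a_{1,\true}^0\,d^0\,e_2^0\,a_{2,\false}^0\,g_0^0\,g_1^1$ is mapped to $a_{1,\false}^1\,c_1^1\,c_2^1\,e_2^0\,a_{2,\false}^1\,g_0^0\,g_1^1$, whose first $2\nRegs$ bits are not all $1$. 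So the positional clause is not established --- but that imprecision is in the lemma statement itself, the paper's own proof does not address it either, and what the subsequent exist-miss lemma actually needs is only that $d$ is absent, that all $c_j$ are cached with bit $1$, and that exactly one zero MRU-bit remains among the first $2\nRegs+1$ positions, all of which your argument does deliver.
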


\begin{proof}
  The well-phased but not well-formed NMRU state contains at least one $b$ to the right of $d$.
  When applying $a_{1,\false} \dots a_{\nRegs,\false}$, at least one of the $a$'s must thus freshen or replace a block to the left of $d$.
  Then when applying $c_1 \dots c_\nRegs$, $d$ gets erased.
\end{proof}

\begin{corollary}
  There is an execution sequence from $I_f$, with empty cache, to $F_f$, such that the final cache contains $d$ if and only if there is an execution trace from $I_r$ to~$F_r$.
\end{corollary}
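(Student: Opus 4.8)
The plan is to chain the preceding lemmas according to the three-part structure of the reduction: prologue, simulation of the register machine, and epilogue. First, I would check the prologue. Executed from the empty cache, the sequence $e_1 \dots e_\nRegs\, d\, a_{1,\false} \dots a_{\nRegs,\false}\, g_0\, g_1$ produces the well-formed NMRU state $e_1^0 \dots e_\nRegs^0 d^0 a_{1,\false}^0 \dots a_{\nRegs,\false}^0 g_0^0 g_1^1$, which corresponds at step $0$ to the register valuation in which every register is false. This is a direct symbolic simulation: its $\nWays = 2\nRegs+3$ accesses are pairwise distinct misses into a cache that becomes full only after the last one, so each inserts its block with MRU-bit $1$; the final access to $g_1$ then finds $\nWays-1$ MRU-bits already equal to $1$ and therefore resets all the others to~$0$.

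Second, I would prove by induction on any path $p$ of the cache-analysis graph from $I_r$ to $F_r$ that, starting from the well-formed state above, the state reached after $p$ is either (i)~well-formed at step $0$ and corresponding exactly to the register valuation obtained by running the instructions underlying $p$ --- and then every guard along $p$ is satisfied, i.e.\ $p$ underlies a \emph{valid} register-machine execution --- or (ii)~well-phased but not well-formed. The induction step is precisely the corollary on assignment/guard gadgets proved just above, together with the fact that class~(ii) is absorbing: once a state is well-phased but not well-formed, each further gadget keeps it in class~(ii), since in the ``identical up to the $g_0,g_1$ blocks'' alternative of the gadget lemmas no block crosses $d$, so well-formedness cannot be recovered. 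Hence there is a path from $I_r$ to $F_r$ in the cache-analysis graph ending in a well-formed state if and only if the Boolean register machine admits a valid execution from $I_r$ (all registers false) to~$F_r$.

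Third, I would compose with the epilogue. By the lemma on $\psi_{i,b} g_s$ gadgets, the first part of the epilogue --- the concatenation of the gadgets for $v_1 := \false, \dots, v_\nRegs := \false$ from $F_r$ to $F_a$ --- maps every well-formed state to the well-formed state of the all-false valuation, and maps every well-phased-but-not-well-formed state to another such state. Then, by the two lemmas on the sequence $a_{1,\false} \dots a_{\nRegs,\false}\, c_1 \dots c_\nRegs$ from $F_a$ to $F_h = F_f$, this sequence turns the well-formed all-false state into a state containing $d$, and turns any well-phased-but-not-well-formed state into a state not containing~$d$. Putting the three parts together: any path from $I_f$ to $F_f$ factors as prologue, then a path $p$ from $I_r$ to $F_r$, then the epilogue, and it ends with $d$ in the cache if and only if $p$ underlies a valid register-machine execution. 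Such a witnessing path therefore exists if and only if the Boolean register machine has an execution from $I_r$ to~$F_r$, which is the claim.

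The work is essentially all in the gadget lemmas; what remains needs care only in two places. One is the formulation of the induction of the second step --- in particular checking that class~(ii) is preserved through \emph{both} halves of a guard gadget ($\phi_{i,b} g_0$, then $\phi_{i,b} g_1$) and of an assignment gadget ($\psi_{i,b} g_0$, then $\psi_{i,b} g_1$), and that a path staying well-formed throughout necessarily satisfies every guard and tracks the register valuation correctly. The other is the prologue computation, whose correctness rests on the MRU-bit reset being triggered exactly by the last of the $\nWays$ insertions.
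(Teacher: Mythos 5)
Your proof is correct and follows essentially the same route the paper intends (the paper states this corollary without explicit proof, as a direct assembly of the prologue computation, the gadget lemmas with the absorbing well-phased-but-not-well-formed class, and the two epilogue lemmas distinguishing states by the presence of $d$). Your two flagged points of care --- the MRU-bit reset on the last prologue insertion and the step-parity bookkeeping through both halves of each gadget --- are exactly the right details to check.
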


\begin{theorem}
  The exist-hit problem for NMRU caches is NP-complete for acyclic graphs and PSPACE-complete for general graphs.
\end{theorem}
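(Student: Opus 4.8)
The plan is to assemble the pieces already in place. For membership, I would argue exactly as in the FIFO and PLRU sections: on an acyclic control-flow graph a witnessing execution visits each node at most once, so one can guess a path of linear length nondeterministically and simulate the NMRU policy along it in polynomial time, giving membership in NP; on a general graph, one simulates the policy with a nondeterministic Turing machine that stores only the current node and the cache state (polynomial size), and Savitch's theorem then yields membership in PSPACE.

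For hardness, I would first observe that the transformation of Definition~\ref{def:NMRU_reduction} is computable in polynomial time — each guard and assignment edge is replaced by a gadget of size $O(\nRegs)$, and the prologue and epilogue have size $O(\nRegs)$ — and that it preserves acyclicity, since each gadget is itself an acyclic fragment substituted for a single edge. Then I would chain the lemmas already proved: the prologue installs the well-formed state corresponding to the all-false register valuation; by the two gadget lemmas and their corollary, any path through a guard or assignment gadget either faithfully simulates a legal register-machine step (keeping the state well-formed) or yields a well-phased but not well-formed state; and, crucially, once a state is well-phased but not well-formed it remains so through every subsequent gadget. The first part of the epilogue resets the simulated registers to all-false, and the second part $a_{1,\false}\dots a_{\nRegs,\false}\, c_1 \dots c_\nRegs$ turns a well-formed state into one containing $d$ and any well-phased but not well-formed state into one not containing $d$, by the last two lemmas.

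Combining these gives precisely the stated corollary: there is an execution of the NMRU cache from $I_f$ (empty cache) to $F_f$ whose final cache contains $d$ if and only if the Boolean register machine admits an execution from $I_r$ to $F_r$. Since reachability for Boolean register machines is NP-complete on acyclic control-flow graphs and PSPACE-complete in general, and our reduction is polynomial-time and acyclicity-preserving, we obtain NP-hardness of NMRU exist-hit for acyclic graphs and PSPACE-hardness for general graphs; together with the membership results this proves the theorem.

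The step I expect to be the genuine obstacle — and the one the preceding lemmas are engineered to handle — is the \emph{trapping} of invalid guards: ensuring that a cache state which has drifted away from well-formedness can never be repaired into a well-formed state by any later gadget, no matter which nondeterministic branch is taken. The mechanism is the alternating $g_0/g_1$ ``phase'' blocks together with the $d$ separator: a misplaced $a$-block to the left of $d$ acquires an MRU-bit that no subsequent in-gadget access can clear before the next phase flip, so it persists as a permanent witness of malformation, which the epilogue then reads off as the absence of $d$.
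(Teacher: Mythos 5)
Your proof is correct and takes essentially the same route as the paper: membership by guessing a short path (acyclic) or by nondeterministic polynomial-space simulation plus Savitch (general), and hardness by chaining the already-established lemmas about the prologue, the well-formed/well-phased invariants of the guard and assignment gadgets, the trapping of malformed states, and the epilogue's conversion of well-formedness into the presence of $d$, all applied to the NP-/PSPACE-complete reachability problem for Boolean register machines via the polynomial-time, acyclicity-preserving reduction of Definition~\ref{def:NMRU_reduction}. The paper leaves exactly this assembly implicit, stating the theorem right after the corresponding corollary.
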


\subsection{Reduction to Exist-Miss}
\begin{definition}\label{def:NMRU_reduction_EM}
  We modify the reduction of \autoref{def:NMRU_reduction} as follows.
  Between $F_h$ and $F_f$ we insert the sequence $d g_0 c_1 \dots c_\nRegs a_{1,\true}$, as the third part of the epilogue.
\end{definition}

\begin{lemma}
  Executing $d g_0 c_1 \dots c_\nRegs a_{1,\true}$ over a state of the form $c_1^1,\dots,c_{\nRegs}^1,d^0,a_{\pi(1),\false}^1,\dots,a_{\pi(\nRegs),\false}^1 g_0^0 g_1^1$, where $\pi$ is a permutation, leads to a state without~$d$.
\end{lemma}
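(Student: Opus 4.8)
The plan is to establish the claim by a direct symbolic simulation of the five-phase access sequence $d$, $g_0$, $c_1,\dots,c_\nRegs$, $a_{1,\true}$ on the given state, which is a \emph{full} $(2\nRegs+3)$-way NMRU cache. Since the initial state already carries $2\nRegs+1$ MRU-bits equal to $1$ (the $\nRegs$ blocks $c_i$, the $\nRegs$ blocks $a_{\pi(i),\false}$, and $g_1$), with only $d$ and $g_0$ tagged $0$, the quantity I would track throughout is the number of MRU-bits equal to $1$: the NMRU "reset all other bits'' rule fires exactly when that number reaches $\nWays-1 = 2\nRegs+2$, so pinpointing where the count hits that threshold drives the whole argument.

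First I would handle the access to $d$: it is a hit, and since only $2\nRegs+1 < \nWays-1$ bits are set beforehand, it just sets $d$'s bit to $1$, leaving $g_0$ as the unique $0$-bit. Next, the access to $g_0$ is a hit, but now exactly $\nWays-1$ bits are already set, so this access sets $g_0$'s bit and resets every other MRU-bit to $0$, yielding $c_1^0\cdots c_\nRegs^0\, d^0\, a_{\pi(1),\false}^0\cdots a_{\pi(\nRegs),\false}^0\, g_0^1\, g_1^0$. Then each of $c_1,\dots,c_\nRegs$ is a hit that merely sets its own bit, because during this phase the set-bit count rises only from $1$ to $\nRegs+1$, staying strictly below $\nWays-1$; afterwards the $0$-bit blocks are exactly $d$, the $a_{\pi(i),\false}$'s, and $g_1$, with $d$ the leftmost of them.

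Finally, $a_{1,\true}$ is a miss, since the only $a$-blocks present are the $a_{\pi(i),\false}$, all tagged $\false$; the cache being full, the policy replaces the leftmost $0$-bit block, which is $d$, and since only $\nRegs+1 < \nWays-1$ bits are set at that moment no further reset occurs. Hence $d$ is evicted and the resulting state contains no $d$, as claimed. The only real subtlety — and the step most vulnerable to an off-by-one slip — is the bookkeeping of the set-bit count around the $d$ and $g_0$ accesses: one must check that immediately after $d$ exactly $\nWays-1$ bits are set (so the ensuing $g_0$ triggers the global reset) and that no other access in the sequence ever reaches that threshold, so that all the intermediate hits behave as plain bit-settings and the final miss is a plain leftmost-zero replacement.
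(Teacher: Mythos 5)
Your proof is correct and follows the same route as the paper: a direct symbolic simulation of the access sequence, tracking that the hit on $d$ brings the count of set MRU-bits to $\nWays-1$ so that the subsequent hit on $g_0$ triggers the global reset, after which the $c_i$ hits leave $d$ as the leftmost zero-bit block and the miss on $a_{1,\true}$ evicts it. Your version just spells out the bit-count bookkeeping that the paper's terser proof leaves implicit.
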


\begin{proof}
  $d$ gets freshened,
  then $g_0$ is the sole block with a zero MRU-bit.
  Thus, when it is freshened, all other MRU bits are set to zero.
  Then $c_1 \dots c_\nRegs$ freshen the first $\nRegs$ blocks,
  and $a_{1,\true}$ erases~$d$.
\end{proof}

\begin{lemma}
  Executing $d g_0 c_1 \dots c_\nRegs a_{1,\true}$ over a state not containing $d$, where the $2\nRegs$ first MRU bits are set to $1$, the next one to $0$, and then $g_0^0 g_1^1$ leads to a state containing~$d$.
\end{lemma}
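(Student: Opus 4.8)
The plan is to simulate the NMRU policy on the access sequence $d\,g_0\,c_1\dots c_\nRegs\,a_{1,\true}$ starting from the given input state, and to show that once $d$ is brought into the cache by the very first access it is never afterwards selected for eviction. Recall that the associativity is $\nWays=2\nRegs+3$, so the cache is full throughout. Write the input state as $x_1^1\dots x_{2\nRegs}^1\,z^0\,g_0^0\,g_1^1$, where none of the $x_j$ nor $z$ equals $d$; exactly $\nWays-2=2\nRegs+1$ MRU-bits are set, and the two blocks carrying bit $0$ are $z$ (at position $2\nRegs+1$) and $g_0$.

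First I would process the access to $d$: it is a miss, the cache is full, and the leftmost block with MRU-bit $0$ is $z$ at position $2\nRegs+1$, so $z$ is overwritten by $d^1$; since only $\nWays-2<\nWays-1$ bits were set before, no global reset is triggered, and the state becomes $x_1^1\dots x_{2\nRegs}^1\,d^1\,g_0^0\,g_1^1$, now with $\nWays-1$ bits set and $g_0$ the unique block with bit $0$. Next I would process the hit on $g_0$: its bit becomes $1$, and because $\nWays-1$ bits were already set, all other bits are cleared, yielding $x_1^0\dots x_{2\nRegs}^0\,d^0\,g_0^1\,g_1^0$. The configuration that matters from here on is that $d$ sits at position $2\nRegs+1$ with bit $0$, and so do all $2\nRegs$ positions strictly to its left.

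Then I would process $c_1,\dots,c_\nRegs$. After the $g_0$ reset only one bit is set, so the number of set bits stays strictly below $\nWays-1$ during this whole subsequence and no further reset can occur. Each $c_i$ access is either a hit on a block already present --- which cannot be $d$, $g_0$ or $g_1$, hence is at a position in $\{1,\dots,2\nRegs\}$ --- or a miss, which overwrites the then-leftmost MRU-bit-$0$ block; a straightforward induction shows that at every moment of this subsequence at least one position in $\{1,\dots,2\nRegs\}$ still has bit $0$, so that leftmost bit-$0$ block lies in $\{1,\dots,2\nRegs\}$, strictly to the left of $d$. Either way, each of the $\nRegs$ accesses sets exactly one position in $\{1,\dots,2\nRegs\}$ to bit $1$ and touches neither $d$ nor anything to its right; hence afterwards at least $2\nRegs-\nRegs=\nRegs\ge 1$ of the positions to the left of $d$ still carry bit $0$, and $d$ itself still carries bit $0$. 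Finally, for the access to $a_{1,\true}$: if it is a hit no block is evicted; if it is a miss the block evicted is the leftmost one carrying bit $0$, which by the previous sentence lies strictly to the left of $d$. In both cases $d$ remains in the cache, which proves the lemma; combined with the preceding lemma this shows $d$ is absent from the final cache exactly when the Boolean register machine reaches $F_r$, hence exist-miss for NMRU is NP-complete for acyclic graphs and PSPACE-complete in general.

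The step I expect to be the main obstacle is the accounting around the $g_0$ access, which is a hit that --- precisely because exactly $\nWays-1$ MRU-bits happen to be set at that instant --- fires a global reset and so drives $d$'s MRU-bit down to $0$. One must then be sure that $d$ is permanently ``shielded'': that enough positions to its left keep MRU-bit $0$ so that $d$ never becomes the leftmost evictable block, and that the $\nRegs$ subsequent $c_i$ accesses (and the final $a_{1,\true}$ access) cannot consume this supply of shielding zeros. The favourable count is that $d$ starts with $2\nRegs$ shielding positions against only $\nRegs$ possibly-disturbing accesses; checking that no other reset occurs along the way (on the $d$ miss, on the $c_i$ accesses, on the final access) is the routine remainder.
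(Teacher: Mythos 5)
Your proof is correct and follows essentially the same route as the paper's: $d$ overwrites the single zero-bit position $2\nRegs+1$, the hit on $g_0$ triggers the global MRU-bit reset, and the remaining $\nRegs+1$ accesses can only touch positions strictly to the left of $d$, so $d$ survives. The paper states this in four terse sentences; your version just supplies the explicit bit-counting (no spurious resets, the leftmost zero-bit position always lies among the first $2\nRegs$ slots) that the paper leaves implicit.
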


\begin{proof}
  $d$ overwrites the $2\nRegs+1$-th position,
  then $g_0$ is the sole block with a zero MRU-bit.
  Thus, when it is freshened, all other MRU bits are set to zero.
  Then possibly some blocks get overwritten among the $\nRegs+1$ first blocks,
  and $d$ is still in the cache.
\end{proof}

\begin{corollary}
  There is an execution sequence from $I_f$, with empty cache, to $F_f$, such that the final cache does not contain $d$ if and only if there is an execution of the Boolean register machine from $I_r$ to~$F_r$.
\end{corollary}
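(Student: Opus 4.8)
The plan is to reduce the statement to a short composition of facts already established for the NMRU exist-hit reduction (\autoref{def:NMRU_reduction}) together with the two lemmas just stated about the third part of the epilogue inserted in \autoref{def:NMRU_reduction_EM}.

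First I would recall the complete dichotomy proved for the exist-hit reduction and recast it as a statement about the state reached at $F_h$. Starting from the empty cache, the prologue brings the cache to the well-formed state corresponding to the all-$\false$ register valuation; then along any path each assignment or guard gadget either keeps the cache well-formed --- updating the encoded valuation exactly as the corresponding instruction would, and leaving it unchanged for a satisfied guard --- or, at the first unsatisfied guard, turns it into a well-phased-but-not-well-formed state, which every later gadget preserves as such. Hence at $F_r$ the cache is well-formed iff the chosen path is a valid register-machine execution from $I_r$ to $F_r$. The first part of the epilogue ($F_r$ to $F_a$) normalises a well-formed state to the all-$\false$ well-formed state and preserves the well-phased-but-not-well-formed property otherwise; the second part ($F_a$ to $F_h$) then sends the all-$\false$ well-formed state to the ``good'' state $c_1^1 \dots c_{\nRegs}^1 d^0 a_{\pi(1),\false}^1 \dots a_{\pi(\nRegs),\false}^1 g_0^0 g_1^1$, and any well-phased-but-not-well-formed state to the ``bad'' state (no $d$; first $2\nRegs$ MRU bits $1$, the next $0$, then $g_0^0 g_1^1$). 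So exactly two states can arise at $F_h$: the ``good'' one along a path corresponding to some valid register-machine execution $I_r \to F_r$, and the ``bad'' one along any path that does not.

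Then I would invoke the two lemmas about the sequence $d g_0 c_1 \dots c_{\nRegs} a_{1,\true}$ between $F_h$ and $F_f$: it maps the ``good'' state to a state without $d$ and the ``bad'' state to a state containing $d$. Combining, for the backward direction I take a valid register-machine execution $I_r \to F_r$, lift it to the corresponding cache path, which reaches the ``good'' state at $F_h$ and hence a $d$-free state at $F_f$. For the forward direction, suppose some execution $I_f \to F_f$ from the empty cache ends with $d$ absent; by the dichotomy its state at $F_h$ is the ``good'' or the ``bad'' one, and the ``bad'' case is excluded since from it the final cache would still contain $d$; so the underlying path corresponds to a valid register-machine execution $I_r \to F_r$. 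This is exactly the asserted equivalence.

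The hard part here is not conceptual but a matter of bookkeeping: one must check that the hypotheses of the two third-epilogue lemmas cover precisely the two states that $F_h$ can exhibit, with no third possibility leaking through --- in particular that cutting an execution at $F_h$ really does yield only those two state shapes. That exhaustive case analysis was already carried out when establishing the well-formed / well-phased-but-not-well-formed corollaries for the exist-hit reduction, so the remaining work is a routine chaining of those results with the two new lemmas, and no genuinely difficult step remains.
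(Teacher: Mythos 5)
Your proposal is correct and follows exactly the chain of reasoning the paper intends (the paper states this corollary without an explicit proof, leaving it as the composition of the exist-hit dichotomy at $F_h$ with the two lemmas about the third epilogue part): the only two states reachable at $F_h$ are precisely the hypotheses of those two lemmas, and the sequence $d\, g_0\, c_1 \dots c_{\nRegs}\, a_{1,\true}$ flips the presence of $d$ in each case. No gap.
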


\begin{theorem}
  The exist-miss problem for NMRU caches is NP-complete for acyclic graphs and PSPACE-complete for general graphs.
\end{theorem}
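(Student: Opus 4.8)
The plan is to assemble the ingredients already set up for NMRU. For membership I would argue exactly as in the FIFO and PLRU sections: given an acyclic control-flow graph, a witnessing path from the start to the final node can be guessed nondeterministically and the NMRU update simulated along it in polynomial time, so exist-miss is in NP for acyclic graphs; for general graphs the NMRU state has polynomial size, so a nondeterministic polynomial-space Turing machine simulates the policy transition by transition, and Savitch's theorem places the problem in PSPACE. Both are the ``same simulation arguments'' used throughout the paper, specialized to the NMRU transition function of \autoref{sec:NMRU}.

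For hardness I would use the reduction of Definition~\ref{def:NMRU_reduction_EM}, which differs from the exist-hit reduction only by inserting the third epilogue $d\,g_0\,c_1\cdots c_\nRegs\,a_{1,\true}$ between $F_h$ and $F_f$. The lemmas and corollaries of \autoref{sec:NMRU} already establish a dichotomy for every path from $I_f$ to $F_h$: if the path encodes a \emph{valid} run of the Boolean register machine from $I_r$ to $F_r$, the cache at $F_h$ has the canonical shape $c_1^1,\dots,c_{\nRegs}^1,d^0,a_{\pi(1),\false}^1,\dots,a_{\pi(\nRegs),\false}^1\,g_0^0\,g_1^1$; otherwise (some simulated guard was violated) the cache at $F_h$ is well-phased but not well-formed, does not contain $d$, and has its $2\nRegs$ leading MRU-bits at $1$, the next at $0$, then $g_0^0 g_1^1$. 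The two lemmas immediately preceding the theorem then show that the third epilogue \emph{inverts} the presence of $d$ in precisely these two situations: from the canonical shape it reaches a state without $d$, and from the invalid shape it reaches a state containing $d$. Combining this with the corollary closing the exist-hit construction yields the corollary stated just above: there is an execution of the NMRU cache from $I_f$ to $F_f$ whose final state does \emph{not} contain $d$ if and only if the Boolean register machine has an execution from $I_r$ to $F_r$. Because the transformation only prepends and appends finitely many fixed acyclic gadgets and epilogue paths, it runs in polynomial time and preserves acyclicity; invoking NP-completeness of register-machine reachability on acyclic graphs and its PSPACE-completeness in general then gives NP-completeness and PSPACE-completeness of exist-miss for NMRU.

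The main obstacle, and the only part that needs care, is checking that the two-case dichotomy at $F_h$ is genuinely exhaustive and that the canonical configuration is exactly as claimed. Concretely, one must confirm that a violated guard pushes the state irrevocably into the well-phased-but-not-well-formed class (so that no subsequent gadget can repair it), that the first epilogue part (the assignments $v_1:=\false,\dots,v_\nRegs:=\false$) normalizes any well-formed state to the one corresponding to the all-false register state, and that the second epilogue part ($a_{1,\false}\cdots a_{\nRegs,\false}\,c_1\cdots c_\nRegs$) yields precisely the canonical cache configuration on which the third epilogue acts. All of these facts are already isolated in the lemmas of \autoref{sec:NMRU}, so the remaining work is bookkeeping rather than new ideas.
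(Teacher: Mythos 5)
Your proposal is correct and follows essentially the same route as the paper: membership by guess-and-simulate (NP) and Savitch (PSPACE), and hardness via Definition~\ref{def:NMRU_reduction_EM} together with the two epilogue lemmas that invert the presence of $d$, yielding the corollary from which the theorem follows. The only quibble is terminological: at $F_h$ the ``invalid'' state is no longer well-phased-but-not-well-formed (that describes the state at $F_a$, before the second epilogue); at $F_h$ it is simply the $d$-free state with the stated MRU-bit pattern --- but this does not affect the argument.
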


We have no results for reductions to cache analysis problems with arbitrary starting state.
The proof method that we used for FIFO and PLRU --- prepend a sufficiently long sequence of accesses that will bring the cache to a sufficiently known state --- does not seem to easily carry over to NMRU caches.
Even though it is known that $2\nWays-2$ pairwise distinct accesses are sufficient to remove all previous content from an NMRU cache~\cite[Th.~4]{DBLP:journals/rts/ReinekeGBW07}, it can be shown that there is no sequence guaranteed to yield a completely known cache state~\cite[Th.~5]{DBLP:journals/rts/ReinekeGBW07}.


\section{Conclusion and future work}
We have shown complexity-theoretical properties that separate LRU from other policies such as PLRU, NMRU, FIFO, pseudo-RR and that give a more rigorous meaning to the intuition that static analysis for LRU is ``easier'' than for other policies, due to its forgetful tendencies.
Reachability problems for PLRU, NMRU, FIFO, pseudo-RR caches in fact are of the same complexity class (PSPACE-complete) as generic reachability problems with arbitrary transitions over bits~\cite{DBLP:conf/stacs/FeigenbaumKVV98}, which again justifies why they are hard.

Our assumption that all paths through the control-flow graph are feasible is not realistic. As explained in the introduction, this assumption is made because arbitrary arithmetic and tests on lead to all cache analysis problems being equivalent to Turing's halting problem.
Another option is restricting the program to a finite number of bits and a transition relation represented by a Boolean circuit, formula, or OBDD, linking the previous values of the bits to the next values;
but as we have explained, this would not help distinguishing between policies since all problems then become PSPACE-complete~\cite{DBLP:conf/stacs/FeigenbaumKVV98}.
We thus would need some kind of weaker execution model, but it is difficult to find one that would still be realistic enough to have an interest.

One possibility, suggested by a reviewer, is to add a call stack. In Section~\ref{sec:fixed_associativity}, we used the fact that the combination of a finite control automaton and the exact state of the cache is just a bigger finite automaton, thus exist-miss and exist-hit can be decided by explicit-state model checking.
Adding a call stack, for modeling procedure calls, would turn the problem into model-checking reachability on a stack automaton, in other words a pushdown system.
Despite these systems having infinite state, reachability is decidable~\cite{DBLP:conf/concur/BouajjaniEM97,DBLP:journals/entcs/FinkelWW97}.
Investigating the complexity of the combination of the call stack with our cache models is left to future work.

Another direction is to distinguish the analysis difficulty for different policies for a fixed associativity. As explained in Section~\ref{sec:fixed_associativity}, all problems become polynomial with respect to the program size; but one could still want to distinguish asymptotic growths.
However, as we have shown, too strong results in this area would entail answers to very hard conjectures in complexity theory.

It is difficult to draw practical implications from worst-case complexity-theoretic asymptotic results. We however think that our results are yet another indication, in addition to the existence of efficient and precise analyses for LRU~\cite{Ferdinand99,Touzeau_et_al_CAV2017,Touzeau_et_al_POPL19} and their lack for other policies, that the LRU policy is to be preferred for ease of analysis and thus for hard real time critical applications where a worst case execution time bound must be established.%
\footnote{We have anecdotal evidence that certain designers of safety critical systems lock 6 ways out of 8 PLRU ways in the MPC755 processor~\cite[Table C.3]{MPC750_Manual} so that the remaining 2 ways are equivalent to a 2-way LRU cache, amenable to analysis. This illustrates the cost of using a policy that performs well ``on average'' but that is not easy to statically predict.}

\printbibliography

@string{LNCS = {Lecture Notes in Computer Science}}

@String{Springer = {Springer Verlag}}

@string{CAV = {Computer-aided verification (CAV)}}

@TechReport{MRU_patent,
  author = 	 {Malamy, Adam and 
                  Patel, Rajiv N. and
                  Hayes, Norman M.},
  title = 	 {Methods and apparatus for implementing a pseudo-LRU cache memory replacement scheme with a locking feature},
  institution =  {US Patent Office},
  year = 	 1994,
  type = 	 {US patent},
  number = 	 {5,353,425},
  month = 	 oct,
  url = {https://patents.google.com/patent/US5353425}
}

@inproceedings{Al-Zoubi:2004:PEC:986537.986601,
 author = {Al-Zoubi, Hussein and Milenkovic, Aleksandar and Milenkovic, Milena},
 title = {Performance Evaluation of Cache Replacement Policies for the SPEC CPU2000 Benchmark Suite},
 booktitle = {Proceedings of the 42Nd Annual Southeast Regional Conference},
 series = {ACM-SE 42},
 year = {2004},
 isbn = {1-58113-870-9},
 location = {Huntsville, Alabama},
 pages = {267--272},
 znumpages = {6},
 zurl = {http://doi.acm.org/10.1145/986537.986601},
 doi = {10.1145/986537.986601},
 acmid = {986601},
 publisher = {ACM},
 address = {New York, NY, USA},
 keywords = {cache memory, performance evaluation, replacement policy},
}

@InProceedings{Touzeau_et_al_CAV2017,
  author = 	 {Valentin Touzeau and Claire Ma{\"i}za and David Monniaux and Jan Reineke},
  title = 	 {Ascertaining Uncertainty for Efficient Exact Cache Analysis},
  booktitle = CAV,
  year = 	 2017,
  volume = 10427,
  pages = {22--17},
  editor = 	 {Viktor Kuncak and Rupak Majumdar},
  publisher = Springer,
  address = {Cham},
  eprint = {1709.10008},
  eprinttype = {arXiv},
  entrysubtype = "intc",
  doi = {10.1007/3-540-63141-0_10}
}

@article{Ferdinand99,
 author = {Ferdinand, Christian and Wilhelm, Reinhard},
 title = {Efficient and Precise Cache Behavior Prediction for Real-Time Systems},
 journal = {Real-Time Systems},
 issue_date = {Nov. 1999},
 volume = {17},
 number = {2--3},
 month = dec,
 year = {1999},
 issn = {0922-6443},
 pages = {131--181},
 znumpages = {51},
 doi = {10.1023/A:1008186323068},
 acmid = {338858},
 fpublisher = {Kluwer Academic Publishers},
 publisher = {Kluwer},
 address = {Norwell, MA, USA},
 keywords = {abstract interpretation, cache behavior prediction, cache memories, program analysis, real time applications, worst case execution time prediction},
}

@PhdThesis{Reineke_PhD,
  author = 	 {Jan Reineke},
  title = 	 {Caches in WCET analysis: predictability, competitiveness, sensitivity},
  school = 	 {Universit{\"a}t des Saarlandes},
  year = 	 {2008},
  url = {http://www.rw.cdl.uni-saarland.de/~reineke/publications/DissertationCachesInWCETAnalysis.pdf}
}

@Manual{MPC7450_Manual,
  title = 	 {MPC7450 RISC Microprocessor Family Reference Manual},
  organization = {NXP / Freescale Semiconductor},
  edition = 	 5,
  month = 	 jan,
  year = 	 2005,
  url = {https://www.nxp.com/docs/en/reference-manual/MPC7450UM.pdf}
}

@Manual{MPC750_Manual,
  title = 	 {MPC750 RISC Microprocessor Family Reference Manual},
  organization = {NXP / Freescale Semiconductor},
  edition = 	 1,
  month = dec,
  year = 	 2001,
  url = {https://www.nxp.com/docs/en/reference-manual/MPC750UM.pdf}
}

@InProceedings{berg:OASIcs:2006:672,
  author ={Christoph Berg},
  title ={{PLRU} Cache Domino Effects},
  booktitle ={6th International Workshop on Worst-Case Execution Time Analysis (WCET'06)},
  series ={OpenAccess Series in Informatics (OASIcs)},
  ISBN ={978-3-939897-03-3},
  ISSN ={2190-6807},
  year ={2006},
  volume ={4},
  editor ={Frank Mueller},
  publisher ={Schloss Dagstuhl--Leibniz-Zentrum fuer Informatik},
  address ={Dagstuhl, Germany},
  zURL ={http://drops.dagstuhl.de/opus/volltexte/2006/672},
  URN ={urn:nbn:de:0030-drops-6723},
  doi ={10.4230/OASIcs.WCET.2006.672},
  pages = {69--71},
  annote ={Keywords: Embedded systems, predictability, cache memory, PLRU, domino effects, timing anomalies}
}

@Manual{i486_data_sheet,
  title = 	 {i486 Microprocessor data sheet},
  organization = {Intel},
  year = 	 1989,
  url = {https://archive.org/stream/bitsavers_intel80486ataSheetApr89_12763574/i486_Microprocessor_Data_Sheet_Apr89_djvu.txt}
}

@article{DBLP:journals/pieee/HeckmannLTW03,
  author    = {Reinhold Heckmann and
               Marc Langenbach and
               Stephan Thesing and
               Reinhard Wilhelm},
  title     = {The influence of processor architecture on the design and the results
               of {WCET} tools},
  journal   = {Proceedings of the {IEEE}},
  volume    = {91},
  number    = {7},
  pages     = {1038--1054},
  year      = {2003},
  timestamp = {Fri, 17 Oct 2003 12:45:45 +0200},
  biburl    = {https://dblp.org/rec/bib/journals/pieee/HeckmannLTW03},
  bibsource = {dblp computer science bibliography, https://dblp.org},
  doi = {10.1109/JPROC.2003.814618}
}

@article{Touzeau_et_al_POPL19,
  author    = {Valentin Touzeau and
               Claire Ma{\"{\i}}za and
               David Monniaux and
               Jan Reineke},
  title     = {Fast and exact analysis for {LRU} caches},
  journal   = {Proceedings of the ACM on Programming Languages (PACMPL)},
  volume    = {3},
  number    = {{POPL}},
  pages     = {54:1--54:29},
  year      = {2019},
  zurl       = {https://doi.org/10.1145/3290367},
  doi       = {10.1145/3290367},
  timestamp = {Fri, 10 May 2019 11:01:57 +0200},
  biburl    = {https://dblp.org/rec/bib/journals/pacmpl/TouzeauMMR19},
  bibsource = {dblp computer science bibliography, https://dblp.org},
  eprint = {1811.01670},
  eprinttype = {arXiv}
}

@article{DBLP:journals/rts/ReinekeGBW07,
  author    = {Jan Reineke and
               Daniel Grund and
               Christoph Berg and
               Reinhard Wilhelm},
  title     = {Timing predictability of cache replacement policies},
  journal   = {Real-Time Systems},
  volume    = {37},
  number    = {2},
  pages     = {99--122},
  year      = {2007},
  zurl       = {https://doi.org/10.1007/s11241-007-9032-3},
  doi       = {10.1007/s11241-007-9032-3},
  url = {http://www.rw.cdl.uni-saarland.de/~grund/papers/rts07-predictability.pdf},
  timestamp = {Wed, 17 May 2017 10:56:13 +0200},
  biburl    = {https://dblp.org/rec/bib/journals/rts/ReinekeGBW07},
  bibsource = {dblp computer science bibliography, https://dblp.org}
}

@inproceedings{DBLP:conf/stacs/FeigenbaumKVV98,
  author    = {Joan Feigenbaum and
               Sampath Kannan and
               Moshe Y. Vardi and
               Mahesh Viswanathan},
  editor    = {Michel Morvan and
               Christoph Meinel and
               Daniel Krob},
  title     = {Complexity of Problems on Graphs Represented as OBDDs (Extended Abstract)},
  booktitle = {Symposium on Theoretical Aspects of Computer (STACS)},
  series    = LNCS,
  volume    = {1373},
  pages     = {216--226},
  publisher = Springer,
  address   = {Berlin, Heidelberg},
  year      = {1998},
  doi       = {10.1007/BFb0028563},
  timestamp = {Tue, 14 May 2019 10:00:48 +0200},
  biburl    = {https://dblp.org/rec/bib/conf/stacs/FeigenbaumKVV98},
  bibsource = {dblp computer science bibliography, https://dblp.org}
}

@inproceedings{DBLP:conf/concur/BouajjaniEM97,
  author    = {Ahmed Bouajjani and
               Javier Esparza and
               Oded Maler},
  editor    = {Antoni W. Mazurkiewicz and
               J{\'{o}}zef Winkowski},
  title     = {Reachability Analysis of Pushdown Automata: Application to Model-Checking},
  booktitle = {Concurrency Theory (CONCUR)},
  series    = LNCS,
  volume    = {1243},
  pages     = {135--150},
  publisher = Springer,
  address   = {Berlin, Heidelberg},
  year      = {1997},
  doi       = {10.1007/3-540-63141-0_10},
  timestamp = {Tue, 14 May 2019 10:00:43 +0200},
  biburl    = {https://dblp.org/rec/bib/conf/concur/BouajjaniEM97},
  bibsource = {dblp computer science bibliography, https://dblp.org}
}

@article{DBLP:journals/entcs/FinkelWW97,
  author    = {Alain Finkel and
               Bernard Willems and
               Pierre Wolper},
  title     = {A direct symbolic approach to model checking pushdown systems},
  journal   = {Electr. Notes Theor. Comput. Sci.},
  volume    = {9},
  pages     = {27--37},
  year      = {1997},
  zurl       = {https://doi.org/10.1016/S1571-0661(05)80426-8},
  doi       = {10.1016/S1571-0661(05)80426-8},
  timestamp = {Sun, 28 May 2017 13:22:58 +0200},
  biburl    = {https://dblp.org/rec/bib/journals/entcs/FinkelWW97},
  bibsource = {dblp computer science bibliography, https://dblp.org}
}
\end{document}